\newcommand{\calA}{\mathcal{A}}
\newcommand{\calS}{\mathcal{S}}
\newcommand{\calH}{\mathcal{H}}
\newcommand{\calF}{\mathcal{F}}
\newcommand{\calT}{\mathcal{T}}
\newcommand{\calR}{\mathcal{R}}
\newcommand{\calD}{\mathcal{D}}
\newcommand{\calL}{\mathcal{L}}
\newcommand{\calP}{\mathcal{P}}
\newcommand{\bbR}{\mathbb{R}}
\newcommand{\bbC}{\mathbb{C}}
\newcommand{\bbE}{\mathbb{E}}
\newcommand{\bbP}{\mathbb{P}}
\newcommand{\bfz}{\mathbf{0}}
\newcommand{\bfo}{\mathbf{1}}
\newcommand{\rmF}{\mathrm{F}}
\newcommand{\hn}{\widehat{\nabla}}
\newcommand{\hH}{\widehat{H}}
\newcommand{\norm}[1]{\|{#1}\|}
\newcommand{\T}{\top}
\newcommand{\Herm}{\mathrm{H}}
\newcommand{\diag}{\mathrm{diag}}
\newcommand{\re}{\mathrm{Re}}
\newcommand{\im}{\mathrm{Im}}
\newtheorem{theorem}{Theorem}[section]
\newtheorem{lemma}[theorem]{Lemma}
\newtheorem{corollary}[theorem]{Corollary}
\newtheorem{definition}[theorem]{Definition}
\newtheorem{claim}[theorem]{Claim}
\begin{document}
%\doublespacing

\title{Multichannel Sparse Blind Deconvolution \\on the Sphere}

\author{Yanjun~Li,~
        and~Yoram~Bresler,~\IEEEmembership{Fellow,~IEEE}%
\thanks{This work was supported in part by the National Science Foundation (NSF) under Grant IIS 14-47879. Some of the results in this paper were presented at NeurIPS 2018 \cite{li2018global}.}
\thanks{Y. Li and Y. Bresler are with the Coordinated Science Laboratory and the Department of Electrical and Computer Engineering, University of Illinois at Urbana-Champaign, Urbana, IL 61801, USA (e-mail: yli145@illinois.edu, ybresler@illinois.edu).}}

% The paper headers
\markboth{Last revised: \currenttime,~\today}%
{Last revised: \currenttime,~\today}

\maketitle

\begin{abstract}
Multichannel blind deconvolution is the problem of recovering an unknown signal $f$ and multiple unknown channels $x_i$ from their circular convolution $y_i=x_i \circledast f$ ($i=1,2,\dots,N$). We consider the case where the $x_i$'s are sparse, and convolution with $f$ is invertible. Our nonconvex optimization formulation solves for a filter $h$ on the unit sphere that produces sparse output $y_i\circledast h$. Under some technical assumptions, we show that all local minima of the objective function correspond to the inverse filter of $f$ up to an inherent sign and shift ambiguity, and all saddle points have strictly negative curvatures. This geometric structure allows successful recovery of $f$ and $x_i$ using a simple manifold gradient descent (MGD) algorithm. Our theoretical findings are complemented by numerical experiments, which demonstrate superior performance of the proposed approach over the previous methods. 
\end{abstract}

\begin{IEEEkeywords}
Manifold gradient descent, nonconvex optimization, Riemannian gradient, Riemannian Hessian, strict saddle points, super-resolution fluorescence microscopy
\end{IEEEkeywords}

\section{Introduction}

Blind deconvolution, which aims to recover unknown vectors $x$ and $f$ from their convolution $y = x \circledast f$, has been extensively studied, especially in the context of image deblurring \cite{kundur1996blind,cho2009fast,levin2011understanding,xu2013unnatural}. 
Recently, algorithms with theoretical guarantees have been proposed for single channel blind deconvolution \cite{ahmed2014blind,ling2015self,chi2016guaranteed,li2018rapid,lee2017blind,huang2018blind,zhang2017global}. In order for the problem to be well-posed, these previous methods assume that \emph{both} $x$ and $f$ are constrained, to either reside in a known subspace or be sparse over a known dictionary. However, these methods cannot be applied if $f$ (or $x$) is unconstrained, or does not have a subspace or sparsity structure.

In many applications in communications \cite{tong1998multichannel}, imaging \cite{she2015image}, and computer vision \cite{zhang2013multi}, convolutional measurements $y_i = x_i\circledast f$ are taken between a single signal (resp. filter) $f$ and multiple filters (resp. signals) $\{x_i\}_{i=1}^N$. We call such problems multichannel blind deconvolution (MBD).\footnote{Since convolution is a commutative operation, we use ``signal'' and ``filter'' interchangeably.} Importantly, in this multichannel setting, one can assume that only $\{x_i\}_{i=1}^N$ are structured, and $f$ is unconstrained. 
While there has been abundant work on single channel blind deconvolution (with both $f$ and $x$ constrained), research in MBD (with $f$ unconstrained) is relatively limited. Traditional MBD works assumed that the channels $x_i$'s are FIR filters \cite{tong1991new,moulines1995subspace,xu1995least} or IIR filters \cite{gurelli1995evam}, and proposed to solve MBD using subspace methods. Despite the fact that MBD with a linear (i.e., standard, non-circular) convolution model is known to have a unique solution under mild conditions \cite{harikumar1998fir}, the problem is generally ill-conditioned \cite{lee2018spectral}. Recent works improved the conditioning of such problems by introducing subspace or low-rank structures for the multiple channels \cite{lee2018spectral,lee2018fast}. 

In this paper, while retaining the unconstrained form of $f$, we consider a different structure of the multiple channels $\{x_i\}_{i=1}^N$: sparsity. The resulting problem is termed multichannel sparse blind deconvolution (MSBD). The sparsity structure arises in many real-world applications.

\textbf{Opportunistic underwater acoustics:} Underwater acoustic channels are sparse in nature \cite{berger2010sparse}. Estimating such sparse channels with an array of receivers using opportunistic sources (e.g., shipping noise) involves a blind deconvolution problem with multiple unknown sparse channels \cite{sabra2004blind,tian2017multichannel}.

\textbf{Reflection seismology:} Thanks to the layered earth structure, reflectivity in seismic signals is sparse. It is of great interest to simultaneous recover the filter (also known as the wavelet), and seismic reflectivity along the multiple propagation paths between the source and the geophones \cite{kaaresen1998multichannel}.

\textbf{Functional MRI:} Neural activity signals are composed of brief spikes and are considered sparse. However, observations via functional magnetic resonance imaging (fMRI) are distorted by convolving with the hemodynamic response function. A blind deconvolution procedure can reveal the underlying neural activity \cite{gitelman2003modeling}.

\textbf{Super-resolution fluorescence microscopy:} In super-resolution fluorescence microscopic imaging, photoswitchable probes are activated stochastically to create multiple sparse images and allow microscopy of nanoscale cellular structures \cite{rust2006sub,betzig2006imaging}. One can further improve the resolution via a computational deconvolution approach, which mitigates the effect of the point spread function (PSF) of the microscope \cite{mukamel2012statistical}. It is sometimes difficult to obtain the PSF (e.g., due to unknown aberrations), and one needs to jointly estimate the microscopic images and the PSF \cite{sarder2006deconvolution}. 

Previous approaches to MSBD have provided efficient iterative algorithms to compute maximum likelihood (ML) estimates of parametric models of the channels $\{x_i\}_{i=1}^N$ \cite{tian2017multichannel}, or maximum a posteriori (MAP) estimates in various Bayesian frameworks \cite{kaaresen1998multichannel,zhang2013multi}. However, these algorithms usually do not have theoretical guarantees or sample complexity bounds. 

Recently, guaranteed algorithms for MSBD have been developed. Wang and Chi \cite{wang2016blind} proposed a convex formulation of MSBD based on $\ell_1$ minimization, and gave guarantees for successful recovery under the condition that $f$ has one dominant entry that is significantly larger than other entries. In our previous paper \cite{li2017blind}, we solved a nonconvex formulation using projected gradient descent (truncated power iteration), and proposed an initialization algorithm to compute a sufficiently good starting point. However, in that work, theoretical guarantees were derived only for channels that are sparse with respect to a Gaussian random dictionary, but not channels that are sparse with respect to the standard basis. 

We would like to emphasize that, while earlier papers on MBD \cite{tong1991new,moulines1995subspace,xu1995least,gurelli1995evam} consider a linear convolution model, more recent guaranteed methods for MSBD \cite{wang2016blind,li2017blind} consider a circular convolution model. 
By zero padding the signal and the filter, one can rewrite a linear convolution as a circular convolution. In practice, circular convolution is often used to approximate a linear convolution when the filter has a compact support or decays fast \cite{strohmer2002four}, and the signal has finite length or satisfies a circular boundary condition \cite{cho2009fast}. The accelerated computation of circular convolution via the fast Fourier transform (FFT) is especially beneficial in 2D or 3D applications \cite{cho2009fast,sarder2006deconvolution}.  

Multichannel blind deconvolution with a circular convolution model is also related to blind gain and phase calibration \cite{li2017identifiability,balzano2007blind,bilen2014convex,ling2018self}. Suppose that a sensing system takes Fourier measurements of unknown signals and the sensors have unknown gains and phases, i.e., $\tilde{y}_i = \diag(\tilde{f})\calF x_i$, where $x_i$ are the targeted unknown sparse signals, $\calF$ is the discrete Fourier transform (DFT) matrix, and the entries of $\tilde{f}$ represent the unknown gains and phases of the sensors. The simultaneous recovery of $\tilde{f}$ and $x_i$'s is equivalent to MSBD in the frequency domain.

In this paper, we consider MSBD with circular convolution. In addition to the sparsity prior on the channels $\{x_i\}_{i=1}^N$, we impose, without loss of generality, the constraint that $f$ has unit $\ell_2$ norm, i.e., $f$ is on the unit sphere. (This eliminates the scaling ambiguity inherent in the MBD problem.) We show that our sparsity promoting objective function (maximizing the $\ell_4$ norm) has a nice geometric landscape on the the unit sphere: \textbf{(S1)} all local minima correspond to signed shifted versions of the desired solution, and \textbf{(S2)} the objective function is strongly convex in neighborhoods of the local minima, and has strictly negative curvature directions in neighborhoods of local maxima and saddle points. Similar geometric analysis has been conducted for dictionary learning \cite{sun2017complete}, phase retrieval \cite{sun2017geometric}, and single channel sparse blind deconvolution \cite{zhang2017global}. Recently, Mei et al. \cite{mei2016landscape} analyzed the geometric structure of the empirical risk of a class of machine learning problems (e.g., nonconvex binary classification, robust regression, and Gaussian mixture model). This paper is the first such analysis for MSBD. 

Properties \textbf{(S1)} and \textbf{(S2)} allow simple manifold optimization algorithms to find the ground truth in the nonconvex formulation. Unlike the second order methods in previous works \cite{sun2017complete2,sun2017geometric}, we take advantage of recent advances in the understanding of first order methods \cite{lee2017first,jin2019stochastic}, and prove that: (1) Manifold gradient descent with random initialization can escape saddle point almost surely; (2) Perturbed manifold gradient descent (PMGD) recovers a signed shifted version of the ground truth in polynomial time with high probability. This is the first guaranteed algorithm for MSBD that does \emph{not} rely on restrictive assumptions on $f$ (e.g., dominant entry \cite{wang2016blind}, spectral flatness \cite{li2017blind}), or on $\{x_i\}_{i=1}^N$ (e.g., jointly sparse, Gaussian random dictionary \cite{li2017blind}). 

Recently, many optimization methods have been shown to escape saddle points of objective functions with benign landscapes, e.g., gradient descent \cite{lee2016gradient,panageas2016gradient}, stochastic gradient descent \cite{ge2015escaping}, perturbed gradient descent \cite{jin2017escape,jin2019stochastic}, Natasha \cite{allen2017natasha,allen2018natasha}, and FastCubic \cite{agarwal2017finding}. Similarly, optimization methods over Riemannian manifolds that can escape saddle points include manifold gradient descent \cite{lee2017first}, the trust region method \cite{sun2017complete2,sun2017geometric}, and the negative curvature method \cite{goldfarb2017using}. Our main result shows that these algorithms can be applied to MSBD thanks to the favorable geometric properties of our objective function. This paper focuses on perturbed manifold gradient descent due to its simplicity, low computational complexity and memory footprint, and good empirical performance.

The advantage of the route we take in analyzing MSBD is that, by separating the global geometry of the objective function and the gradient descent algorithm, the geometric properties may be applied to analyzing other optimization algorithms. On the flip side, the readily extensible analysis yields a suboptimal bound on the number of iterations of perturbed manifold gradient descent. Very recently, Chen et al. \cite{chen2018gradient} shows that, for a phase retrieval problem of recovering $x\in \bbR^n$, vanilla gradient descent with a random initialization can achieve high accuracy with $O(\log n)$ iterations. Such low computational complexity is achieved via a careful problem-specific analysis. We believe MSBD may also benefit from a more problem-specific analysis along the lines of \cite{chen2018gradient}, and one may obtain a tighter computational complexity bound.

Our objective function -- $\ell_4$ norm maximization on the sphere -- is very similar to the kurtosis maximization approach by Shalvi and Weinstein \cite{shalvi1990new}. Their formulation assumes that the input signal follows a random distribution with nonzero kurtosis, and they show that the desired solutions are optimizers of their objective function. They proposed to optimize their blind deconvolution criteria using projected gradient descent. In contrast, our formulation is motivated by the assumption that the signals or channels are sparse. Our sample complexity bound guarantees that all local minima of our objective function correspond to desired solutions; and our perturbed manifold gradient descent algorithm is guaranteed to escape saddle points in polynomial time. Key aspects of the two works are compared in Table \ref{tab:swa}.
\begin{table}%
\caption{Comparison of this paper with Shalvi-Weinstein algorithm \cite{shalvi1990new}}
\label{tab:swa}
\centering
\begin{tabular}{|c||c|c|}
\hline
& Shalvi-Weinstein \cite{shalvi1990new} & This paper \\
\hline
\hline
Assumption & Nonzero kurtosis & Sparsity \\
\hline
Optimality & \{desired solutions\} $\subset$ \{local minima\} & \{desired solutions\} = \{local minima\} \\
\hline
Algorithm & Projected gradient descent & Perturbed manifold gradient descent \\
\hline
Guarantees & No global guarantees & \begin{tabular}{@{}c@{}}Global guarantees with \\polynomial sample complexity \\and computational complexity\end{tabular} \\
\hline
\end{tabular}
\end{table}

The rest of this paper is organized as follows. Section \ref{sec:problem} introduces the problem setting and our smooth nonconvex formulation. Section \ref{sec:geometry} shows global geometric properties of the objective function, and derives a sample complexity bound. Section \ref{sec:optimization} gives theoretical guarantees for (perturbed) manifold gradient descent. Section \ref{sec:extension} introduces several extensions of our formulation and analysis. We conduct numerical experiments to examine the empirical performance of our method in Section \ref{sec:experiment}, and conclude the paper in Section \ref{sec:conclusion} with suggestions for future work.

%We use $x_i \in \bbR^n$ to denote the $i$-th vector in $\{x_i\}_{i=1}^N$. 
We use $[n]$ as a shorthand for the index set $\{1, 2, \dots, n\}$. We use $x_{(j)}$ to denote the $j$-th entry of $x\in\bbR^n$, and $H_{(jk)}$ to denote the entry of $H\in\bbR^{n\times n}$ in the $j$-th row and $k$-th column. The superscript in $h^{(t)}$ denotes iteration number in an iterative algorithm.
Throughout the paper, if an index $j \notin [n]$, then the actual index is computed as modulo of $n$. The circulant matrix whose first column is $x$ is denoted by $C_x$. We use $\delta_{jk}$ to denote the Kronecker delta ($\delta_{jk}=0$ if $j\neq k$ and $\delta_{jk}=1$ if $j = k$). The entrywise product between vectors $x$ and $y$ is denoted by $x\odot y$, and the entrywise $k$-th power of a vector $x$ is denoted by $x^{\odot k}$. We use $\norm{\cdot}$ to denote the $\ell_2$ norm (for a vector), or the spectral norm (for a matrix). We use $\re(\cdot)$ and $\im(\cdot)$ to denote the real and imaginary parts of a complex vector or matrix.

%%%%%%%%%%%%%%%%%%% Problem %%%%%%%%%%%%%%%%%%%%%

\section{MSBD on the Sphere} \label{sec:problem}

\subsection{Problem Statement}
In MSBD, the measurements $y_1, y_2, \dots, y_N \in \bbR^n$ are the circular convolutions of unknown sparse vectors $x_1, x_2, \dots, x_N \in \bbR^n$ and an unknown vector $f\in \bbR^n$, i.e., $y_i = x_i \circledast f$. In this paper, we solve for $\{x_i\}_{i=1}^n$ and $f$ from $\{y_i\}_{i=1}^N$. One can rewrite the measurement as $Y=C_f X$, where $Y=[y_1, y_2, \dots, y_N]$ and $X=[x_1, x_2, \dots, x_N]$ are $n\times N$ matrices. Without structures, one can solve the problem by choosing any invertible circulant matrix $C_f$ and compute $X=C_f^{-1}Y$. The fact that $X$ is sparse narrows down the search space. 

Even with sparsity, the problem suffers from inherent scale and shift ambiguities. Suppose $\calS_j: \bbR^n \rightarrow \bbR^n$ denotes a circular shift by $j$ positions, i.e., $\calS_j(x)_{(k)} = x_{(k-j)}$ for $j, k \in [n]$. Note that we have $y_i = x_i \circledast f = (\alpha \calS_j(x_i)) \circledast (\alpha^{-1} \calS_{-j}(f))$ for every nonzero $\alpha\in\bbR$ and $j \in [n]$. Therefore, MSBD has equivalent solutions generated by scaling and circularly shifting $\{x_i\}_{i=1}^n$ and $f$. 

Throughout this paper, we assume that the circular convolution with the signal $f$ is invertible, i.e., there exists a filter $g$ such that $f \circledast g = e_1$ (the first standard basis vector). Equivalently, $C_f$ is an invertible matrix, and the DFT of $f$ is nonzero everywhere. Since $y_i \circledast g = x_i \circledast f \circledast g = x_i$, one can find $g$ by solving the following optimization problem:
\begin{align*}
\text{(P0)} \quad &\min_{h\in\bbR^n} ~~ \frac{1}{N} \sum_{i=1}^N \norm{C_{y_i} h}_0, \quad \text{s.t.} ~~ h \neq 0.
\end{align*}
The constraint eliminates the trivial solution that is $0$. If the solution to MSBD is unique up to the aforementioned ambiguities, then the only minimizers of (P0) are $h = \alpha \calS_{j} g$ ($\alpha\neq 0$, $j\in[n]$).

\subsection{Smooth Formulation}

\begin{figure}[htbp]%
\centering
\vspace{-0.5in}
\includegraphics[width=0.23\columnwidth]{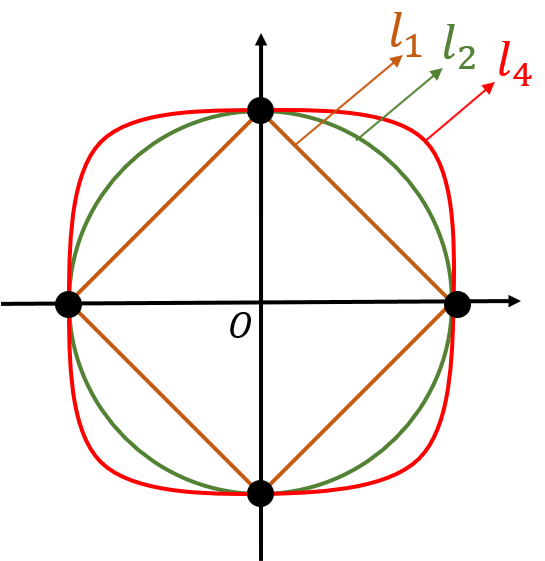}%
\caption{Unit $\ell_1$, $\ell_2$, and $\ell_4$ spheres in 2-D.}%
\label{fig:norm}
\end{figure}

Minimizing the non-smooth $\ell_0$ ``norm'' is usually challenging. Instead, one can choose a smooth surrogate function for sparsity. It is well-known that minimizing the $\ell_1$ norm can lead to sparse solutions \cite{donoho2003optimally}. An intuitive explanation is that the sparse points on the unit $\ell_2$ sphere (which we call unit sphere from now on) have the smallest $\ell_1$ norm. As demonstrated in Figure \ref{fig:norm}, these sparse points also have the largest $\ell_4$ norm. Therefore, maximizing the $\ell_4$ norm, a surrogate for the ``spikiness'' \cite{zhang2017structured} of a vector, is akin to minimizing its sparsity. In this paper, we choose the differentiable surrogate $\norm{\cdot}_4^4$ for MSBD. Recently, Bai et al. \cite{bai2018subgradient} studied dictionary learning using an $\ell_1$ norm surrogate. Applying the same technique to MSBD will be an interesting future work.

Here, we make two observations: 1) one can eliminate scaling ambiguity by restricting $h$ to the unit sphere $S^{n-1}$; 2) sparse recovery can be achieved by maximizing the ``spikiness'' $\norm{\cdot}_4^4$ \cite{zhang2017structured}. Based on these observations, we adopt the following optimization problem:
\begin{align*}
\text{(P1)} \quad &\min_{h\in\bbR^n} ~~ -\frac{1}{4N} \sum_{i=1}^N \norm{C_{y_i} R h}_4^4, \quad \text{s.t.} ~~ \norm{h} = 1.
\end{align*}
The matrix $R \coloneqq (\frac{1}{\theta n N}\sum_{i=1}^N C_{y_i}^\T C_{y_i})^{-1/2}\in\bbR^{n\times n}$ is a preconditioner, where $\theta$ is a parameter that is proportional to the sparsity level of $\{x_i\}_{i=1}^N$. In Section \ref{sec:geometry}, under specific probabilistic assumptions on $\{x_i\}_{i=1}^N$, we explain how the preconditioner $R$ works. 

Problem (P1) can be solved using first-order or second-order optimization methods over Riemannian manifolds. The main result of this paper provides a geometric view of the objective function over the sphere $S^{n-1}$ (see Figure \ref{fig:geometric}). We show that some off-the-shelf optimization methods can be used to obtain a solution $\hat{h}$ close to a scaled and circularly shifted version of the ground truth. Specifically, $\hat{h}$ satisfies $C_f R \hat{h}\approx \pm e_j$ for some $j\in[n]$, i.e., $R\hat{h}$ is approximately a signed and shifted version of the inverse of $f$. Given solution $\hat{h}$ to (P1), one can recover $f$ and $x_i$ ($i=1,2,\dots, N$) as follows:\footnote{An alternative way to recover a sparse vector $x_i$ given the recovered $\hat{f}$ and the measurement $y_i$, is to solve the non-blind deconvolution problem. For example, one can solve the sparse recovery problem $\min_x \frac{1}{2}\norm{C_{\hat{f}} x - y_i}^2 + \lambda \norm{x}_1$ using FISTA \cite{beck2009fast}. We omit the analysis of such a solution in this paper, and focus on the simple reconstruction $\hat{x}_i = C_{y_i} R \hat{h}$.}
\begin{align}
& \hat{f} = \calF^{-1}\bigl[\calF(R\hat{h})^{\odot -1}\bigr],  \label{eq:fhat}\\
& \hat{x}_i = C_{y_i} R \hat{h}.  \label{eq:xhat}
\end{align}

%%%%%%%%%%%%%%%%%%% Geometric %%%%%%%%%%%%%%%%%%%%%

\section{Global Geometric View} \label{sec:geometry}

\subsection{Main Result}
In this paper, we assume that $\{x_i\}_{i=1}^N$ are random sparse vectors, and $f$ is invertible:
\begin{itemize}[leftmargin=10mm]
	\itemsep0em
	\item[(A1)] The channels $\{x_i\}_{i=1}^N$ follow a Bernoulli-Rademacher model. More precisely, $x_{i(j)}=A_{ij}B_{ij}$, where $\{A_{ij}, B_{ij}\}_{i\in[N], j\in[n]}$ are independent random variables, $B_{ij}$'s follow a Bernoulli distribution $\mathrm{Ber}(\theta)$, and $A_{ij}$'s follow a Rademacher distribution (taking values $1$ and $-1$, each with probability $1/2$). 
	\item[(A2)] The circular convolution with the signal $f$ is invertible. We use $\kappa$ to denote the condition number of $f$, which is defined as $\kappa \coloneqq \frac{\max_j |(\calF f)_{(j)}|}{\min_k |(\calF f)_{(k)}|}$, i.e., the ratio of the largest and smallest magnitudes of the DFT. This is also the condition number of the circulant matrix $C_f$, i.e. $\kappa = \frac{\sigma_1(C_f)}{\sigma_n(C_f)}$.
\end{itemize}
The Bernoulli-Rademacher model is a special case of the Bernoulli-subgaussian models. We conjecture that the derivation in this paper can be repeated for other subgaussian nonzero entries, with different tail bounds. We use the Rademacher distribution for simplicity.

% ========================= Overview ========================= 

Let $\phi(x) = -\frac{1}{4}\norm{x}_4^4$. Its gradient and Hessian are defined by $\nabla_\phi(x)_{(j)} = -x_j^3$, and $H_\phi(x)_{(jk)} = -3 x_j^2 \delta_{jk}$. Then the objective function in (P1) is
\[
L(h) = \frac{1}{N}\sum_{i=1}^N \phi(C_{y_i}Rh),
\]
where $R = (\frac{1}{\theta n N}\sum_{i=1}^N C_{y_i}^\T C_{y_i})^{-1/2}$. The gradient and Hessian are
\begin{align*}
& \nabla_L(h) = \frac{1}{N}\sum_{i=1}^N R^\T C_{y_i}^\T \nabla_\phi(C_{y_i}Rh), \\
& H_L(h) = \frac{1}{N}\sum_{i=1}^N R^\T C_{y_i}^\T H_\phi(C_{y_i}Rh) C_{y_i}R.
\end{align*}
Since $L(h)$ is to be minimized over $S^{n-1}$, we use optimization methods over Riemannian manifolds \cite{absil2009optimization}. To this end, we define the tangent space at $h\in S^{n-1}$ as $\{z\in\bbR^n : z\perp h\}$ (see Figure \ref{fig:tangent}). 
We study the Riemannian gradient and Riemannian Hessian of $L(h)$ (gradient and Hessian along the tangent space at $h\in S^{n-1}$):
\begin{align}
\begin{aligned}
& \hn_L(h) = P_{h^\perp} \nabla_L(h), \\
& \hH_L(h) = P_{h^\perp} H_L(h) P_{h^\perp} - \langle \nabla_L(h), h \rangle P_{h^\perp},
\end{aligned}
\label{eq:Riemannian_gradient_Hessian}
\end{align}
where $P_{h^\perp} = I - hh^\T$ is the projection onto the tangent space at $h$. We refer the readers to \cite{absil2009optimization} for a more comprehensive discussion of these concepts.
\begin{figure}[htbp]%
\centering
\includegraphics[width=0.2\columnwidth]{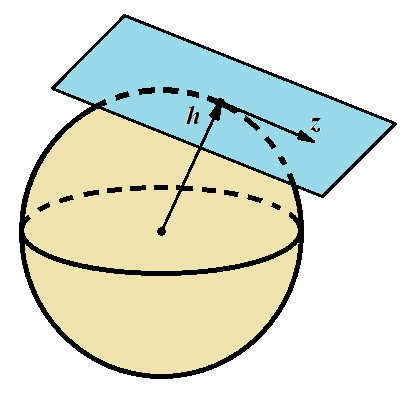}%
\caption{A demonstration of the tangent space of $S^{n-1}$ at $h$, the origin of which is translated to $h$. The Riemannian gradient and Riemannian Hessian are defined on tangent spaces.}%
\label{fig:tangent}%
\end{figure}

\begin{figure}[htbp]%
\centering
\subfigure[]{
	\includegraphics[width=0.3\columnwidth]{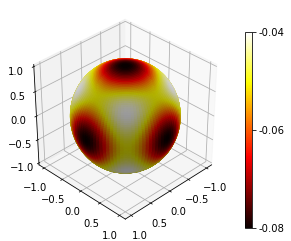}%
	\label{fig:V}
}
\subfigure[]{
	\includegraphics[width=0.3\columnwidth]{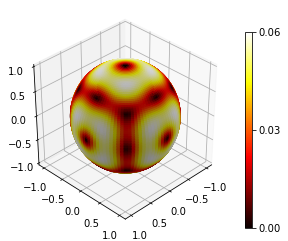}%
	\label{fig:G}
}
\subfigure[]{
	\includegraphics[width=0.3\columnwidth]{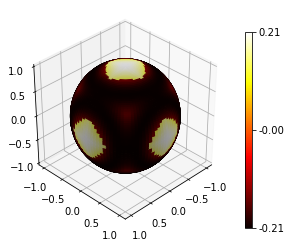}%
	\label{fig:H}
}
\caption{Geometric structure of the objective function over the sphere. For $n=3$, we plot the following quantities on the sphere $S^{2}$: (a) $\bbE L''(h)$, (b) $\norm{\bbE \hn_{L''}(h)}$, and (c) $\min_{z\perp h, \norm{z}=1} z^\T\bbE \hH_{L''}(h) z$.}%
\label{fig:geometric}%
\end{figure}

The toy example in Figure \ref{fig:geometric} demonstrates the geometric structure of the objective function on $S^{n-1}$. (As shown later, the quantity $\bbE L''(h)$ is, up to an unimportant rotation of the coordinate system, a good approximation to $L(h)$.) The local minima correspond to signed shifted versions of the ground truth (Figure \ref{fig:V}). The Riemannian gradient is zero at stationary points, including local minima, saddle points, and local maxima of the objective function when restricted to the sphere $S^{n-1}$. (Figure \ref{fig:G}). The Riemannian Hessian is positive definite in the neighborhoods of local minima, and has at least one strictly negative eigenvalue in the neighborhoods of local maxima and saddle points (Figure \ref{fig:H}). We say that a stationary point is a ``strict saddle point'' if the Riemannian Hessian has at least one strictly negative eigenvalue. The Riemannian Hessian is  negative definite in the neighborhood of a local maximum. Hence, local maxima are strict saddle points. Our main result Theorem \ref{thm:main} formalizes the observation that $L(h)$ only has two types of stationary points: 1) local minima, which are close to signed shifted versions of the ground truth, and 2) strict saddle points.

\begin{theorem} \label{thm:main}
Suppose Assumptions (A1) and (A2) are satisfied, and the Bernoulli probability satisfies $\frac{1}{n} \leq \theta < \frac{1}{3}$. Let $\kappa$ be the condition number of $f$, let $\rho < 10^{-3}$ be a small tolerance constant. There exist constants $c_1, c_1' > 0$ (depending only on $\theta$), such that: if $N > \max\{\frac{c_1 n^9}{\rho^4} \log \frac{n}{\rho}, \frac{c_1 \kappa^8 n^8}{\rho^4} \log n\}$, then with probability at least $1 - n ^{-c_1'}$, every local minimum $h^*$ in (P1) is close to a signed shifted version of the ground truth. I.e., for some $j\in[n]$:
\[
\norm{C_f R h^* \pm e_{j}} \leq 2\sqrt{\rho}.
\]
Moreover, one can partition $S^{n-1}$ into three sets $\calH_1$, $\calH_2$, and $\calH_3$ that satisfy (for some $c(n,\theta,\rho)>0$):
\begin{enumerate}%[leftmargin=5mm]
	\itemsep0em
	\item[$\circ$] $L(h)$ is strongly convex in $\calH_1$, i.e., the Riemannian Hessian is positive definite:
	\[
	\min_{\substack{z: \norm{z}=1\\z\perp h}} z^\T \hH_{L}(h) z \geq c(n,\theta,\rho) > 0.
	\]

	\item[$\circ$] $L(h)$ has negative curvature in $\calH_2$, i.e., the Riemannian Hessian has a strictly negative eigenvalue:
	\[
	\min_{\substack{z: \norm{z}=1\\z\perp h}} z^\T \hH_{L}(h) z \leq -c(n,\theta,\rho) < 0.
	\]
	
	\item[$\circ$] $L(h)$ has a descent direction in $\calH_3$, i.e., the Riemannian gradient is nonzero: 
	\[
	\norm{\hn_{L}(h)} \geq c(n,\theta,\rho) > 0.
	\]
\end{enumerate}
Clearly, all the stationary points of $L(h)$ on $S^{n-1}$ belong to $\calH_1$ or $\calH_2$. The stationary points in $\calH_1$ are local minima, and the stationary points in $\calH_2$ are strict saddle points.
The sets $\calH_1$, $\calH_2$, $\calH_3$ are defined in \eqref{eq:rotation_set}, and the positive number $c(n,\theta, \rho)$ is defined in \eqref{eq:gap_constant}.
\end{theorem}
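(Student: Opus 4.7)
The natural first move is a change of variables motivated by the preconditioner. Since $C_{y_i} = C_f C_{x_i}$, the substitution $q = C_f R h$ turns the objective into $L(h) = -\tfrac{1}{4N}\sum_{i=1}^N \norm{C_{x_i} q}_4^4$. Under (A1) one has $\bbE[C_{x_i}^\T C_{x_i}] = \theta n I$, so in expectation $R^{-2} = C_f^\T C_f$ and the map $C_f R$ coincides with the orthogonal polar factor $C_f(C_f^\T C_f)^{-1/2}$. Hence $h \mapsto q$ approximately preserves the unit sphere, and I would define $\calH_1, \calH_2, \calH_3$ as preimages under $C_f R$ of regions defined in the $q$-coordinates, which is exactly what \eqref{eq:rotation_set} should encode.

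Next I would carry out the population-level landscape analysis. A direct fourth-moment calculation for Bernoulli-Rademacher entries gives
\[
\bbE\norm{C_{x} q}_4^4 \;=\; n(\theta - 3\theta^2)\norm{q}_4^4 \;+\; 3n\theta^2 \norm{q}^4,
\]
so on the sphere the population objective reduces, up to an additive constant, to $-\tfrac{n(\theta-3\theta^2)}{4}\norm{q}_4^4$, with $\theta<\tfrac{1}{3}$ ensuring a positive leading coefficient. This is the canonical $\ell_4$-maximization on $S^{n-1}$, whose Riemannian gradient and Hessian (computed from \eqref{eq:Riemannian_gradient_Hessian}) partition the sphere into three regions: $\calH_1$, where one coordinate of $q$ dominates, so the Riemannian Hessian is positive definite with minimum eigenvalue of order $\norm{q}_\infty^2$; $\calH_2$, where two entries of $q$ are comparable, so the Hessian has a strictly negative eigenvalue witnessed by the direction $e_j - e_k$; and $\calH_3$, the transition region, where the Riemannian gradient has a quantitative lower bound since neither Hessian phenomenon kicks in. Packaging the three margin constants yields $c(n,\theta,\rho)$ of \eqref{eq:gap_constant}, and classifies every stationary point of $\bbE L$ as either a local minimum $O(\sqrt{\rho})$-close to some $\pm e_j$, or a strict saddle.

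The remaining block is finite-sample concentration, which I would split in two. First, matrix Bernstein applied to $\tfrac{1}{\theta n N}\sum_i C_{y_i}^\T C_{y_i}$ shows that $R$ is close to $(C_f^\T C_f)^{-1/2}$ in operator norm; the factor of $\kappa$ introduced by inverting and taking the square root is what drives the $N \gtrsim \kappa^8 n^8/\rho^4$ requirement. Second, an $\varepsilon$-net argument on the sphere (with covering exponent of order $n$) combined with Bernstein-type tail bounds on the random sums $\tfrac{1}{N}\sum_i R^\T C_{y_i}^\T \nabla_\phi(C_{y_i}Rh)$ and $\tfrac{1}{N}\sum_i R^\T C_{y_i}^\T H_\phi(C_{y_i}Rh) C_{y_i}R$ yields uniform control $\sup_{h\in S^{n-1}} \norm{\hn_L(h)-\bbE\hn_L(h)}\le \rho$ together with its Hessian analog; the Lipschitz constants of $\nabla_\phi$ and $H_\phi$ on the sphere then drive the second term $N \gtrsim n^9/\rho^4$. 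Combined with the populational trichotomy, this forces every empirical stationary point into $\calH_1 \cup \calH_2$, and quadratic growth inside $\calH_1$ pins each empirical local minimum $h^*$ to within $2\sqrt{\rho}$ of a signed shifted inverse, i.e., $\norm{C_f R h^* \pm e_j} \leq 2\sqrt{\rho}$.

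The step I expect to cost the most effort is the uniform Hessian concentration. Its summands are fourth-order functions of Rademacher vectors (so only sub-exponential, not sub-Gaussian, tails are available), yet the gap $c(n,\theta,\rho)$ that must be preserved is only polynomially small. Obtaining the stated sample complexity requires careful Orlicz-norm estimates for $\nabla_\phi$ and $H_\phi$ after decoupling their dependence on $R$, a net-and-Lipschitz argument tight enough not to lose polynomial factors, and consistent bookkeeping of the $\kappa$-dependence from the preconditioner; mishandling any of these would inflate the sample complexity by an extra factor of $n$ or $\kappa$ and blur the trichotomy among $\calH_1, \calH_2, \calH_3$.
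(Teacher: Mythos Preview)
Your proposal is correct and follows essentially the same route as the paper: population analysis of $-\norm{q}_4^4$ on the sphere (yielding the trichotomy with the same constant $c(n,\theta,\rho)$), matrix Bernstein for the preconditioner (the $\kappa^8 n^8/\rho^4$ term), and an $\varepsilon$-net plus Bernstein argument for uniform gradient and Hessian concentration (the $n^9/\rho^4$ term). The only notable difference is organizational---the paper introduces an intermediate objective $L'$ using the \emph{exact} orthogonal factor $C_f(C_f^\T C_f)^{-1/2}$ rather than $C_fR$, so that the rotated problem $L''$ lives on an exact sphere and the preconditioner error is isolated as a separate Lipschitz perturbation $\norm{L-L'}$---and the concentration step is simpler than you anticipate: the Rademacher assumption gives the deterministic bound $|x_{i(j)}|\le 1$, so crude norm bounds $\norm{\nabla''_i}\le n^3$, $\norm{H''_i}\le 3n^3$ feed directly into matrix Bernstein with no Orlicz-norm or decoupling machinery required.
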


The sample complexity bound on the number $N$ of channels in Theorem \ref{thm:main} is pessimistic. The numerical experiments in Section \ref{sec:synthetic} show that successful recovery can be achieved with a much smaller $N$ (roughly linear in $n$). One reason for this disparity is that Theorem \ref{thm:main} is a worst-case theoretical guarantee that holds uniformly for all $f$. For a specific $f$, the sampling requirement can be lower. Another reason is that Theorem \ref{thm:main} guarantees favorable geometric properties of the objective function, which is sufficient but not necessary for successful recovery. Moreover, our proof uses several uniform upper bounds on $\{x_i\}_{i=1}^N$, which are loose in an average sense. Interested readers may try to tighten those bounds and obtain a less demanding sample complexity.

We only consider the noiseless case in Theorem \ref{thm:main}. One can extend our analysis to noisy measurements by bounding the perturbation of the objective function caused by noise. By Lipschitz continuity, small noise will incur a small perturbation on the Riemannian gradient and the Riemannian Hessian. As a result, favorable geometric properties will still hold under low noise levels. We omit detailed derivations here. In Section \ref{sec:experiment}, we verify by numerical experiments that the formulation in this paper is robust against noise.

%%%%%%%%%%%%%%%%%%%%%%%%%%%%%%%%%%%%%%

\subsection{Proof of the Main Result} 

Since $\bbE \frac{1}{\theta n}  C_{y_i}^\T C_{y_i} = C_f^\T C_f$, by the law of large numbers, $R = (\frac{1}{\theta n N}\sum_{i=1}^N C_{y_i}^\T C_{y_i})^{-1/2}$ asymptotically converges to $(C_f^\T C_f)^{-1/2}$ as $N$ increases (see the proof of Lemma \ref{lem:not_flat}). Therefore, $L(h)$ can be approximated by
\begin{align*}
L'(h) = \frac{1}{N}\sum_{i=1}^N \phi(C_{y_i}(C_f^\T C_f)^{-1/2}h)=\frac{1}{N}\sum_{i=1}^N \phi(C_{x_i}C_f(C_f^\T C_f)^{-1/2}h).
\end{align*}
Since $C_f(C_f^\T C_f)^{-1/2}$ is an orthogonal matrix, one can study the following objective function by rotating on the sphere $h' = C_f(C_f^\T C_f)^{-1/2} h$:
\begin{align*}
L''(h') = \frac{1}{N}\sum_{i=1}^N \phi(C_{x_i} h').
\end{align*}
Our analysis consists of three parts: 1) geometric structure of $\bbE L''$, 2) deviation of $L''$ (or its rotated version $L'$) from its expectation $\bbE L''$, and 3) difference between $L$ and $L'$.

% ========================= EL'' ========================= 

\textbf{Geometric structure of $\bbE L''$.} By the Bernoulli-Rademacher model (A1), the Riemannian gradient for $h\in S^{n-1}$ is computed as
\begin{align}
& \bbE \hn_{L''}(h) = P_{h^\perp} \bbE \nabla_{L''}(h) = n\theta(1-3\theta)(\norm{h}_4^4 \cdot h - h^{\odot 3}).  \label{eq:gradient}
\end{align}
The Riemannian Hessian is
\begin{align}
& \bbE \hH_{L''}(h) = P_{h^\perp} \bbE H_{L''}(h) P_{h^\perp} - h^\T \bbE \nabla_{L''}(h) \cdot P_{h^\perp}   \nonumber\\
& \qquad = n\theta(1-3\theta) \bigl[\norm{h}_4^4 \cdot I + 2\norm{h}_4^4 \cdot hh^\T - 3\cdot \diag(h^{\odot 2}) \bigr].  \label{eq:Hessian}
\end{align}
Details of the derivation of \eqref{eq:gradient} and \eqref{eq:Hessian} can be found in Appendix \ref{app:g_n_H}.

At a stationary point of $\bbE L''(h)$ on $S^{n-1}$, the Riemannian gradient is zero. Since
\begin{align}
& \norm{\bbE \hn_{L''}(h)} = n\theta(1-3\theta) \sqrt{\norm{h}_6^6 - \norm{h}_4^8}  \nonumber\\
& = n\theta(1-3\theta) \sqrt{\sum_{1\leq j < k \leq n} h_{(j)}^2h_{(k)}^2(h_{(j)}^2 - h_{(k)}^2)^2}, \label{eq:gradient_norm}
\end{align}
all nonzero entries of a stationary point $h_0$ have the same absolute value. Equivalently, $h_{0(j)}=\pm 1/\sqrt{r}$ if $j\in \Omega$ and $h_{0(j)}=0$ if $j\notin \Omega$, for some $r\in[n]$ and $\Omega \subset [n]$ such that $|\Omega|=r$. Without loss of generality (as justified below), we focus on stationary points that satisfy $h_{0(j)}=1/\sqrt{r}$ if $j \in \{1,2,\dots, r\}$ and $h_{0(j)}=0$ if $j \in \{r+1,\dots, n\}$. The Riemannian Hessian at these stationary points is
\begin{align}
\bbE \hH_{L''}(h_0) = \frac{n\theta(1-3\theta)}{r}
\begin{bmatrix} 
\frac{2}{r}\bfo_{r\times r} - 2I_r & \bfz_{r\times (n-r)}\\
\bfz_{(n-r)\times r} & I_{n-r}
\end{bmatrix}.   \label{eq:Hessian_stationary}
\end{align}

When $r = 1$, $h_0=[1, 0, 0, \dots, 0]^\T$, we have $\bbE \hH_{L''}(h_0) = n\theta(1-3\theta) P_{h_0^\perp}$. This Riemannian Hessian is positive definite on the tangent space, 
\begin{align}
\min_{\substack{z: \norm{z}=1\\z\perp h_0}} z^\T \bbE \hH_{L''}(h_0) z = n\theta(1-3\theta) > 0. \label{eq:PD}
\end{align}
Therefore, stationary points with one nonzero entry are local minima.

When $r > 1$, the Riemannian Hessian has at least one strictly negative eigenvalue:
\begin{align}
\min_{\substack{z: \norm{z}=1\\z\perp h_0}} z^\T \bbE \hH_{L''}(h_0) z = -\frac{2n\theta(1-3\theta)}{r} < 0. \label{eq:NC}
\end{align}
Therefore, stationary points with more than one nonzero entry are strict saddle points, which, by definition, have at least one negative curvature direction on $S^{n-1}$.
One such negative curvature direction satisfies $z_{(1)} = (r-1)/\sqrt{r(r-1)}$, $z_{(j)} = -1/\sqrt{r(r-1)}$ for $j\in\{2,3,\dots, r\}$, and $z_{(j)} = 0$ for $j\in\{r+1,\dots, n\}$.

The Riemannian Hessian at other stationary points (different from the above stationary points by permutations and sign changes) can be computed similarly. By \eqref{eq:Hessian}, a permutation and sign changes of the entries in $h_0$ has no effect on the bounds in \eqref{eq:PD} and \eqref{eq:NC}, because the eigenvector $z$ that attains the minimum undergoes the same permutation and sign changes as $h_0$.

Next, in Lemma \ref{lem:partition}, we show that the properties of positive definiteness and negative curvature not only hold at the stationary points, but also hold in their neighborhoods defined as follows.
\begin{definition} \label{def:close}
We say that a point $h$ is in the $(\rho,r)$-neighborhood of a stationary point $h_0$ of $\bbE L''(h)$ with $r$ nonzero entries, if $\norm{h^{\odot 2} - h_0^{\odot 2}}_\infty \leq \frac{\rho}{r}$. We define three sets: 
\begin{align*}
& \calH''_1 \coloneqq \{\text{Points in the $(\rho,1)$-neighborhoods of stationary points with $1$ nonzero entry}\}, \\
& \calH''_2 \coloneqq \{\text{Points in the $(\rho,r)$-neighborhoods of stationary points with $r>1$ nonzero entries}\}, \\
& \calH''_3 \coloneqq S^{n-1} \backslash (\calH''_1\cup \calH''_2).
\end{align*}
Clearly, $\calH''_1 \cap \calH''_2 =\emptyset$ for $\rho < 1/3$, hence $\calH''_1$, $\calH''_2$, and $\calH''_3$ form a partition of $S^{n-1}$.
\end{definition}

\begin{lemma} \label{lem:partition}
Assume that positive constants $\theta < 1/3$, and $\rho < 10^{-3}$. Then
\begin{enumerate}%[leftmargin=5mm]
	\itemsep0em
	\item[$\circ$] For $h \in \calH''_1$,
\begin{align}
\min_{\substack{z: \norm{z}=1\\z\perp h}} z^\T \bbE \hH_{L''}(h) z \geq n\theta(1-3\theta)(1-24\sqrt{\rho}) > 0. \label{eq:PD2}
\end{align}
	
	\item[$\circ$] For $h \in \calH''_2$,
\begin{align}
\min_{\substack{z: \norm{z}=1\\z\perp h}} z^\T \bbE \hH_{L''}(h) z \leq -\frac{n\theta(1-3\theta)(2-24\sqrt{\rho})}{r} < 0. \label{eq:NC2}
\end{align}
	
	\item[$\circ$] For $h \in \calH''_3$,
\begin{align}
\norm{\bbE \hn_{L''}(h)} \geq \frac{\theta(1-3\theta)\rho^2}{n} > 0. \label{eq:NG2}
\end{align}
\end{enumerate}
\end{lemma}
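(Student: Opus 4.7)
The plan is to treat the three claims of Lemma \ref{lem:partition} separately. By the symmetry observation made immediately preceding Definition \ref{def:close}, a permutation and sign flip of $h_0$ only permutes and flips signs of the extremal $z$ in the quadratic form, so for Parts 1 and 2 it suffices to work with one canonical $r$-stationary point and the stated bounds extend to all others for free. Both Hessian claims will then be obtained by perturbing about $h_0$; the gradient claim of Part 3 will instead reduce, via \eqref{eq:gradient_norm}, to a purely combinatorial statement about the squared entries $a_j := h_j^2$.

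For Part 1 I set $h_0 = e_1$ and expand
\begin{equation*}
\frac{z^\T \bbE\hH_{L''}(h) z}{n\theta(1-3\theta)} = \norm{h}_4^4 - 3\sum_{j=1}^n z_j^2 h_j^2,
\end{equation*}
using $z\perp h$ to kill the $hh^\T$ piece of \eqref{eq:Hessian}. Definition \ref{def:close} together with $\norm{h}=1$ gives $h_1^2 \geq 1-\rho$ and $\sum_{j\geq 2}h_j^2 \leq \rho$, whence $\norm{h}_4^4 \geq 1-2\rho$; the orthogonality $\langle z,h\rangle=0$ combined with the small tail of $h$ forces $z_1^2 \leq \rho/(1-\rho)$, after which $\sum_j z_j^2 h_j^2 \leq z_1^2 + \rho = O(\rho)$, producing the $1 - O(\sqrt{\rho})$ lower bound (the $\rho \leq \sqrt{\rho}$ slack is absorbed into the constant $24$). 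For Part 2 I take $h_{0,j} = 1/\sqrt{r}$ for $j\leq r$ and use the explicit negative-curvature direction $z^{(0)}$ identified just before \eqref{eq:NC}, renormalized as $z = P_{h^\perp}z^{(0)}/\norm{P_{h^\perp}z^{(0)}}$ so as to sit in the tangent space at $h$. The key estimates are $\norm{h-h_0} = O(\sqrt{\rho})$ (from $|h_j - h_{0,j}| = |h_j^2 - h_{0,j}^2|/|h_j + h_{0,j}|$ together with the unit-norm constraint), $\norm{z - z^{(0)}} = O(\sqrt{\rho})$ (which follows from $|\langle h, z^{(0)}\rangle| = |\langle h - h_0, z^{(0)}\rangle| \leq \norm{h - h_0}$), and a Hessian perturbation bound on $\bbE\hH_{L''}(h) - \bbE\hH_{L''}(h_0)$ obtained by bounding its $\norm{h}_4^4 I$, $hh^\T$ and $\diag(h^{\odot 2})$ pieces in \eqref{eq:Hessian} separately (the last one controlled directly by Definition \ref{def:close}). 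Summing these against the baseline $z^{(0)\T}\bbE\hH_{L''}(h_0)z^{(0)} = -2n\theta(1-3\theta)/r$ delivers the claimed upper bound.

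Part 3 is where I anticipate the main obstacle. Setting $a_j = h_j^2$, \eqref{eq:gradient_norm} reduces the claim to $\sum_{j<k} a_j a_k (a_j - a_k)^2 \geq \rho^4/n^4$; since every summand is nonnegative it suffices to exhibit a single witness pair $(j^*, k^*)$ whose contribution already meets the bound. After sorting $a_1 \geq a_2 \geq \cdots \geq a_n$, I introduce $V_r := \max\bigl(\max_{j\leq r}|a_j - 1/r|,\; a_{r+1}\bigr)$, so that $h\in\calH''_3$ is equivalent to $V_r > \rho/r$ for every $r\in[n]$. I will pick $r^*$ to minimize $rV_r$ and then branch on whether $V_{r^*}$ is attained by a top-$r^*$ entry (\emph{inhomogeneity} of the large entries) or by $a_{r^*+1}$ (a \emph{tail leak}). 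In the tail-leak case the natural pair is $(1, r^*+1)$, using $a_1 \geq 1/n$, $a_{r^*+1} > \rho/r^*$, and a gap $a_1 - a_{r^*+1} \gtrsim 1/r^*$ that follows from $\sum a_j = 1$ and the sorted structure. In the inhomogeneity case the pair lies entirely among the first $r^*$ indices, consisting of one entry outside $[(1-\rho)/r^*, (1+\rho)/r^*]$ paired with a companion near $1/r^*$ whose existence is secured by an averaging argument on the top-$r^*$ mass. The main difficulty will be making this case analysis exhaustive and uniform in $r^* \in [n]$: the small-$r^*$ and large-$r^*$ regimes behave qualitatively differently, and the margin $V_{r^*} > \rho/r^*$ must be transferred carefully into a simultaneous lower bound on both $a_{j^*}a_{k^*}$ and $(a_{j^*} - a_{k^*})^2$ in order to hit the $\rho^4/n^4$ target with room to spare.
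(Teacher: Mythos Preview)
Your treatment of Parts 1 and 2 is correct and, for Part 1, arguably cleaner than the paper's. The paper does not compute the quadratic form directly; instead it bounds $\norm{\bbE\hH_{L''}(h)-\bbE\hH_{L''}(h_0)}$ (via the three pieces of \eqref{eq:Hessian}, exactly as you outline) and then handles the change of tangent space by constructing explicit orthonormal bases $V,V_0$ of the tangent spaces at $h,h_0$ with $\norm{V-V_0}\le\sqrt{2\rho}$, obtaining a single estimate
\[
\Bigl|\min_{z\perp h}z^\T\bbE\hH_{L''}(h)z-\min_{z\perp h_0}z^\T\bbE\hH_{L''}(h_0)z\Bigr|\le\frac{n\theta(1-3\theta)}{r}\,24\sqrt{\rho}
\]
that yields both \eqref{eq:PD2} and \eqref{eq:NC2} at once. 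Your direct computation for $r=1$ actually gives $1-O(\rho)$ rather than $1-O(\sqrt{\rho})$, which is stronger; your Part 2 argument (project a fixed negative-curvature direction, then perturb) is a valid one-sided variant of the same idea. Either route reaches the stated constants.

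For Part 3 you are working much harder than necessary. The paper proves the contrapositive, and it is essentially a three-line argument once you spot the right threshold. Assume $\norm{\bbE\hn_{L''}(h)}<\theta(1-3\theta)\rho^2/n$; then by \eqref{eq:gradient_norm} every pair satisfies $a_ja_k(a_j-a_k)^2<\rho^4/n^4$. Set $\Omega:=\{j:a_j\ge\rho/n\}$ and $r:=|\Omega|$. For $j,k\in\Omega$ the pairwise bound forces $|a_j-a_k|<\rho/n$, so every $a_k$ with $k\in\Omega$ lies within $\rho/n$ of the average $\frac{1}{r}\sum_{j\in\Omega}a_j$; combining with $1-(n-r)\rho/n<\sum_{j\in\Omega}a_j\le 1$ gives $|a_k-1/r|\le\rho/r$ for $k\in\Omega$, while $a_k<\rho/n\le\rho/r$ for $k\notin\Omega$. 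Hence $h$ sits in a $(\rho,r)$-neighborhood, i.e.\ $h\in\calH''_1\cup\calH''_2$. This replaces your entire $V_r$-minimization and tail-leak/inhomogeneity case split; the single threshold $\rho/n$ does all the work, and no separate analysis of small versus large $r^*$ is needed.
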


Lemma \ref{lem:partition}, and all other lemmas, are proved in the Appendix.

% ========================= L'' - EL'' ========================= 

\textbf{Deviation of $L''$ from $\bbE L''$}. As the number $N$ of channels increases, the objective function $L''$ asymptotically converges to its expected value $\bbE L''$. Therefore, we can establish the geometric structure of $L''$ based on its similarity to $\bbE L''$. To this end, we give the following result.

\begin{lemma} \label{lem:concentration}
Suppose that $\theta < 1/3$. There exist constants $c_2, c_2' > 0$ (depending only on $\theta$), such that: if $N > \frac{c_2 n^9}{\rho^4}\log \frac{n}{\rho}$, then with probability at least $1 - e ^{-c_2'n}$,
\begin{align*}
\sup_{h\in S^{n-1}} \norm{\hn_{L''}(h) - \bbE\hn_{L''}(h)} \leq \frac{\theta(1-3\theta)\rho^2}{4n},
\end{align*}
\begin{align*}
\sup_{h\in S^{n-1}} \norm{\hH_{L''}(h) - \bbE \hH_{L''}(h)} \leq \frac{\theta(1-3\theta)\rho^2}{n}.
\end{align*}
\end{lemma}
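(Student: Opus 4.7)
The plan is to reduce the Riemannian claim to uniform concentration of the Euclidean gradient and Hessian over $S^{n-1}$, then establish that uniform concentration by a standard pointwise-plus-net argument. Since $\|P_{h^\perp}\|\le 1$, the formulas in \eqref{eq:Riemannian_gradient_Hessian} give
\begin{align*}
\|\hn_{L''}(h)-\bbE\hn_{L''}(h)\|&\le \|\nabla_{L''}(h)-\bbE\nabla_{L''}(h)\|,\\
\|\hH_{L''}(h)-\bbE\hH_{L''}(h)\|&\le \|H_{L''}(h)-\bbE H_{L''}(h)\|+\|\nabla_{L''}(h)-\bbE\nabla_{L''}(h)\|,
\end{align*}
so it suffices to control $\sup_{h\in S^{n-1}}\|\nabla_{L''}(h)-\bbE\nabla_{L''}(h)\|$ and $\sup_{h\in S^{n-1}}\|H_{L''}(h)-\bbE H_{L''}(h)\|$ at scale $\theta(1-3\theta)\rho^2/n$ on the same high-probability event.

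\textbf{Pointwise concentration.} For fixed $h\in S^{n-1}$, $\nabla_{L''}(h)$ and $H_{L''}(h)$ are empirical means of the i.i.d.\ summands $Z_i^{\nabla}(h)=C_{x_i}^\T\nabla_\phi(C_{x_i}h)$ and $Z_i^{H}(h)=C_{x_i}^\T H_\phi(C_{x_i}h)C_{x_i}$. Because $x_i\in\{-1,0,1\}^n$ by Assumption (A1), each summand admits a deterministic bound polynomial in $n$: from $\|C_{x_i}\|\le\|x_i\|_1\le n$ and $\|C_{x_i}h\|_\infty\le n$ one reads off $\|Z_i^{\nabla}(h)\|, \|Z_i^{H}(h)\|\le n^{O(1)}$. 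I would then apply a vector Bernstein inequality (for the gradient) and a matrix Bernstein inequality (for the Hessian), using the Bernoulli-Rademacher moments to bound the variance term, which is substantially smaller than the square of the a.s.\ bound because of the sparsity $\theta$ and the mean-zero signs. This yields pointwise deviations of size $\rho^2/n$ with failure probability exponentially small in $N\rho^4/n^{O(1)}$.

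\textbf{From pointwise to uniform.} To lift the pointwise bounds to a sup over $S^{n-1}$, I would take an $\epsilon$-net $\mathcal{N}_\epsilon\subset S^{n-1}$ with $|\mathcal{N}_\epsilon|\le(3/\epsilon)^n$ and handle the off-net error via Lipschitz continuity. The Lipschitz constant of $\nabla_{L''}$ in $h$ equals $\sup_h\|H_{L''}(h)\|$, which is polynomial in $n$ by the same a.s.\ argument above, and the Lipschitz constant of $H_{L''}$ is bounded via the third derivative of $\phi$, which is linear in its argument, again giving a polynomial bound. Choosing $\epsilon$ polynomially small in $\rho/n$ makes the discretization error negligible compared with the target $\rho^2/n$, and a union bound over $\mathcal{N}_\epsilon$ contributes a factor $\exp(cn\log(n/\rho))$ in the failure probability; balancing this with the Bernstein tail gives the stated requirement $N\gtrsim n^9\log(n/\rho)/\rho^4$ and failure probability $e^{-c_2'n}$. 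The main obstacle I expect is the bookkeeping of polynomial factors — the per-term bound $M$, the two Lipschitz constants, the net log $n\log(n/\rho)$, and the target accuracy $\rho^2/n$ — so that the exponent of $n$ in the sample complexity lands at $9$ and not higher; this in turn forces using the actual Bernoulli-Rademacher variance rather than the naive uniform bound in Bernstein, and carefully tracking how the net radius interacts with the Lipschitz constants in the final union bound.
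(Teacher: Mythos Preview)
Your proposal is correct and follows essentially the same route as the paper: reduce the Riemannian quantities to the Euclidean gradient and Hessian via $\|P_{h^\perp}\|\le 1$, apply matrix Bernstein pointwise, cover $S^{n-1}$ by an $\epsilon$-net of size $(3/\epsilon)^n$, and close the gap by Lipschitz continuity of $\nabla_{L''}$ and $H_{L''}$.

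One remark on the ``main obstacle'' you flag: you do \emph{not} need a refined Bernoulli--Rademacher variance computation to hit the exponent $9$. The paper simply uses the crude a.s.\ bound $\|Z_i^{\nabla}(h)\|\le n^3$ and $\|Z_i^{H}(h)\|\le 3n^3$ (from $|x_{i(j)}|\le 1$ entrywise), which gives a variance proxy of order $Nn^6$ in Bernstein. With target accuracy $\tau\asymp\rho^2/n$ this yields a tail $\exp(-cN\rho^4/n^8)$; balancing against the net of radius $\epsilon\asymp\tau/n^3\asymp\rho^2/n^4$ (so $\log|\mathcal N_\epsilon|\asymp n\log(n/\rho)$) already produces $N\gtrsim n^9\rho^{-4}\log(n/\rho)$. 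So the bookkeeping is lighter than you anticipate, and your plan goes through without the sharper moment analysis.
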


By Lemma \ref{lem:concentration}, the deviations from the corresponding expected values of the Riemannian gradient and Hessian due to a finite number of random $x_i$'s are small compared to the bounds in Lemma \ref{lem:partition}. Therefore, the Rimannian Hessian of $L''$ is still positive definite in the neighborhood of local minima, and has at least one strictly negative eigenvalue in the neighborhood of strict saddle points; and the Riemannian gradient of $L''$ is nonzero for all other points on the sphere. Since $L'$ and $L''$ differ only by an orthogonal matrix transformation of their argument, the geometric structure of $L'$ is identical to that of $L''$ up to a rotation on the sphere.

% ========================= L - L' ========================= 

\textbf{Difference between $L$ and $L'$}. Recall that $L$ asymptotically converges to $L'$ as $N$ increases. The following result bounds the difference for a finite $N$. 

\begin{lemma} \label{lem:not_flat}
Suppose that $\frac{1}{n}\leq \theta < \frac{1}{3}$. There exist constants $c_3, c_3' > 0$ (depending only on $\theta$), such that: if $N > \frac{c_3 \kappa^8 n^8}{\rho^4}\log n$, then with probability at least $1 - n ^{-c_3'}$,
%\begin{align}
%\norm{C_f R - C_f(C_f^\T C_f)^{-1/2}} \leq \frac{\theta(1-3\theta)\rho^2}{100n^4}, \label{eq:near_orthogonal_main}
%\end{align}
\begin{align*}
\sup_{h\in S^{n-1}} \norm{\hn_{L}(h) - \hn_{L'}(h)} \leq \frac{\theta(1-3\theta)\rho^2}{4n},
\end{align*}
\begin{align*}
\sup_{h\in S^{n-1}} \norm{\hH_{L}(h) - \hH_{L'}(h)} \leq \frac{\theta(1-3\theta)\rho^2}{n}.
\end{align*}
\end{lemma}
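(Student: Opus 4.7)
The plan is to reduce the difference $L-L'$ to a perturbation of the preconditioner alone. Let $Q\coloneqq(C_f^\T C_f)^{-1/2}$, so that $L'(h)=\frac{1}{N}\sum_i\phi(C_{y_i}Qh)$, and denote $M\coloneqq R^{-2}=\frac{1}{\theta nN}\sum_i C_{y_i}^\T C_{y_i}$, $\Sigma\coloneqq Q^{-2}=C_f^\T C_f$. Because every $C_{y_i}$, $C_{x_i}$, $C_f$ is circulant, they all commute and $C_{y_i}^\T C_{y_i}=\Sigma\cdot C_{x_i}^\T C_{x_i}$; hence $M=\Sigma\cdot\widehat{\Sigma}_x$ with $\widehat{\Sigma}_x\coloneqq\frac{1}{\theta nN}\sum_i C_{x_i}^\T C_{x_i}$ and $\bbE\widehat{\Sigma}_x=I$. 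I will carry out the argument in three stages: (i) concentrate $\widehat{\Sigma}_x$ around $I$, (ii) convert the resulting bound on $M-\Sigma$ into one on $R-Q$, and (iii) propagate $\|R-Q\|$ through the gradient and Hessian uniformly over $h\in S^{n-1}$.

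For stage (i), since $\Sigma$, $\widehat{\Sigma}_x$, and every summand $C_{x_i}^\T C_{x_i}$ are simultaneously diagonalized by the DFT, $\|\widehat{\Sigma}_x-I\|$ equals $\max_k\big|\frac{1}{\theta nN}\sum_{i=1}^N|\calF(x_i)_{(k)}|^2-1\big|$. Each frequency-$k$ average is a mean of i.i.d.\ sub-exponential quadratic forms in the Bernoulli-Rademacher vector $x_i$, so a Bernstein inequality combined with the uniform high-probability bound $\|x_i\|_1^2\lesssim\theta^2n^2$ (already used elsewhere in the paper) and a union bound over the $n$ frequencies gives $\|\widehat{\Sigma}_x-I\|\leq\delta$ with probability $1-n^{-c_3'}$ once $N\gtrsim\text{poly}(n)/\delta^2$, and hence $\|M-\Sigma\|\leq\|\Sigma\|\delta$. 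For stage (ii), since $M$ and $\Sigma$ are both circulant they commute, so $A\mapsto A^{-1/2}$ acts eigenvalue-wise; combining $|a^{-1/2}-b^{-1/2}|\leq|a-b|/(2\min(a,b)^{3/2})$ with the eigenvalue lower bound $\sigma_n(\Sigma)=\sigma_n(C_f)^2$ yields $\|R-Q\|\lesssim\|M-\Sigma\|/\sigma_n(C_f)^3$, provided $\|M-\Sigma\|\leq\sigma_n(\Sigma)/2$ so that the spectra of $M$ and $\Sigma$ remain comparable. This is the step that pulls in the condition-number dependence and is, in my view, the main obstacle to obtaining a tight sample complexity: any slack here is amplified by the polynomial factors of stage (iii).

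For stage (iii), I expand the Euclidean gradient difference $\nabla_L(h)-\nabla_{L'}(h)=-\frac{1}{N}\sum_i\big[RC_{y_i}^\T u_i(R)^{\odot 3}-QC_{y_i}^\T u_i(Q)^{\odot 3}\big]$, where $u_i(A)\coloneqq C_{y_i}Ah$. Splitting this into the ``outer'' piece $(R-Q)C_{y_i}^\T u_i(R)^{\odot 3}$ and the ``inner'' piece $QC_{y_i}^\T(u_i(R)^{\odot 3}-u_i(Q)^{\odot 3})$ and using the identity $a^3-b^3=(a-b)(a^2+ab+b^2)$ together with the pointwise bound $\|u_i(A)\|_\infty\leq\|C_{y_i}\|\cdot\max(\|R\|,\|Q\|)$, each contribution becomes linear in $\|R-Q\|$ times polynomial factors in $\|C_{y_i}\|$, $\|R\|$, $\|Q\|$. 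All these factors are uniformly controlled on the event of stage (i) by $\kappa$ and the sparse-vector tail bounds already in use; since $h$ enters only as a unit vector raised to fixed polynomial powers, $\sup_{h\in S^{n-1}}$ is taken for free and no $\epsilon$-net is required. Projecting onto the tangent space is a contraction, so the Riemannian versions inherit the same bound. The Hessian estimate proceeds identically, replacing $x_j^3$ by $x_j^2$ and the cubic identity with $a^2-b^2=(a-b)(a+b)$. Matching the propagated error to the target $\theta(1-3\theta)\rho^2/n$, and accounting for the $\kappa$-factors introduced once in stage (ii) and multiple times through $\|C_{y_i}\|$, produces exactly the stated requirement $N>c_3\kappa^8n^8\rho^{-4}\log n$.
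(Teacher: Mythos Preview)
Your three-stage strategy---concentrate $\widehat{\Sigma}_x$, convert to a preconditioner perturbation, then propagate by Lipschitz arguments---is the paper's strategy too, but the packaging differs in a way that matters for the $\kappa$-count. The paper does not bound $\|R-Q\|$; it bounds $\|U-U'\|$ with $U\coloneqq C_fR$ and $U'\coloneqq C_f(C_f^\T C_f)^{-1/2}$. Because $U'$ is orthogonal and $\|U\|\leq 2$ on the good event, every factor of the form $\|C_{y_i}A\|$ in stage~(iii) collapses to $\|C_{x_i}U\|\leq 2n$ or $\|C_{x_i}U'\|\leq n$, so stage~(iii) is entirely $\kappa$-free and the full $\kappa^4$ enters once, in the general matrix-perturbation chain bounding $\|U-U'\|$. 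In your route, by contrast, you bound $\|C_{y_i}\|\leq n\sigma_1(C_f)$ and $\|R\|,\|Q\|\sim 1/\sigma_n(C_f)$ separately, so each appearance of $\|u_i(A)\|_\infty\leq\|C_{y_i}\|\max(\|R\|,\|Q\|)$ costs an extra $\kappa$, and the outer factor $(R-Q)C_{y_i}^\T$ costs more still; a careful tally of your outer and inner pieces yields a power of $\kappa$ strictly above $8$, so the assertion that you land ``exactly'' on $\kappa^8n^8$ is not substantiated as written.

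That said, your commutativity observation is sharper than you exploit it. Since all matrices are circulant, $R=Q\widehat{\Sigma}_x^{-1/2}$ and hence $U=C_fR=U'\widehat{\Sigma}_x^{-1/2}$, which gives $\|U-U'\|=\|\widehat{\Sigma}_x^{-1/2}-I\|\leq\|\widehat{\Sigma}_x-I\|$ directly with \emph{no} $\kappa$ at all---tighter than the paper's $\kappa^4$ obtained through the general (non-commuting) square-root and inverse perturbation lemmas. If you had switched to the $U,U'$ framing for stage~(iii) while keeping your commutativity shortcut for stage~(ii), you would in fact improve on the stated sample complexity rather than merely match it.
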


% ========================= Combine Three Lemmas ========================= 

We use $(C_f^\T C_f)^{1/2} C_f^{-1}\calH = \{(C_f^\T C_f)^{1/2} C_f^{-1}h: h\in\calH\}$ to denote the rotation of a set $\calH$ by the orthogonal matrix $(C_f^\T C_f)^{1/2} C_f^{-1}$. Define the rotations of $\calH''_1$, $\calH''_2$, and $\calH''_3$:
\begin{align}
\calH_1 \coloneqq (C_f^\T C_f)^{1/2} C_f^{-1}\calH''_1,\quad
\calH_2 \coloneqq (C_f^\T C_f)^{1/2} C_f^{-1}\calH''_2,\quad
\calH_3 \coloneqq (C_f^\T C_f)^{1/2} C_f^{-1}\calH''_3.
\label{eq:rotation_set}
\end{align}

Combining Lemmas \ref{lem:partition}, \ref{lem:concentration}, and \ref{lem:not_flat}, and the rotation relation between $L'$ and $L''$, we have:

\begin{enumerate}%[leftmargin=5mm]
	\itemsep0em
	\item[$\circ$] For $h \in \calH_1$, the Riemannian Hessian is positive definite:
\begin{align*}
\min_{\substack{z: \norm{z}=1\\z\perp h}} z^\T \hH_{L}(h) z \geq n\theta(1-3\theta)(1-24\sqrt{\rho} - \frac{2\rho^2}{n^2}) > 0.
\end{align*}
	
	\item[$\circ$] For $h \in \calH_2$,  the Riemannian Hessian has a strictly negative eigenvalue:
\begin{align}
\min_{\substack{z: \norm{z}=1\\z\perp h}} z^\T \hH_{L}(h) z \leq -\frac{n\theta(1-3\theta)(2-24\sqrt{\rho} - 2r\rho^2/n^2)}{r} < 0.
\label{eq:negative_curvature}
\end{align}
	
	\item[$\circ$] For $h \in \calH_3$, the Riemannian gradient is nonzero:
\begin{align*}
\norm{\hn_{L}(h)} \geq \frac{\theta(1-3\theta)\rho^2}{2n} > 0. 
\end{align*}
\end{enumerate}
Clearly, all the local minima of $L(h)$ on $S^{n-1}$ belong to $\calH_1$, and all the other stationary points are strict saddle points and belong to $\calH_2$. The bounds in Theorem \ref{thm:main} on the Riemannian Hessian and the Riemannian gradient follows by setting
\begin{align}
c(n,\theta,\rho) \coloneqq \frac{\theta(1-3\theta)\rho^2}{2n}. 
\label{eq:gap_constant}
\end{align}

We complete the proof of Theorem \ref{thm:main} by giving the following result about $\calH_1$.
\begin{lemma}\label{lem:neighborhood_local_minima}
If $h^* \in \calH_1$, then for some $j\in [n]$,
\[
\norm{C_f R h^* \pm e_j} \leq 2 \sqrt{\rho}.
\]
\end{lemma}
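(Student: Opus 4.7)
The plan is to unravel the definition of $\calH_1$, then show that any point in $\calH''_1$ is $\sqrt{2\rho}$-close (in Euclidean norm) to a signed standard basis vector, and finally translate this from the idealized rotation $(C_f^\T C_f)^{1/2}C_f^{-1}$ back to the data-dependent rotation $C_f R$ using the concentration of $R$ around $(C_f^\T C_f)^{-1/2}$ already invoked in the proof of Lemma~\ref{lem:not_flat}.

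First I would unpack the definition in \eqref{eq:rotation_set}: $h^*\in\calH_1$ means that $h'' \coloneqq C_f(C_f^\T C_f)^{-1/2}h^*$ lies in $\calH''_1$. Since $C_f(C_f^\T C_f)^{-1/2}$ is orthogonal (it is the orthogonal factor of the polar decomposition of $C_f$), $h''\in S^{n-1}$. By Definition~\ref{def:close}, there exist $j\in[n]$ and a sign $\epsilon\in\{\pm 1\}$ (namely $\epsilon=\mathrm{sgn}(h''_{(j)})$, well-defined because $(h''_{(j)})^2\geq 1-\rho>0$) such that $\|(h'')^{\odot 2}-e_j^{\odot 2}\|_\infty\leq \rho$.

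Next I would turn this entrywise-squared bound into a signed $\ell_2$ bound. Using $(h''_{(j)})^2\geq 1-\rho$, $(h''_{(k)})^2\leq \rho$ for $k\neq j$, and $\sum_k(h''_{(k)})^2=1$, a direct expansion yields
\begin{equation*}
\|h''-\epsilon e_j\|^2 \;=\; (|h''_{(j)}|-1)^2+\sum_{k\neq j}(h''_{(k)})^2 \;=\; 2-2|h''_{(j)}| \;\leq\; 2\bigl(1-\sqrt{1-\rho}\bigr)\;\leq\; 2\rho,
\end{equation*}
where the last inequality uses $\sqrt{1-\rho}\geq 1-\rho$ for $\rho\in[0,1]$. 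Hence $\|C_f(C_f^\T C_f)^{-1/2}h^*-\epsilon e_j\|\leq\sqrt{2\rho}$.

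Finally I would replace $(C_f^\T C_f)^{-1/2}$ by $R$. By the triangle inequality,
\begin{equation*}
\|C_fRh^*-\epsilon e_j\|\;\leq\;\|h''-\epsilon e_j\|+\|C_f(R-(C_f^\T C_f)^{-1/2})h^*\|\;\leq\;\sqrt{2\rho}+\sigma_1(C_f)\,\bigl\|R-(C_f^\T C_f)^{-1/2}\bigr\|.
\end{equation*}
Under the sample-complexity hypothesis of Theorem~\ref{thm:main}, the matrix-concentration argument underlying Lemma~\ref{lem:not_flat}---which shows $\tfrac{1}{\theta nN}\sum_iC_{y_i}^\T C_{y_i}$ concentrates around $C_f^\T C_f$ and applies the standard perturbation bound for the inverse square root---yields $\sigma_1(C_f)\|R-(C_f^\T C_f)^{-1/2}\|\leq(2-\sqrt{2})\sqrt{\rho}$. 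Adding gives $\|C_fRh^*\pm e_j\|\leq 2\sqrt{\rho}$, as required. The main obstacle is the last step: the bound must survive multiplication by $\sigma_1(C_f)$, which can be as large as $\kappa\,\sigma_n(C_f)$, and it is precisely the $\kappa^8n^8/\rho^4$ factor in the sample complexity (already present to control $\hn_L-\hn_{L'}$ and $\hH_L-\hH_{L'}$) that delivers the needed $\sqrt{\rho}$-level closeness.
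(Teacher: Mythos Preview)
Your proposal is correct and follows essentially the same route as the paper's proof: translate $h^*\in\calH_1$ to $h''\in\calH''_1$ via the orthogonal factor $C_f(C_f^\T C_f)^{-1/2}$, convert the $(\rho,1)$-neighborhood condition into $\|h''\pm e_j\|\leq\sqrt{2\rho}$ (the paper cites \eqref{eq:neighbor_diff} with $r=1$ for this), and then absorb the discrepancy between $C_fR$ and $C_f(C_f^\T C_f)^{-1/2}$ using the already-established bound \eqref{eq:near_orthogonal}, namely $\|C_fR-C_f(C_f^\T C_f)^{-1/2}\|\leq\theta(1-3\theta)\rho^2/(100n^4)$. The only cosmetic difference is that the paper invokes \eqref{eq:near_orthogonal} directly on $\|C_fR-C_f(C_f^\T C_f)^{-1/2}\|$, whereas you factor out $\sigma_1(C_f)$ first; since the derivation of \eqref{eq:near_orthogonal} itself begins with exactly that factoring, the two are the same argument.
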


%%%%%%%%%%%%%%%%%%% Optimization %%%%%%%%%%%%%%%%%%%%%

\section{Optimization Method} \label{sec:optimization}

\subsection{Guaranteed First-Order Optimization Algorithm}

Second-order methods over a Riemannian manifold are known to be able to escape saddle points, for example, the trust region method \cite{boumal2018global}, and the negative curvature method \cite{goldfarb2017using}. Recent works proposed to solve dictionary learning \cite{sun2017complete2}, and phase retrieval \cite{sun2017geometric} using these methods, without any special initialization schemes. Thanks to the geometric structure (Section \ref{sec:geometry}) and the Lipschitz continuity of the objective function for our multichannel blind deconvolution formulation (Section \ref{sec:problem}), these second-order methods can recover signed shifted versions of the ground truth without special initialization.

Recently, first-order methods have been shown to escape strict saddle points with random initialization \cite{lee2016gradient,panageas2016gradient}. In this paper, we use the manifold gradient descent (MGD) algorithm studied by Lee et al. \cite{lee2017first}. One can initialize the algorithm with a random $h^{(0)}$, and use the following iterative update:
\begin{align}
h^{(t+1)} = \calA(h^{(t)}) \coloneqq P_{S^{n-1}}\bigl( h^{(t)} - \gamma \hn_L(h^{(t)}) \bigr), \label{eq:mgd}
\end{align}
where $\hn_L(h)$ is the Riemannian gradient in \eqref{eq:Riemannian_gradient_Hessian}.
Each iteration takes a Riemannian gradient descent step in the tangent space, and does a retraction by normalizing the iterate (projecting onto $S^{n-1}$).

Using the geometric structure introduced in Section \ref{sec:geometry}, and some technical results in \cite{lee2017first}, the following theorem shows that, for our formulation of MSBD, MGD with a random initialization can escape saddle points almost surely.

\begin{theorem} \label{thm:mgd}
Suppose that the geometric structure in Theorem \ref{thm:main} is satisfied. If manifold gradient descent \eqref{eq:mgd} is initialized with a random $h^{(0)}$ drawn from a uniform distribution on $S^{n-1}$, and the step size is chosen as $\gamma<\frac{1}{64n^3}$, then \eqref{eq:mgd} does not converge to any saddle point of $L(h)$ on $S^{n-1}$ almost surely. 
\end{theorem}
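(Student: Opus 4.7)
The plan is to invoke the stable manifold framework of \cite{lee2017first}, which asserts that for a $C^2$ function on a smooth Riemannian manifold with $\beta$-Lipschitz Riemannian gradient, if the step size satisfies $\gamma < 1/\beta$, then the projected gradient map is a local diffeomorphism, and the set of initial points whose iterates converge to any strict saddle point has surface measure zero on the manifold. Since the uniform distribution on $S^{n-1}$ is absolutely continuous with respect to the surface measure, it suffices to verify two things: (i) every stationary point of $L$ on $S^{n-1}$ that is not a local minimum is a strict saddle, and (ii) the iteration map $\calA$ is a local diffeomorphism for $\gamma < 1/(64n^3)$.

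Item (i) is immediate from Theorem \ref{thm:main}: every stationary point lies in $\calH_1 \cup \calH_2$; those in $\calH_1$ are local minima, and \eqref{eq:negative_curvature} gives throughout $\calH_2$ a strictly negative eigenvalue of $\hH_L$, so every stationary point in $\calH_2$ is a strict saddle in the sense of \cite{lee2017first}. For item (ii), I would bound the Riemannian Hessian uniformly. Using the explicit expression $H_L(h) = \frac{1}{N}\sum_i R^\T C_{y_i}^\T H_\phi(C_{y_i}Rh) C_{y_i}R$, together with the deterministic bound $\|H_\phi(v)\| \le 3\|v\|_\infty^2 \le 3\|v\|^2$ and the polynomial-in-$n$ operator-norm estimates on $\|C_{y_i}\|$ and $\|R\|$ already produced in the proof of Lemma \ref{lem:not_flat}, one obtains a uniform bound of the form $\|\hH_L(h)\| \le 64 n^3$ for all $h\in S^{n-1}$ on the high-probability event of Theorem \ref{thm:main}. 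With this bound in hand, the Euclidean gradient step $h\mapsto h - \gamma\hn_L(h)$ has invertible Jacobian on the tangent space $T_h S^{n-1}$ whenever $\gamma < 1/(64n^3)$; composing with the smooth retraction $P_{S^{n-1}}$, which is a local diffeomorphism of $\bbR^n\setminus\{0\}$ onto $S^{n-1}$, then yields that $\calA$ is a local diffeomorphism of $S^{n-1}$.

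With (i) and (ii) in hand, the center-stable manifold theorem as applied in \cite[Thm.~2]{lee2017first} implies that for each strict saddle $h_0\in\calH_2$ the local stable set of $h_0$ under $\calA$ lies in an embedded submanifold of $S^{n-1}$ of positive codimension (since $D\calA(h_0)$ has at least one eigenvalue of modulus strictly greater than one, namely along the negative-curvature direction of $\hH_L(h_0)$). A standard separability argument on the compact manifold $S^{n-1}$ then upgrades the local statement to a global one: the set of initial conditions whose iterates converge to \emph{some} strict saddle is a countable union of such submanifolds and therefore has measure zero. A uniformly random $h^{(0)}$ almost surely avoids this set, which proves the theorem. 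The main obstacle is producing the explicit constant $64n^3$ in the Hessian bound: one must control $\|C_{y_i}\|$ and $\|R\|$ uniformly in $h$, but this is essentially deterministic bookkeeping on top of the concentration results already developed for Lemmas \ref{lem:concentration} and \ref{lem:not_flat}, rather than new analysis.
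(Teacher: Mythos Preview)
Your approach is essentially the same as the paper's: invoke \cite[Theorem 2]{lee2017first} after verifying (i) every non-minimizing stationary point is a strict saddle (immediate from Theorem \ref{thm:main}) and (ii) the iteration map $\calA$ has invertible differential for $\gamma < 1/(64n^3)$. The paper packages (ii) as a separate lemma (Lemma \ref{lem:invertible_differential}) and obtains the bound $\|\hH_L(h)\|\le 64n^3$ from Lemma \ref{lem:L_bound}, which combines $\|H_L(h)\|\le 48n^3$ and $\|\nabla_L(h)\|\le 16n^3$ using $\|C_fR\|\le 2$ (see \eqref{eq:CfR_norm}) together with the deterministic bounds $\|C_{x_i}\|\le n$, $\|x_i\|\le \sqrt{n}$ from the Bernoulli--Rademacher model.

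There is one slip in your argument for (ii): $P_{S^{n-1}}:\bbR^n\setminus\{0\}\to S^{n-1}$ is a submersion, not a local diffeomorphism (the domain and target differ in dimension by one), so invertibility of $D\calA(h)$ does not follow merely from composing an injective tangent map with $P_{S^{n-1}}$. You must additionally check that the image of $T_hS^{n-1}$ under the tangent step avoids the kernel of $DP_{S^{n-1}}$, namely the radial direction at $h-\gamma\hn_L(h)$. The paper handles this by computing $D\calA(h)=P_{h'^\perp}P_{h^\perp}[I-\gamma\hH_L(h)]P_{h^\perp}$ explicitly (see \eqref{eq:differential}) and factoring its determinant in orthonormal bases $V,V'$ of the two tangent spaces as $|\langle h,h'\rangle|\cdot|\det(I_{n-1}-\gamma V^\T\hH_L(h)V)|$. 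The second factor is nonzero by the Hessian bound; the first is nonzero because $\langle h,h'\rangle = 1/\|h-\gamma\hn_L(h)\|>0$, using $\hn_L(h)\perp h$. This transversality check is the piece your composition argument glosses over, though it is easy to fill in.
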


Despite the fact that MGD almost always escapes saddle points, and that the objective function monotonically decreases (which we will prove in Lemma \ref{lem:gradient_descent_step}), it lacks an upper bound on the number of iterations required to converge to a good solution.

Inspired by recent work \cite{jin2019stochastic}, we make a small adjustment to MGD: when the Riemannian gradient becomes too small, we add a perturbation term to the iterate (see Algorithm \ref{alg:pmgd} and Figure \ref{fig:perturbation}). This allows an iterate to escape ``bad regions'' near saddle points.

\begin{algorithm}[ht]
\caption{Perturbed Manifold Gradient Descent (PMGD)}
\begin{algorithmic} \label{alg:pmgd}
\STATE {\bfseries Input:} $h^{(0)}\in\bbR^{n}$
\STATE {\bfseries Output:} $\{h^{(t)}\}_{t=1}^T$
\STATE {\bfseries Parameters:} step size $\gamma$, perturbation radius $\calD$, time interval $\calT$, and tolerance $c(n,\theta,\rho)$
\STATE $t_\mathrm{perturb} \leftarrow 0$
\FOR{$t=0,1,\dots, T-1$}
\IF{$\norm{\hn_L(h^{(t)})} < c(n,\theta,\rho)$ and $t - t_\mathrm{perturb}>\calT$}
\STATE $h^{(t)} \leftarrow \sqrt{1-\norm{z_p}^2} h^{(t)} + z_p$ where  $z_p \sim \mathrm{Uniform}\bigl(\{z: z\perp h^{(t)}, z \in \calD \calS^{n-1}\}\bigr)$
\STATE $t_\mathrm{perturb} \leftarrow t$
\ENDIF
\STATE $h^{(t+1)} \leftarrow P_{S^{n-1}}\bigl( h^{(t)} - \gamma \hn_L(h^{(t)}) \bigr)$
\ENDFOR
\end{algorithmic}
\end{algorithm}

\begin{figure}[htbp]
\centering
\includegraphics[width=0.5\columnwidth]{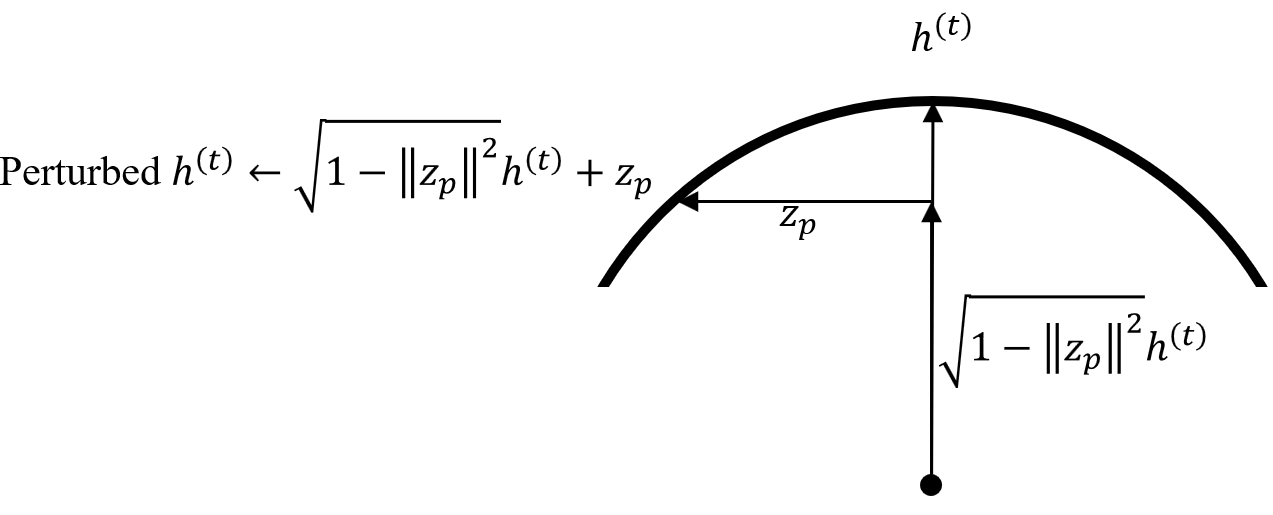}%
\caption{An illustration of the perturbation step in PMGD (Algorithm \ref{alg:pmgd}).}%
\label{fig:perturbation}%
\end{figure}

We have the following theoretical guarantee for Algorithm \ref{alg:pmgd}.

\begin{theorem}\label{thm:pmgd}
For a sufficiently large absolute constant $\xi$, if one chooses 
\[
\gamma \coloneqq \frac{1}{128n^3}, \quad \calD \coloneqq \frac{\theta^2(1-3\theta)^2}{\xi^2 n^6\log^2 n}, \quad
\calT \coloneqq \frac{\xi n^3\log n}{25\theta(1-3\theta)}, \quad c(n,\theta,\rho) \coloneqq \frac{\theta(1-3\theta)\rho^2}{2n},
\]
then with high probability, we have at least one iterate that satisfies $h^{(t)}\in\calH_1$ after the following number of iterations:
\[
T = \frac{5000n^8}{\theta^2(1-3\theta)^2\rho^4} + \frac{\xi^4n^{12}\log^4n}{800\theta^4(1-3\theta)^4}.
\]
\end{theorem}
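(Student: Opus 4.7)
The plan is to combine the global geometric partition from Theorem~\ref{thm:main} with a Riemannian adaptation of the perturbed gradient descent analysis of Jin et al.~\cite{jin2019stochastic}. Since $L$ is a low-degree polynomial in $h$ and $R$ is well-conditioned on the event of Lemmas~\ref{lem:concentration} and~\ref{lem:not_flat}, its Riemannian gradient and Hessian are Lipschitz on $S^{n-1}$ with constants polynomial in $n$, and $\Delta \coloneqq \sup_{S^{n-1}} L - \inf_{S^{n-1}} L$ is likewise polynomial in $n$. The choice $\gamma = 1/(128n^3)$ is small enough that the standard smooth descent lemma on the sphere yields
\[
L(h^{(t+1)}) \leq L(h^{(t)}) - \frac{\gamma}{2}\norm{\hn_L(h^{(t)})}^2
\]
at every non-perturbed step; this is Lemma~\ref{lem:gradient_descent_step} referenced in Theorem~\ref{thm:mgd}.

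First I would bound the count of large-gradient iterations. Whenever $\norm{\hn_L(h^{(t)})} \geq c(n,\theta,\rho)$, in particular whenever $h^{(t)}\in\calH_3$, the descent inequality forces a per-step drop of at least $\gamma c(n,\theta,\rho)^2/2$. Summing over all such iterations, their total count is at most $2\Delta/(\gamma c^2) = O(n^5\Delta/(\theta^2(1-3\theta)^2\rho^4))$; with $\Delta=\mathrm{poly}(n)$ absorbed into the absolute constant this produces the first summand $5000\,n^8/(\theta^2(1-3\theta)^2\rho^4)$ of $T$.

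Next I would handle perturbations triggered inside $\calH_2$. Adapting the coupling-sequence technique of~\cite{jin2019stochastic} to the sphere, consider two MGD trajectories started from tangent perturbations that agree except along the most negative eigendirection of $\hH_L(h^{(t)})$. By \eqref{eq:negative_curvature} and $r\leq n$, that eigenvalue is at most $-\Omega(\theta(1-3\theta)/n)$, so under the linearized dynamics the gap between the two trajectories grows by a factor $1+\Omega(\gamma\theta(1-3\theta)/n)$ per step. The cooldown $\calT = \Theta(n^3\log n/(\theta(1-3\theta)))$ is exactly long enough for this gap to exceed the width of the ``stuck region''. A union bound over the two coupled trajectories then shows that a single uniformly random tangent perturbation of radius $\calD$ escapes the stuck region with high probability, after which $\calT$ subsequent MGD steps decrease $L$ by some $\delta_{\mathrm{esc}} = \Omega(\calT\gamma\calD^2\theta(1-3\theta)/n)$. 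At most $\Delta/\delta_{\mathrm{esc}}$ perturbations can occur; multiplying by $\calT$ iterations each and plugging in the specified $\calD$ and $\calT$ matches the second summand $\xi^4 n^{12}\log^4 n/(800\,\theta^4(1-3\theta)^4)$. Choosing $\xi$ large enough lets one union bound absorb the failure probabilities of all at most $T$ coupling events, and since every iteration outside $\calH_1$ falls into one of the two regimes above, some $h^{(t)}\in\calH_1$ must appear within $T$ iterations.

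The main obstacle is the Riemannian coupling argument itself: on $S^{n-1}$ the retraction $P_{S^{n-1}}$ contributes a quadratic correction that must be dominated by the linear, exponentially amplifying Hessian dynamics near the saddle. This is what forces $\calD$ to be so small (the $1/(n^6\log^2 n)$ factor in the hypothesis), so that second-order retraction errors remain negligible over all $\calT$ steps, and in turn is the source of the $\log^4 n$ and $n^{12}$ scaling in the second term of $T$. A secondary technical task is verifying uniform $\mathrm{poly}(n)$ Lipschitz bounds for $\hn_L$ and $\hH_L$ on $S^{n-1}$ on the same high-probability event as Theorem~\ref{thm:main}, so that the descent lemma and the coupling argument can both be invoked on a single event of probability $1-o(1)$.
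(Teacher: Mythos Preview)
Your proposal follows the same architecture as the paper's proof: count large-gradient steps via the descent lemma (the paper's Lemma~\ref{lem:gradient_descent_step}), count saddle-escape rounds via a Riemannian coupling-sequence argument (the paper's Lemmas~\ref{lem:improve_or_localize}--\ref{lem:perturbation_escapes_saddle}), and use the partition $S^{n-1}\setminus\calH_1=\calH_2\cup\calH_3$ to conclude that one of the two budgets is spent at every step outside $\calH_1$. The first-summand computation is correct, and your identification of the retraction correction as the delicate point in the Riemannian coupling is exactly what the paper handles.

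There is, however, a concrete error in the quantitative core of the second part. Your claimed escape decrement $\delta_{\mathrm{esc}}=\Omega\bigl(\calT\gamma\calD^{2}\theta(1-3\theta)/n\bigr)$ has the wrong dependence on $\calD$: it is quadratic, whereas the correct scaling is linear in $\calD$ (equivalently, quadratic in the localization radius $\calR$, since the paper sets $\calD=\calR^{2}$). The paper obtains $\calL=256\,\theta^{3}(1-3\theta)^{3}/(\xi^{3}n^{6}\log^{3}n)$ not from a per-step gradient estimate but from the contrapositive of the improve-or-localize Lemma~\ref{lem:improve_or_localize}: if the decrease over $\calT$ steps were at most $\calL$, every iterate would stay within $\sqrt{\calT\calL/(62n^{3})}\leq\calR$ of the starting point, contradicting the exponential separation forced by the coupling argument (Lemma~\ref{lem:coupling_sequence}). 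This gives $\calL\asymp n^{3}\calR^{2}/\calT = n^{3}\calD/\calT$. If you carry your formula through instead, the perturbation budget $(\Delta/\delta_{\mathrm{esc}})\cdot\calT$ comes out of order $n^{19}/\theta^{5}(1-3\theta)^{5}$, not $n^{12}/\theta^{4}(1-3\theta)^{4}$, so the stated $T$ does not follow. Once you replace your $\delta_{\mathrm{esc}}$ by the improve-or-localize value $\calL$ and subtract the small perturbation cost $|L(h^{(0)})-L(h_0)|<\calL/2$ (as in the paper's Lemma~\ref{lem:perturbation_escapes_saddle}), the arithmetic reproduces the second summand exactly.
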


Theorem \ref{thm:pmgd} shows that, PMGD outputs, in polynomial time, a solution $h^{(t)}\in\calH_1$ that is close to a signed and shifted version of the ground truth. 
One can further bound the recovery error of the signal and the channels as follows.

\begin{corollary}\label{cor:f_and_x}
Under the same setting as in Theorem \ref{thm:pmgd}, the recovered $\hat{f}$ and $\hat{x}_i$ ($i = 1,2,\dots, N$) in \eqref{eq:fhat} and \eqref{eq:xhat}, computed using the output $\hat{h} = h^{(t)} \in \calH_1$ of PMGD, satisfy
\begin{align*}
& \frac{\norm{\hat{x}_i \pm \calS_j(x_i)}}{\norm{x_i}} \leq 2 \sqrt{\rho n},\\
& \frac{\norm{\hat{f} \pm S_{-j}(f)}}{\norm{f}} \leq \frac{2 \sqrt{\rho n}}{1 - 2 \sqrt{\rho n}}.
\end{align*}
for some $j\in [n]$.
\end{corollary}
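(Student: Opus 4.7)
}

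The plan is to propagate the end-to-end error $\norm{C_f R\hat{h} - \sigma e_j} \le 2\sqrt{\rho}$, guaranteed by Theorem \ref{thm:main} (and Lemma \ref{lem:neighborhood_local_minima}) for some sign $\sigma\in\{+1,-1\}$ and shift index $j\in[n]$, through the reconstruction formulas \eqref{eq:fhat} and \eqref{eq:xhat}. Write $C_f R\hat{h} = \sigma e_j + \epsilon$ with $\norm{\epsilon}\le 2\sqrt{\rho}$. The bound on $\hat{x}_i$ will follow from a plain spectral-norm estimate on a circulant matrix, while the bound on $\hat{f}$ requires a first-order perturbation argument in the DFT domain, since inversion appears in \eqref{eq:fhat}.

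For the channel bound, use commutativity of circulants to write $C_{y_i}=C_{x_i}C_f$, so that $\hat{x}_i = C_{y_i}R\hat{h} = C_{x_i}(C_f R\hat{h}) = \sigma C_{x_i}e_j + C_{x_i}\epsilon$. Since $C_{x_i}e_j$ is the $j$-th column of $C_{x_i}$, it equals a circular shift of $x_i$ (matching the shift $\calS_j$ of the corollary up to an index relabeling). Then
\[
\norm{\hat{x}_i \mp \calS_j(x_i)} = \norm{C_{x_i}\epsilon} \le \norm{C_{x_i}}\cdot\norm{\epsilon} \le \sqrt{n}\,\norm{x_i}\cdot 2\sqrt{\rho},
\]
where we use the standard identity $\norm{C_{x_i}} = \norm{\calF x_i}_\infty \le \norm{\calF x_i}_2 = \sqrt{n}\,\norm{x_i}$. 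Dividing by $\norm{x_i}$ yields the first bound.

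For the filter bound, I would work entirely in the DFT domain, which diagonalizes convolution. Let $u = C_f R\hat{h} = \sigma e_j+\epsilon$, and let $\tilde{f}\coloneqq \sigma \calS_{-j}(f)$ be the ideal reconstruction (whose inverse filter is $\sigma \calS_j(g)$, up to the same one-index relabeling). In the DFT domain, $\calF \hat{f} = (\calF(R\hat{h}))^{\odot -1} = \calF f \oslash \calF u$ and $\calF \tilde{f} = \calF f \oslash \calF(\sigma e_j)$. Subtracting these entrywise and using $|\calF(\sigma e_j)_{(k)}|=1$ gives
\[
\bigl|\calF\hat{f}_{(k)} - \calF\tilde{f}_{(k)}\bigr| \;=\; \frac{|(\calF f)_{(k)}|\,|(\calF\epsilon)_{(k)}|}{|(\calF u)_{(k)}|}.
\]
The denominator is bounded below by $|(\calF u)_{(k)}| \ge 1 - \norm{\calF\epsilon}_\infty \ge 1 - \sqrt{n}\norm{\epsilon} \ge 1 - 2\sqrt{\rho n}$ via the reverse triangle inequality, which is the step that produces the $1-2\sqrt{\rho n}$ in the denominator of the claimed bound; this requires $\rho<1/(4n)$ so that $\hat{f}$ is well-defined. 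Squaring, summing over $k$, applying Parseval, and pulling out $\norm{\calF f}_\infty\le \sqrt{n}\,\norm{f}$ gives
\[
\norm{\hat{f}-\tilde{f}}^2 \le \frac{\norm{\calF f}_\infty^2}{(1-2\sqrt{\rho n})^2}\,\norm{\epsilon}^2 \le \frac{n\,\norm{f}^2\cdot 4\rho}{(1-2\sqrt{\rho n})^2},
\]
which after dividing by $\norm{f}^2$ and taking square roots delivers the second bound. The main (modest) obstacle is handling the nonlinear inverse in \eqref{eq:fhat}; the DFT-domain perturbation trick above reduces it to a scalar-by-scalar bound and thus cleanly yields the $1/(1-2\sqrt{\rho n})$ factor. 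The sign $\sigma$ and shift index $j$ are determined once by Theorem \ref{thm:main} and used consistently in both bounds, which is why the same $\pm$ appears on both sides of the corollary.
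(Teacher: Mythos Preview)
Your proposal is correct and follows essentially the same route as the paper: both start from $\norm{C_fR\hat{h}\pm e_j}\le 2\sqrt{\rho}$, pass to the DFT domain via $\norm{\calF(\cdot)}_\infty\le\sqrt{n}\norm{\cdot}$, and use circulant spectral bounds. For the $\hat{x}_i$ bound you apply $C_{x_i}$ to the error vector whereas the paper applies the error operator $C_fC_{R\hat{h}}-C_{\mp e_j}$ to $x_i$, which is the same estimate by commutativity; for the $\hat{f}$ bound the paper packages the inversion as $\norm{f\pm\calS_j(\hat{f})}\le\norm{C_{R\hat{h}}(f\pm\calS_j(\hat{f}))}/\sigma_n(C_{R\hat{h}})$ using $C_{R\hat{h}}\hat{f}=e_1$, while you do the equivalent scalar-by-scalar DFT perturbation---the lower bound $1-2\sqrt{\rho n}$ on $|(\calF u)_{(k)}|$ is exactly the paper's lower bound on $\sigma_n(C_{R\hat{h}})\cdot|(\calF f)_{(k)}|$.
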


Similar to our requirement on $N$ in Theorem \ref{thm:main}, the requirement on the number $T$ of iterations in Theorem \ref{thm:pmgd} is pessimistic when compared to the empirical performance of our method. The experiments in Section \ref{sec:experiment} show that both MGD and PMGD achieve very high accuracy after $T=100$ iterations.

\subsection{Proof of the Algorithm Guarantees}

We first establish two lemmas on the bounds and Lipschitz continuity of the gradient and the Hessian. These properties are routinely used in the analysis of optimization methods, and will be leveraged in our proofs.

\begin{lemma}\label{lem:L_bound}
For all $h\in S^{n-1}$, $-4n^3 \leq L(h) \leq 0$, $\norm{\nabla_L(h)} \leq 16n^3$, $\norm{H_L(h)} \leq 48 n^3$.
\end{lemma}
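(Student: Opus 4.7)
The plan is to establish the three bounds in turn, all grounded in the defining identity of the preconditioner
\[
R^{-2} = \frac{1}{\theta nN}\sum_{i=1}^N C_{y_i}^\T C_{y_i}.
\]
The upper bound $L(h)\leq 0$ is immediate since $\phi(x) = -\frac{1}{4}\norm{x}_4^4 \leq 0$ pointwise, and $L$ is a non-negative combination of $\phi$-values.

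Before tackling the remaining bounds, I would record two consequences of the construction of $R$. First, since $R=R^\T$ and each $C_{y_i}^\T C_{y_i}$ is positive semidefinite,
\[
\sum_{i=1}^N (C_{y_i}R)^\T (C_{y_i}R) = \theta nN\cdot I,
\]
which gives both the deterministic spectral bound $\norm{C_{y_i}R}^2\leq \theta nN$ (each summand is dominated by the sum in the PSD order) and the frame-like identity $\sum_i \norm{C_{y_i}Rh}^2 = \theta nN$ for every $h\in S^{n-1}$. Second, because $R$ is a function of the circulant matrix $R^{-2}$ it is itself circulant, so each $C_{y_i}R$ equals $C_{y_i\circledast r}$, where $r$ denotes the first column of $R$; this yields the sharper row-norm bound $\norm{C_{y_i}Rh}_\infty \leq \norm{y_i\circledast r}_2 \leq \norm{C_{y_i}R}$.

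For the lower bound on $L(h)$, I would use the pointwise inequality $\norm{v}_4^4 \leq \norm{v}_\infty^2 \norm{v}_2^2$ and combine the two observations above to obtain an $O(n^3)$ bound. For the gradient, starting from $\nabla_\phi(x) = -x^{\odot 3}$ and taking the supremum over unit $u$ of
\[
\langle \nabla_L(h), u\rangle = -\frac{1}{N}\sum_i \langle (C_{y_i}Rh)^{\odot 3}, C_{y_i}Ru\rangle,
\]
I would apply Cauchy--Schwarz inside each inner product via $|\langle x^{\odot 3}, w\rangle|\leq \norm{x}_\infty^2\norm{x}_2\norm{w}_2$ and across the sum in $i$, then finish using the frame identity together with the operator-norm bound on $C_{y_i}R$. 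The Hessian argument is structurally identical: from $H_\phi(x)=-3\diag(x^{\odot 2})$ we have
\[
u^\T H_L(h) u = -\frac{3}{N}\sum_i\sum_j (C_{y_i}Rh)_j^2 (C_{y_i}Ru)_j^2,
\]
which I would upper bound by $\frac{3}{N}\sum_i \norm{C_{y_i}Rh}_\infty^2\norm{C_{y_i}Ru}_2^2$ and handle with the same two inputs.

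The main obstacle is purely arithmetic: tracking the explicit constants $4$, $16$, $48$ requires choosing the tightest interplay between $\ell_2$, $\ell_4$, $\ell_\infty$, and operator-norm inequalities, and routing every factor of $\theta nN$ either against the frame identity (which absorbs a $1/N$ from the averaging) or against the spectral bound on $\norm{C_{y_i}R}$. There is no new analytic idea beyond recognizing the frame identity and the circulant structure of $\{C_{y_i}R\}$; accordingly, no probabilistic step (and in particular no appeal to Lemmas \ref{lem:concentration}--\ref{lem:not_flat}) is required.
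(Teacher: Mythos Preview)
Your frame identity $\sum_i (C_{y_i}R)^\T(C_{y_i}R)=\theta nN\cdot I$ and its consequences are correct, but they cannot deliver $N$-independent constants. The only per-channel bound you extract is $\norm{C_{y_i}R}^2\leq \theta nN$, and this is sharp in the worst case: nothing in the definition of $R$ prevents almost all of the ``mass'' $\theta nN$ from sitting on a single channel. If you carry your own inequalities through, e.g.\ for the lower bound on $L$,
\[
-L(h)\leq \frac{1}{4N}\sum_i\norm{C_{y_i}Rh}_\infty^2\norm{C_{y_i}Rh}_2^2
\leq \frac{1}{4N}\bigl(\max_i\norm{C_{y_i}R}^2\bigr)\sum_i\norm{C_{y_i}Rh}_2^2
\leq \frac{1}{4N}(\theta nN)(\theta nN)=\frac{\theta^2 n^2 N}{4},
\]
and the gradient and Hessian bounds pick up the same factor of $N$. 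So your final sentence is the mistake: a probabilistic input \emph{is} required to get the constants $4n^3,16n^3,48n^3$.

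The paper's argument does not use the frame identity at all. It factors $C_{y_i}R=C_{x_i}U$ with $U\coloneqq C_fR$ and controls the two factors separately: from the Bernoulli--Rademacher model one has deterministically $\norm{x_i}\leq\sqrt{n}$ and $\norm{C_{x_i}}\leq n$, and from the proof of Lemma~\ref{lem:not_flat} (specifically the bound \eqref{eq:CfR_norm}) one has $\norm{U}\leq 2$ on the high-probability event. Since all the matrices are circulant they diagonalize simultaneously, and the singular values of $U$ are $q_j^{-1/2}$ with $q_j=\frac{1}{\theta nN}\sum_i|(\calF x_i)_{(j)}|^2$; deterministically $\min_j q_j$ can be zero (e.g.\ if all $x_i$ happen to have vanishing DC component), so $\norm{U}\leq 2$ genuinely needs concentration. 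With $\norm{U}\leq 2$ in hand, each entry of $C_{x_i}Uh$ is at most $\norm{x_i}\norm{Uh}\leq 2\sqrt{n}$, and the three bounds follow by elementary $\ell_p$ inequalities. You should either invoke \eqref{eq:CfR_norm} or reprove $\norm{C_fR}\leq 2$ from scratch; the frame identity alone will not close the gap.
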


\begin{lemma}\label{lem:lipschitz}
The Riemannian gradient and Riemannian Hessian are Lipschitz continuous:
\begin{align*}
& \norm{\hn_L(h) - \hn_L(h')} \leq 64n^3\norm{h - h'}, \\
& \norm{\hH_L(h) - \hH_L(h')} \leq 272n^3\norm{h - h'}.
\end{align*}
\end{lemma}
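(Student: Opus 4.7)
\textbf{Proof proposal for Lemma \ref{lem:lipschitz}.} The plan is to reduce both Riemannian Lipschitz bounds to Euclidean quantities plus a controlled perturbation coming from the change of tangent space. The three ingredients I would establish first are: (i) the pointwise bounds $\norm{\nabla_L(h)}\leq 16n^3$ and $\norm{H_L(h)}\leq 48n^3$ from Lemma \ref{lem:L_bound}; (ii) the standard identity $\norm{P_{h^\perp}-P_{h'^\perp}} = \norm{hh^\T - h'h'^\T} \leq \norm{h-h'}$ for unit vectors $h,h'$ (easily verified by diagonalizing the rank-$2$ difference in the plane spanned by $h,h'$, which gives spectral norm $|\sin\alpha|\leq 2|\sin(\alpha/2)| = \norm{h-h'}$); and (iii) Euclidean Lipschitz continuity of $\nabla_L$, which follows from (i) since $\nabla_L(h)-\nabla_L(h') = \int_0^1 H_L(h'+t(h-h'))(h-h')\,dt$ and the convex combination has norm at most $1$, so the bound $\norm{H_L}\leq 48n^3$ extends along the chord.

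For the Riemannian gradient, I would use the telescoping decomposition
\[
\hn_L(h)-\hn_L(h') = P_{h^\perp}\bigl(\nabla_L(h)-\nabla_L(h')\bigr) + \bigl(P_{h^\perp}-P_{h'^\perp}\bigr)\nabla_L(h'),
\]
and apply $\norm{P_{h^\perp}}\leq 1$, the Lipschitz constant $48n^3$ for $\nabla_L$, the projection bound $\norm{h-h'}$, and $\norm{\nabla_L(h')}\leq 16n^3$ to obtain $(48+16)n^3 = 64n^3$.

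For the Riemannian Hessian, I would split $\hH_L(h)-\hH_L(h') = T_1 - T_2$ where $T_1$ is the difference of the projected Hessians and $T_2$ is the difference of the $\langle\nabla_L(\cdot),\cdot\rangle P_{(\cdot)^\perp}$ terms. Telescoping $T_1$ into the three pieces
\[
P_{h^\perp}\bigl(H_L(h)-H_L(h')\bigr)P_{h^\perp} \;+\; \bigl(P_{h^\perp}-P_{h'^\perp}\bigr)H_L(h')P_{h^\perp} \;+\; P_{h'^\perp}H_L(h')\bigl(P_{h^\perp}-P_{h'^\perp}\bigr)
\]
and telescoping $T_2$ analogously (treating $\langle\nabla_L(h),h\rangle - \langle\nabla_L(h'),h'\rangle$ by adding and subtracting $\langle\nabla_L(h'),h\rangle$), every piece except the first of $T_1$ reduces to the already-established Euclidean bounds and the projection inequality, contributing $96n^3+80n^3 = 176n^3$ in $\norm{h-h'}$.

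The main obstacle, and the one new estimate needed, is the Lipschitz continuity of the Euclidean Hessian: I need $\norm{H_L(h)-H_L(h')}\leq 96n^3\norm{h-h'}$. Writing $H_L(h)-H_L(h') = -\tfrac{3}{N}\sum_i R^\T C_{y_i}^\T \diag\bigl((C_{y_i}R(h+h'))\odot(C_{y_i}R(h-h'))\bigr)C_{y_i}R$, for unit $u,v$ the bilinear form $u^\T(H_L(h)-H_L(h'))v$ becomes a sum over $i$ and $j$ of products of four entries $(C_{y_i}Ru)_j$, $(C_{y_i}Rv)_j$, $(C_{y_i}R(h+h'))_j$, $(C_{y_i}R(h-h'))_j$. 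Applying Hölder's inequality with four $\ell_4$ factors (over the joint $(i,j)$ index) reduces the estimate to bounding $\sum_i \norm{C_{y_i}Rw}_4^4$ for the four choices $w\in\{u,v,h+h',h-h'\}$. Homogeneity gives $\sum_i \norm{C_{y_i}Rw}_4^4 \leq 16 N n^3 \norm{w}^4$ (the unit-sphere bound from $L\geq -4n^3$ extended by degree-$4$ scaling), and the four resulting factors combine with $\norm{h+h'}\leq 2$ to produce exactly $96n^3\norm{h-h'}$. Adding this to the $176n^3$ from the projection/gradient terms yields $272n^3$, completing the Hessian bound.
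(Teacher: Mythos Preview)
Your proposal is correct and follows the same overall architecture as the paper: telescope the Riemannian quantities into a Euclidean piece plus projection-difference pieces, use $\norm{P_{h^\perp}-P_{h'^\perp}}\leq\norm{h-h'}$, and plug in the bounds from Lemma~\ref{lem:L_bound}. Your bookkeeping for the Riemannian gradient ($48+16=64$) and for the projection and scalar terms in the Riemannian Hessian ($96+80=176$) matches the paper exactly.

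The one place where you take a genuinely different route is the Euclidean Hessian Lipschitz estimate $\norm{H_L(h)-H_L(h')}\leq 96n^3\norm{h-h'}$. The paper does this by unpacking $C_{y_i}R=C_{x_i}U$ with $U=C_fR$, invoking $\norm{U}\leq 2$ (established earlier as \eqref{eq:CfR_norm}) together with the Bernoulli--Rademacher bounds $\norm{x_i}\leq\sqrt{n}$, $\norm{C_{x_i}}\leq n$, and then using that $3(\cdot)^2$ has Lipschitz constant $6\norm{U}\norm{x_i}$ on the relevant interval. Your H\"older-on-$\ell_4$ argument instead reduces the estimate directly to the bound $L\geq -4n^3$ from Lemma~\ref{lem:L_bound}, via $\sum_i\norm{C_{y_i}Rw}_4^4\leq 16Nn^3\norm{w}^4$. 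Both give exactly $96n^3$. Your version is more self-contained (it never needs the factorization of $C_{y_i}R$ or the explicit $\norm{U}\leq 2$), while the paper's version is a little more elementary and transparent about where the powers of $n$ come from. Similarly, for the Euclidean gradient Lipschitz constant the paper argues directly from the Lipschitz constant of $(\cdot)^3$, whereas you obtain it by integrating $H_L$ along the chord; note this requires the Hessian bound on the full unit ball, which does hold since the paper's proof of $\norm{H_L(h)}\leq 48n^3$ actually gives $48n^3\norm{h}^2$.
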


We first prove Theorem \ref{thm:mgd} using a recent result on why first-order methods almost always escape saddle points \cite{lee2017first}.

\begin{IEEEproof}[Proof of Theorem \ref{thm:mgd}]
We show that if the initialization $h^{(0)}$ follows a uniform distribution on $S^{n-1}$, then $h^{(t)}$ does not converge to any saddle point almost surely, by applying \cite[Theorem 2]{lee2017first}. To this end, we verify that \textbf{(C1)} the saddle points are unstable fixed points of \eqref{eq:mgd}, and that \textbf{(C2)} the differential of $\calA(\cdot)$ in \eqref{eq:mgd} is invertible.

Let $h' \coloneqq \calA(h) = P_{S^{n-1}}(h - \gamma \hn_L(h))$. The differential $D\calA(h)$ defined in \cite[Definition 4]{lee2017first} is
\begin{align}
D\calA(h) = P_{h'^\perp} P_{h^\perp}[I - \gamma \hH_L(h)]P_{h^\perp}. 
\label{eq:differential}
\end{align}
By Theorem \ref{thm:main}, all saddle points of $L(h)$ on $S^{n-1}$ are strict saddle points. At strict saddle points $\hn_L(h)=0$ and $h' = h$. Because, as we have shown, $\hH_L(h)$ has a strictly negative eigenvalue, it follows from \cite[Proposition 8]{lee2017first} that $D\calA(h)$ has at least one eigenvalue larger than $1$. Therefore, strict saddle points are unstable fixed points of \eqref{eq:mgd} (see \cite[Definition 5]{lee2017first}), i.e., \textbf{(C1)} is satisfied. 

We verify \textbf{(C2)} in the following lemma.
\begin{lemma}\label{lem:invertible_differential}
For step size $\gamma<\frac{1}{64n^3}$, and all $h\in S^{n-1}$, we have $\mathrm{det}(D\calA(h))\neq 0$.
\end{lemma}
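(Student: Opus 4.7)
My plan is to interpret the differential $D\calA(h)$ from \eqref{eq:differential} as a linear map between the $(n-1)$-dimensional tangent spaces $T_h S^{n-1} \to T_{h'} S^{n-1}$, as in \cite[Definition 4]{lee2017first}, and to show this restriction has trivial kernel. Note that the $n\times n$ expression in \eqref{eq:differential} necessarily has zero determinant on $\bbR^n$ (its range lies in $h'^\perp$), so the claim $\mathrm{det}(D\calA(h))\neq 0$ must refer to the tangent-space restriction, and there invertibility is equivalent to triviality of the kernel.

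First I would collect two quantitative ingredients. Using the definition \eqref{eq:Riemannian_gradient_Hessian} together with Lemma \ref{lem:L_bound}, the triangle inequality gives $\norm{\hH_L(h)} \leq \norm{H_L(h)} + |\langle \nabla_L(h), h\rangle| \leq 48n^3 + 16n^3 = 64n^3$, so the hypothesis $\gamma < 1/(64n^3)$ makes $I - \gamma \hH_L(h)$ invertible on $\bbR^n$ via a Neumann-series bound. Second, because $\hn_L(h) = P_{h^\perp}\nabla_L(h)$ is orthogonal to $h$, a one-line calculation yields $\langle h, h'\rangle = 1/\norm{h - \gamma \hn_L(h)} > 0$, so $h'$ is not orthogonal to $h$.

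The kernel computation is then short. Pick any $v \in T_h S^{n-1}$ with $D\calA(h) v = 0$ and set $w \coloneqq [I - \gamma \hH_L(h)]v$. Because $\hH_L(h)$ has its range in $T_h S^{n-1}$ (the outer $P_{h^\perp}$ in \eqref{eq:Riemannian_gradient_Hessian}), we get $w \in T_h S^{n-1}$; hence $P_{h^\perp} w = w$ and the condition $D\calA(h) v = 0$ collapses to $P_{h'^\perp} w = 0$, i.e., $w$ is parallel to $h'$. Combined with $w \perp h$ and $\langle h, h'\rangle \neq 0$, this forces $w = 0$, after which invertibility of $I - \gamma \hH_L(h)$ forces $v = 0$. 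I do not foresee a real obstacle; the only subtle point is that, when working with the $n\times n$ formula in \eqref{eq:differential}, one must remember to restrict to the tangent space before speaking of determinants. Once that identification is made, the proof uses only the spectral bound on $\hH_L$ and the positivity of $\langle h, h'\rangle$.
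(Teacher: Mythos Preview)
Your proposal is correct and uses essentially the same ingredients as the paper's proof: the bound $\norm{\hH_L(h)}\leq 64n^3$ (so $I-\gamma\hH_L(h)$ is invertible) and the observation $\langle h,h'\rangle = 1/\norm{h-\gamma\hn_L(h)}>0$. The paper packages these via explicit orthonormal bases $V,V'$ for the two tangent spaces, writes the matrix representation as $V'^\T V\,(I_{n-1}-\gamma V^\T\hH_L(h)V)$, and computes $|\det(V'^\T V)|=|\langle h,h'\rangle|$ directly; your coordinate-free kernel argument is an equivalent and slightly more streamlined route to the same conclusion.
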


Since conditions \textbf{(C1)} and \textbf{(C2)} are satisfied, by \cite[Theorem 2]{lee2017first}, the set of initial points that converge to saddle points have measure $0$. Therefore, a random $h^{(0)}$ uniformly distributed on $S^{n-1}$ does not converge to any saddle point almost surely.
\end{IEEEproof}

The proof of Theorem \ref{thm:pmgd} follows steps similar to those in the proof of \cite[Theorem 9]{jin2019stochastic}. The main notable difference is that we have to modify the arguments for optimization on the sphere. In particular, the following lemmas are the counter parts of \cite[Lemmas 10 -- 13]{jin2019stochastic}.

Lemma \ref{lem:gradient_descent_step} shows that every manifold gradient descent step decreases $L(h)$ by a certain amount if the Riemannian gradient is not zero.
\begin{lemma} \label{lem:gradient_descent_step}
For the manifold gradient update in \eqref{eq:mgd} with a step size $\gamma = \frac{1}{128n^3}$,
\[
L(h^{(t+1)}) - L(h^{(t)}) \leq -\frac{0.0038}{n^3} \norm{\hn_L(h^{(t)})}^2.
\]
\end{lemma}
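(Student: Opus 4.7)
The plan is to apply a second-order Taylor expansion of $L$ on $\bbR^n$ along the chord from $h^{(t)}$ to $h^{(t+1)}$, and to control each piece using the retraction geometry together with the value, gradient, and Hessian bounds from Lemma \ref{lem:L_bound}. Set $g := \hn_L(h^{(t)})$ and $s := \sqrt{1+\gamma^2\norm{g}^2}$. Since $g$ lies in the tangent space, $h^{(t)} \perp g$ and hence $\norm{h^{(t)} - \gamma g}^2 = 1 + \gamma^2\norm{g}^2$. The retraction thus reduces to dividing by $s$, so that $h^{(t+1)} = (h^{(t)}-\gamma g)/s$ and the displacement $\Delta := h^{(t+1)}-h^{(t)} = [(1-s)h^{(t)}-\gamma g]/s$ decomposes orthogonally into a radial part along $h^{(t)}$ and a tangential part along $g$. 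The identity $(1-s)/s = -\gamma^2\norm{g}^2/(s(1+s))$ will be used repeatedly.

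For the first-order term $\langle \nabla_L(h^{(t)}), \Delta\rangle$ the tangential piece contributes exactly $-(\gamma/s)\norm{g}^2$, because $g = P_{h^\perp}\nabla_L(h^{(t)})$ together with $h^{(t)}\perp g$ yields $\langle \nabla_L(h^{(t)}), g\rangle = \norm{g}^2$. The radial piece contributes $-4L(h^{(t)})\gamma^2\norm{g}^2/(s(1+s))$, using the 4-homogeneity identity $\langle \nabla_L(h),h\rangle = 4L(h)$ (which follows from the $\ell_4$ structure of $\phi$). Since $L\leq 0$ and $|L|\leq 4n^3$ from Lemma \ref{lem:L_bound}, together with $s(1+s)\geq 2$, this radial piece is nonnegative and bounded by $8n^3\gamma^2\norm{g}^2$. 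For the second-order term I would combine $\norm{H_L}\leq 48n^3$ with the direct computation $\norm{\Delta}^2 = [(1-s)^2+\gamma^2\norm{g}^2]/s^2 \leq \gamma^2\norm{g}^2(1+\gamma^2\norm{g}^2/4)$ to get an upper bound $24n^3\gamma^2\norm{g}^2(1+\gamma^2\norm{g}^2/4)$.

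Substituting $\gamma = 1/(128n^3)$ and using $\norm{g}\leq \norm{\nabla_L}\leq 16n^3$ (so $\gamma^2\norm{g}^2\leq 1/64$ and $1/s\geq 8/\sqrt{65}$), combining the three contributions gives $L(h^{(t+1)})-L(h^{(t)}) \leq \gamma\norm{g}^2\bigl[-1/s + \gamma n^3(32+3/32)\bigr]$, which is at most $-(0.0038/n^3)\norm{g}^2$ after arithmetic; in fact the stated constant is slightly loose relative to what this bound delivers.

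The main obstacle is the radial cross term produced by the nonlinear retraction: pulling the iterate back to $S^{n-1}$ introduces a positive contribution proportional to $\gamma^2\norm{g}^2\cdot L(h^{(t)})$ that would by itself destroy monotone decrease if $\gamma$ were not small. Controlling it forces the use of the \emph{value} bound $|L|\leq 4n^3$ from Lemma \ref{lem:L_bound} (not only the gradient and Hessian bounds), and the conservative step size $\gamma = \Theta(1/n^3)$ is precisely what ensures the nonlinear correction stays dominated by the genuine descent $-(\gamma/s)\norm{g}^2$ contributed by the tangential gradient step.
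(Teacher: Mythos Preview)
Your argument is correct, but it takes a genuinely different route from the paper's. The paper integrates the Riemannian gradient along the \emph{geodesic arc} on $S^{n-1}$ joining $h^{(t)}$ to $h^{(t+1)}$, bounds the increment via the Lipschitz constant $64n^3$ of $\hn_L$ from Lemma~\ref{lem:lipschitz}, and expresses everything through the angle $\Theta=\tan^{-1}(\gamma\norm{g})$; this yields $(-\gamma\cos\Theta+64n^3\gamma^2)\norm{g}^2$ and hence the constant $0.0038$. Your approach instead stays in the ambient $\bbR^n$, Taylor-expands $L$ along the \emph{chord}, and splits the displacement into radial and tangential parts. The key extra ingredient you use is the $4$-homogeneity identity $\langle\nabla_L(h),h\rangle=4L(h)$ together with the \emph{value} bound $|L|\le 4n^3$ to control the radial correction; the paper never invokes this, relying instead purely on the Riemannian Lipschitz structure. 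Your method is a bit more elementary (no arc-length geometry) and even delivers a slightly better constant, while the paper's method is more intrinsic and would transfer unchanged to objectives lacking homogeneity. One small point to make explicit: your second-order term evaluates $H_L$ at an interior point of the chord, which lies strictly inside the unit ball rather than on $S^{n-1}$; Lemma~\ref{lem:L_bound} is stated for $h\in S^{n-1}$, but its proof only uses $\norm{h}\le 1$, so the bound $\norm{H_L}\le 48n^3$ carries over to the chord.
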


A simple extension of the above lemma shows that, if the objective function does not decrease much over $\calT$ iterations, then every iterate is within a small neighborhood of the starting point:
\begin{lemma} \label{lem:improve_or_localize}
Give the same setup as in Lemma \ref{lem:gradient_descent_step}, for every $0\leq t\leq \calT$,
\[
\norm{h^{(0)} - h^{(t)}} \leq \sqrt{\frac{\calT}{62n^3}[L(h^{(0)}) - L(h^{(\calT)})]}.
\]
\end{lemma}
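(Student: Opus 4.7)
The plan is to control the total displacement $\norm{h^{(0)} - h^{(t)}}$ by summing the per-step displacements and then combining the sum with the per-step descent already provided by Lemma \ref{lem:gradient_descent_step}. The only aspect that is not completely mechanical is handling the retraction onto $S^{n-1}$: a manifold gradient step is not the same as a Euclidean step, so the per-step displacement must be related to $\gamma\norm{\hn_L(h^{(s)})}$ before the telescoping argument can be applied.

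First, I would bound one step. Writing $g^{(s)} \coloneqq \gamma\,\hn_L(h^{(s)})$, which lies in the tangent space at $h^{(s)}$ and is thus orthogonal to $h^{(s)}$, the update \eqref{eq:mgd} gives $h^{(s+1)} = (h^{(s)} - g^{(s)})/\norm{h^{(s)} - g^{(s)}}$ with $\norm{h^{(s)} - g^{(s)}} = \sqrt{1+\norm{g^{(s)}}^2}$. A direct computation then yields $\norm{h^{(s+1)} - h^{(s)}}^2 = 2 - 2/\sqrt{1+\norm{g^{(s)}}^2}$. Combining this with the elementary inequality $2 - 2/\sqrt{1+x^2} \leq x^2$, valid for $x^2 \leq 3$ (and satisfied here since Lemma \ref{lem:L_bound} together with $\gamma = 1/(128n^3)$ forces $\norm{g^{(s)}}$ to be small), yields the clean bound
\[
\norm{h^{(s+1)} - h^{(s)}} \leq \gamma\,\norm{\hn_L(h^{(s)})}.
\]

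Second, I would apply the triangle inequality followed by Cauchy--Schwarz to get
\[
\norm{h^{(0)} - h^{(t)}}^2 \leq t\sum_{s=0}^{t-1}\norm{h^{(s+1)} - h^{(s)}}^2 \leq t\gamma^2 \sum_{s=0}^{t-1}\norm{\hn_L(h^{(s)})}^2.
\]
Lemma \ref{lem:gradient_descent_step} rearranges into $\norm{\hn_L(h^{(s)})}^2 \leq \tfrac{n^3}{0.0038}\bigl[L(h^{(s)}) - L(h^{(s+1)})\bigr]$, so the inner sum telescopes to $\tfrac{n^3}{0.0038}\bigl[L(h^{(0)}) - L(h^{(t)})\bigr]$. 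Plugging in $\gamma = 1/(128n^3)$ gives the prefactor $t\gamma^2 n^3/0.0038 = t/(128^2 \cdot 0.0038 \cdot n^3) \leq t/(62n^3)$.

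Finally, Lemma \ref{lem:gradient_descent_step} also guarantees that $\{L(h^{(s)})\}$ is monotonically non-increasing, so $L(h^{(0)}) - L(h^{(t)}) \leq L(h^{(0)}) - L(h^{(\calT)})$ for every $t \leq \calT$, and replacing $t$ by $\calT$ on the outside yields the claimed inequality. I do not anticipate any serious obstacle; the only mildly subtle point is the sphere-retraction bound at the very first step, and the rest is a standard ``Cauchy--Schwarz plus telescoping descent'' argument.
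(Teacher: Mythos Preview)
Your proposal is correct and follows essentially the same route as the paper: triangle inequality, Cauchy--Schwarz, the per-step bound $\norm{h^{(s+1)}-h^{(s)}}\le\gamma\norm{\hn_L(h^{(s)})}$, and then telescoping via Lemma~\ref{lem:gradient_descent_step}. The paper simply extends the sum from $t$ to $\calT$ immediately (using nonnegativity of the summands) rather than invoking monotonicity of $L$ at the end as you do, and it states the retraction bound without the explicit $2-2/\sqrt{1+x^2}\le x^2$ computation you supply; incidentally, that inequality holds for all $x$, so your restriction $x^2\le 3$ is not needed.
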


Next, we give a key result showing that regions near saddle points, where manifold gradient descent tends to get stuck, are very narrow. We prove this using the ``coupling sequence'' technique first proposed by Jin et al. \cite{jin2019stochastic}. We extend this technique to manifold gradient descent on the sphere in the following lemma. 
\begin{lemma} \label{lem:coupling_sequence}
Assume that $h_0 \in \calH_2$ and $\norm{\hn_L(h_0)} < c(n,\theta,\rho)$. Let $640 < \xi < 128/\rho^2$ denote a constant. Let $z_0 \coloneqq \arg\min_{\substack{z: \norm{z}=1\\z\perp h_0}} z^\T \hH_{L}(h_0) z$ denote the negative curvature direction.
Assume that $h^{(0)}_1$ and $h^{(0)}_2$ are two perturbed versions of $h_0$, and satisfy: (1) $\max\{\norm{h^{(0)}_1-h_0},\,\norm{h^{(0)}_2-h_0} \}\leq \frac{\calR}{2} \coloneqq \frac{\theta(1-3\theta)}{2\xi n^3\log n}$; (2) $P_{h_0^\perp} (h^{(0)}_1 - h^{(0)}_2) = d z_0 \coloneqq \frac{4\theta(1-3\theta)}{\xi n^{(3+\xi/10000)}\log n} z_0$. Then the two coupling sequences of $\calT \coloneqq \frac{\xi n^3\log n}{25\theta(1-3\theta)}$ manifold gradient descent iterates $\{h^{(t)}_1\}_{t=0}^\calT$ and $\{h^{(t)}_2\}_{t=0}^\calT$ satisfy:
\[
\max\{ L(h^{(0)}_1) - L(h^{(\calT)}_1),\,L(h^{(0)}_2) - L(h^{(\calT)}_2) \} > \calL \coloneqq \frac{256\theta^3(1-3\theta)^3}{\xi^3n^6\log^3n}.
\]
\end{lemma}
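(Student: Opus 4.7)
The argument follows the coupling-sequence template of Jin et al., adapted to the sphere. I would argue by contradiction: if \emph{both} sequences fail to decrease the objective by $\calL$, then both remain confined to a small ball around $h_0$, inside which the negative curvature along $z_0$ drives the two sequences exponentially apart, eventually to a separation larger than the diameter of the ball, a contradiction.

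More concretely, suppose $L(h_j^{(0)}) - L(h_j^{(\calT)}) \leq \calL$ for $j=1,2$. Lemma \ref{lem:improve_or_localize} then bounds $\norm{h_j^{(t)} - h_j^{(0)}} \leq \sqrt{\calT \calL/(62 n^3)}$ for every $t \leq \calT$, and a direct substitution of the stated values of $\calT$ and $\calL$ shows this quantity does not exceed $\calR/2$. Combined with the hypothesis $\norm{h_j^{(0)} - h_0} \leq \calR/2$, every iterate stays inside $B(h_0, \calR)$. Within this ball, Lemma \ref{lem:lipschitz} gives $\norm{\hH_L(h) - \hH_L(h_0)} \leq 272 n^3 \calR = O(\theta(1-3\theta)/\log n)$, which, by taking $\xi$ large, is an arbitrarily small fraction of $|\lambda_{\min}| \geq \theta(1-3\theta)$, where $\lambda_{\min}$ denotes the smallest eigenvalue of $\hH_L(h_0)$ on the tangent space (guaranteed negative by $h_0 \in \calH_2$ and \eqref{eq:negative_curvature}, with eigenvector $z_0$).

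Next I would track the difference $w^{(t)} \coloneqq h_1^{(t)} - h_2^{(t)}$ under \eqref{eq:mgd}. Combining a mean-value form of $\hn_L(h_1^{(t)}) - \hn_L(h_2^{(t)})$ with the first-order expansion of the retraction $P_{S^{n-1}}$ about $h_0$ yields the linearized recursion
\begin{equation*}
w^{(t+1)} = (I - \gamma\, \hH_L(h_0))\, w^{(t)} + \gamma\, \Delta^{(t)},
\end{equation*}
where $\Delta^{(t)}$ collects (i) the Hessian--Lipschitz residual, of magnitude $O(n^3 \calR \norm{w^{(t)}})$, and (ii) the quadratic retraction correction, of magnitude $O(n^3 \norm{w^{(t)}}^2)$. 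Both are controlled in the small-radius regime, since $\norm{w^{(t)}} \leq 2\calR$ throughout. Decomposing $w^{(t)}$ into $\alpha^{(t)} z_0$ plus its orthogonal part, projecting the recursion onto $z_0$, and running the standard one-dimensional induction (identical in spirit to \cite[Lemma 12]{jin2019stochastic}) produces $|\alpha^{(\calT)}| \geq \tfrac{1}{2}(1 + \gamma|\lambda_{\min}|)^{\calT}\, d$; the induction hypothesis is that the orthogonal component of $w^{(t)}$ stays subordinate to $|\alpha^{(t)}|$, which is maintained because every eigenvalue of $I - \gamma \hH_L(h_0)$ is bounded by $1 + \gamma|\lambda_{\min}|$ in magnitude.

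Finally, a direct computation with $\gamma = 1/(128 n^3)$, $|\lambda_{\min}| \geq \theta(1-3\theta)$, and $\calT = \xi n^3 \log n/(25\theta(1-3\theta))$ gives $(1 + \gamma |\lambda_{\min}|)^{\calT} \geq n^{\xi/6400}$, which exceeds $4\calR/d = n^{\xi/10000}$ once $\xi$ is sufficiently large. Hence $\norm{h_1^{(\calT)} - h_2^{(\calT)}} \geq |\alpha^{(\calT)}| > \calR$, contradicting the bound $\norm{h_1^{(\calT)} - h_2^{(\calT)}} \leq 2\calR$ implied by the confinement step, which forces at least one of the two sequences to exit by decreasing the objective by more than $\calL$. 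The hard part of the argument is the bookkeeping of the induction: the error $\gamma\,\norm{\Delta^{(t)}}$ must stay strictly subordinate to $(1 + \gamma|\lambda_{\min}|)|\alpha^{(t)}|$ at every step, which constrains the interplay between $\calR$, $d$, and $\calT$ and is precisely what dictates the specific polynomial factors in the statement. The new wrinkle compared with \cite{jin2019stochastic} is the retraction, whose quadratic correction must also be absorbed into $\Delta^{(t)}$; fortunately, its contribution is of order $n^3 \norm{w^{(t)}}^2 \lesssim n^3 \calR \norm{w^{(t)}}$, the same order as the Hessian--Lipschitz term, so it does not change the dominant balance.
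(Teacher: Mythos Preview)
Your proposal is correct and follows essentially the same coupling-sequence argument as the paper: contradiction, localization via Lemma~\ref{lem:improve_or_localize}, linearization of the difference recursion around $\hH_L(h_0)$ with Hessian--Lipschitz and retraction error terms, and exponential growth along $z_0$ contradicting the confinement bound. One arithmetic slip: in the last step your bound actually gives $|\alpha^{(\calT)}| \geq \tfrac12(1+\gamma|\lambda_{\min}|)^{\calT} d > \tfrac12\cdot (4\calR/d)\cdot d = 2\calR$, not merely $>\calR$; you need the stronger inequality $>2\calR$ to contradict $\norm{h_1^{(\calT)}-h_2^{(\calT)}}\le 2\calR$, and it holds automatically (no ``$\xi$ sufficiently large'' is needed since $\xi/6400>\xi/10000$ for every $\xi>0$).
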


Lemma \ref{lem:coupling_sequence} effectively means that perturbation can help manifold gradient descent to escape bad regions near saddle points, which is formalized as follows:
\begin{lemma} \label{lem:perturbation_escapes_saddle}
Assume that $h_0 \in \calH_2$ and $\norm{\hn_L(h_0)} < c(n,\theta,\rho)$. Let $h^{(0)} = \sqrt{1-\norm{z_p}^2} h_0 + z_p$ be a perturbed version of $h_0$, where $z_p$ is a random vector uniformly distributed in the ball of radius $\calD \coloneqq \calR^2$ in the tangent space at $h_0$. Then after $\calT$ iterations of manifold gradient descent, we have $L(h^{\calT}) - L(h_0) < -\calL / 2$ with probability at least $1 - \frac{4\xi n^{(4-\xi/10000)}\log n}{\theta(1-3\theta)}$.
\end{lemma}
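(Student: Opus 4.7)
The plan is to adapt the stuck-region analysis of \cite{jin2019stochastic} to the sphere, treating Lemma \ref{lem:coupling_sequence} as the workhorse. I would define the ``stuck region'' $\mathcal{W}\subseteq B^{n-1}_\calD$ in the tangent space at $h_0$ as those perturbations $z_p$ for which the resulting trajectory fails to satisfy $L(h^{(0)})-L(h^{(\calT)})>\calL$. The proof then splits into (i) bounding $\mathbb{P}(z_p\in\mathcal{W})$ by the target failure probability and (ii) outside $\mathcal{W}$, translating the ``$\calL$-decrease from $h^{(0)}$'' into the desired ``$\calL/2$-decrease from $h_0$''.

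For (i) I would argue geometrically that $\mathcal{W}$ has width at most $d:=\frac{4\theta(1-3\theta)}{\xi n^{(3+\xi/10000)}\log n}$ along the negative curvature direction $z_0$. Suppose $z_p^1,z_p^2\in\mathcal{W}$ with $z_p^2-z_p^1=dz_0$. The corresponding iterates $h^{(0)}_i=\sqrt{1-\norm{z_p^i}^2}h_0+z_p^i$ satisfy the hypotheses of Lemma \ref{lem:coupling_sequence}: since $\norm{z_p^i}\leq\calD=\calR^2$, one has $\norm{h^{(0)}_i-h_0}\leq 2\calD\ll\calR/2$, and a direct calculation using $z_p^i\perp h_0$ gives $P_{h_0^\perp}(h^{(0)}_2-h^{(0)}_1)=z_p^2-z_p^1=dz_0$. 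The coupling conclusion would then force at least one of $z_p^1,z_p^2$ out of $\mathcal{W}$, a contradiction. Hence $\mathcal{W}$ lies in a slab of width $d$ perpendicular to $z_0$, and
\[
\mathbb{P}(z_p\in\mathcal{W})\leq\frac{d\cdot V_{n-2}\,\calD^{n-2}}{V_{n-1}\,\calD^{n-1}}=\frac{d}{\calD}\cdot\frac{V_{n-2}}{V_{n-1}}\leq n\cdot\frac{d}{\calD}=\frac{4\xi n^{(4-\xi/10000)}\log n}{\theta(1-3\theta)},
\]
where $V_k$ denotes the volume of the unit $k$-ball and I would use the crude estimate $V_{n-2}/V_{n-1}\leq n$.

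For (ii) I would handle the gap between $L(h_0)$ and $L(h^{(0)})$ via an ambient-space Taylor expansion. Writing $h^{(0)}-h_0=(\sqrt{1-\norm{z_p}^2}-1)h_0+z_p$ with $z_p\perp h_0$ yields, for some intermediate $\bar h$ on the segment,
\[
L(h^{(0)})-L(h_0)=\bigl(\sqrt{1-\norm{z_p}^2}-1\bigr)\langle\nabla_L(h_0),h_0\rangle+\langle\hn_L(h_0),z_p\rangle+\tfrac{1}{2}(h^{(0)}-h_0)^\T H_L(\bar h)(h^{(0)}-h_0).
\]
The first term is at most $\norm{\nabla_L(h_0)}\cdot\norm{z_p}^2\lesssim n^3\calD^2$ by Lemma \ref{lem:L_bound}; the second is at most $\norm{\hn_L(h_0)}\cdot\norm{z_p}\leq c(n,\theta,\rho)\cdot\calD$ by hypothesis; the third is at most $\norm{H_L(\bar h)}\cdot\norm{z_p}^2\lesssim n^3\calD^2$. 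Plugging in $\calD=\calR^2$ together with the definitions of $c(n,\theta,\rho)$ and $\calL$, each piece sits below $\calL/6$ provided the absolute constant $\xi$ is large enough. Combined with $L(h^{(\calT)})-L(h^{(0)})<-\calL$ outside $\mathcal{W}$, this yields $L(h^{(\calT)})-L(h_0)<-\calL/2$ as claimed.

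I expect the main obstacle to be controlling the linear Taylor term in (ii): the crude ambient bound $\norm{\nabla_L(h_0)}=O(n^3)$ is far too loose, so it is essential to isolate the Riemannian part $\langle\hn_L(h_0),z_p\rangle$ via $z_p\perp h_0$ (which is tiny by the hypothesis $\norm{\hn_L(h_0)}<c(n,\theta,\rho)$), leaving the ambient component $\langle\nabla_L(h_0),h_0\rangle$ multiplied only by the quadratically small factor $\sqrt{1-\norm{z_p}^2}-1=O(\norm{z_p}^2)$. Verifying that all three Taylor pieces genuinely fit under $\calL/2$ -- with $\calL$ itself an inverse polynomial in $n$ -- is the constant-chasing that has to be carried out carefully.
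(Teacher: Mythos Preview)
Your proposal is correct and follows essentially the same two-step structure as the paper's proof: (i) use the coupling lemma to show the stuck region has width at most $d$ along $z_0$ and bound its probability by $d(n-1)/\calD$, then (ii) show $|L(h^{(0)})-L(h_0)|<\calL/2$ so that the $\calL$-decrease from $h^{(0)}$ yields the claimed $\calL/2$-decrease from $h_0$. The only cosmetic difference is in step (ii): the paper bounds $|L(h^{(0)})-L(h_0)|$ via a geodesic line integral and the Lipschitz constant of $\hn_L$ (Lemma~\ref{lem:lipschitz}), getting $c(n,\theta,\rho)\calD+64n^3(2\calD)^2<\calL/2$, whereas you use an ambient Taylor expansion with the Euclidean Hessian bound from Lemma~\ref{lem:L_bound}---both routes give the same order and conclusion.
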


\begin{IEEEproof}[Proof of Theorem \ref{thm:pmgd}]
First, we show that among $T$ iterations there are at most 
\[
T_1 = \frac{L(h^{(0)}) - \inf L(h)}{\frac{0.0038}{n^3} c(n,\theta, \rho)^2}
\]
large gradient steps where $\norm{\hn_L(h^{(t)})} \geq c(n,\theta, \rho)$. Otherwise, one can apply Lemma \ref{lem:gradient_descent_step} to each such step, and have $L(h^{(0)}) - L(h^{(T)}) > L(h^{(0)}) - \inf L(h)$, which cannot happen.

Similarly, there are at most 
\[
T_2 = \frac{L(h^{(0)}) - \inf L(h)}{\calL / 2}
\]
perturbation steps with probability at least $1 - T_2 \frac{4\xi n^{(4-\xi/10000)}\log n}{\theta(1-3\theta)}$. Otherwise, one can apply Lemma \ref{lem:perturbation_escapes_saddle} to the perturbations, and have $L(h^{(0)}) - L(h^{(T)}) > L(h^{(0)}) - \inf L(h)$, which is again self-contradictory.

By Lemma \ref{lem:L_bound}, $L(h^{(0)}) - \inf L(h) \leq 4n^3$. Therefore,
\[
T_1 < \frac{5000n^8}{\theta^2(1-3\theta)^2\rho^4},
\] 
\[
T_2 < \frac{\xi^3n^9\log^3n}{32\theta^3(1-3\theta)^3}.
\]
It follows that, with probability at least $1 - \frac{\xi^4 n^{(13-\xi/10000)}\log^4 n}{8\theta^4(1-3\theta)^4}$, the iterates of perturbed manifold gradient descent can stay in $\calH_2 \cup \calH_3$ (large gradient or negative curvature regions) for at most $T_1 + T_2 \calT$ steps, thus completing the proof.
\end{IEEEproof}

%%%%%%%%%%%%%%%%%%% Extensions %%%%%%%%%%%%%%%%%%%%%

\section{Extensions} \label{sec:extension}

We believe that our formulation and/or analysis can be extended to other scenarios that are not covered by our theoretical guarantees.
\begin{itemize}%[leftmargin=5mm]
	\itemsep0em
	\item[$\circ$] Bernoulli-subgaussian channels. As stated at the beginning of Section \ref{sec:geometry}, the Bernoulli-Rademacher assumption (A1) is a special case of the Bernoulli-subgaussian distribution, which simplifies our analysis. We conjecture that similar bounds can be established for general subgaussian distributions.
	
	\item[$\circ$] Jointly sparse channels. This is a special case where the supports of $x_i$ ($i=1,2,\dots, N$) are identical. Due to the shared support, the $x_i$'s are no longer independent. In this case, one needs a more careful analysis conditioned on the joint support.
	
	\item[$\circ$] Complex signal and channels. We mainly consider real signals in this paper. However, a similar approach can be derived and analyzed for complex signals. We discuss this extension in the rest of this section.
\end{itemize}
Empirical evidence that our method works in these scenarios is provided in Section \ref{sec:calibration}.

For complex $f, x_i\in\bbC^n$, one can solve the following problem:
\begin{align*}
\min_{h\in\bbC^n} ~~ \frac{1}{N} \sum_{i=1}^N \phi(\re(C_{y_i} R h)) + \phi(\im(C_{y_i} R h)), \quad \text{s.t.} ~~ \norm{h} = 1,
\end{align*}
where $R \coloneqq (\frac{1}{\theta n N}\sum_{i=1}^N C_{y_i}^\Herm C_{y_i})^{-1/2}\in\bbC^{n\times n}$, and $(\cdot)^\Herm$ represents the Hermitian transpose.
If one treats the real and imaginary parts of $h$ separately, then this optimization in $\bbC^n$ can be recast into $\bbR^{2n}$, and the gradient with respect to $\re(h)$ and $\im(h)$ can be used in first-order methods. This is related to Wirtinger gradient descent algorithms (see the discussion in \cite{candes2015phase}). The Riemannian gradient with respect to $h$ is:
\[
P_{(\bbR\cdot h)^\perp} \Bigl(\frac{1}{N} \sum_{i=1}^N R^\Herm C_{y_i}^\Herm w_i(h)\Bigr),
\]
where $w_i(h)$ represents the following complex vector:
\[
w_i(h) = \nabla_\phi(\re(C_{y_i}Rh)) + \sqrt{-1} \nabla_\phi(\im(C_{y_i}Rh)),
\]
and $P_{(\bbR\cdot h)^\perp}$ represents the projection onto the tangent space at $h$ in $S^{2n-1} \subset \bbR^{2n}$:
\[
P_{(\bbR\cdot h)^\perp} z = z - \re(h^\Herm z) \cdot h.
\]
In the complex case, one can initialize the manifold gradient descent algorithm with a random $h^{(0)}$, for which $[\re(h^{(0)})^\T, \im(h^{(0)})^\T]^\T$ follows a uniform distribution on $S^{2n-1}$.

%%%%%%%%%%%%%%%%%%% Experiments %%%%%%%%%%%%%%%%%%%%%

\section{Numerical Experiments} \label{sec:experiment}

\subsection{Manifold Gradient Descent vs. Perturbed Manifold Gradient Descent} \label{sec:compare_perturbation}

Our theoretical analysis in Section \ref{sec:optimization} suggests that manifold gradient descent (MGD) is guaranteed to escape saddle points almost surely, but may take very long to do so. On the other hand, \emph{perturbed} manifold gradient descent (PMGD) is guaranteed to escape saddle points \emph{efficiently} with high probability. 
In this section, we compare the empirical performances of these two algorithms in solving the multichannel sparse blind deconvolution problem (P1). In particular, we investigate whether the random perturbation can help achieve faster convergence.

We synthesize $\{x_i\}_{i=1}^N$ following the Bernoulli-Rademacher model, and synthesize $f$ following a Gaussian distribution $N(\bfz_{n\times 1}, I_{n})$. Recall that the desired $h$ is a signed shifted version of the ground truth, i.e., $C_f R h = \pm e_j$ ($j\in [n]$) is a standard basis vector. Therefore, to evaluate the accuracy of the output $h^{(T)}$, we compute $C_f R h^{(T)}$ with the true $f$, and declare successful recovery if
\[
\frac{\norm{C_f R h^{(T)}}_\infty}{\norm{C_f R h^{(T)}}} > 0.95,
\]
or equivalently, if 
\[ 
\max_{j \in [n]} \bigl|\cos \angle \bigl(C_f R h^{(T)}, e_j \bigr) \bigr| > 0.95.
\]
We compute the success rate based on $100$ Monte Carlo instances.

We fix $N = 256$, $n = 128$, and $\theta=0.1$ in this section.
We run MGD and PMGD for $T = 100$ iterations, with a fixed step size of $\gamma=0.1$. We choose the perturbation time interval $\calT = 10$, and compare the following choices of perturbation radius $\calD$ and tolerance $c$:
\begin{enumerate}
\itemsep0em
	\item $\calD = 0.04$, $c = 0.1$
	\item $\calD = 0.08$, $c = 0.2$
	\item $\calD = 0.2$, $c = 0.5$
	\item $\calD = 0.4$, $c = 1$
\end{enumerate}
The empirical success rates (at iteration number $10$, $20$, $40$, $80$) of MGD and PMGD (with the above parameter choices) are shown in Figure \ref{fig:compare_perturbation}. 

Despite the clear advantage of PMGD over MGD in theoretical guarantees, we don't observe any significant disparity in empirical performances between the two. 
We conjecture that, because the ``bad regions'' near the saddle points of our objective function are extremely thin, the probability that MGD iterates get stuck at those regions is negligible. 
Therefore, we focus on assessing the performance of MGD (which is a special case of PMGD with $\calD = 0$ or $c = 0$) in the rest of the experiments.

\begin{figure}[htbp]%
\centering
\includegraphics[width=0.7\columnwidth]{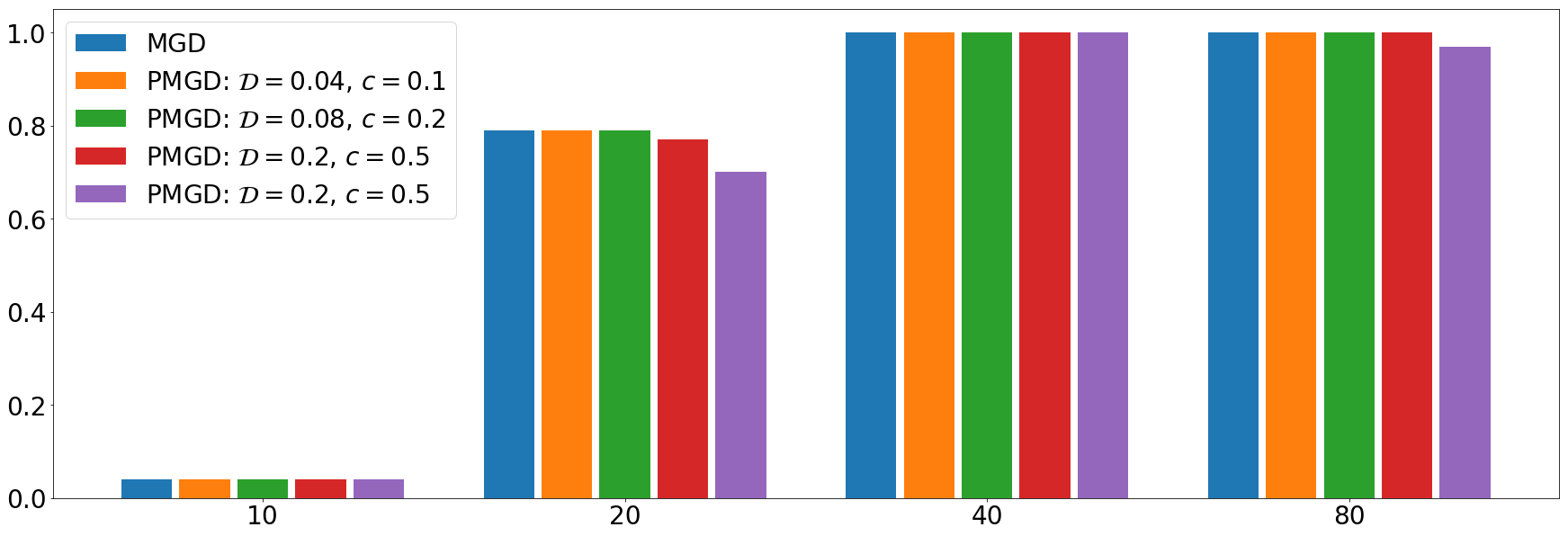}%
\caption{Empirical success rates (at iteration number $10$, $20$, $40$, $80$) of MGD and PMGD (with different parameter choices). The $x$-axis represents iteration numbers, and the $y$-axis represents success rates.}%
\label{fig:compare_perturbation}%
\end{figure}

\subsection{Deconvolution with Synthetic Data} \label{sec:synthetic}
In this section, we examine the empirical sampling complexity requirement of MGD \eqref{eq:mgd} to solve the MSBD formulation (P1). In particular, we answer the following question: how many channels $N$ are needed for different problem sizes $n$, sparsity levels $\theta$, and condition number $\kappa$? 

We follow the same experimental setup as in Section \ref{sec:compare_perturbation}. We run MGD for $T = 100$ iterations, with a fixed step size of $\gamma=0.1$.

In the first experiment, we fix $\theta = 0.1$ (sparsity level, mean of the Bernoulli distribution), and run experiments with $n = 32, 64, \dots, 256$ and $N = 32, 64, \dots, 256$ (see Figure \ref{fig:Nvsn}). In the second experiment, we fix $n=256$, and run experiments with $\theta=0.02, 0.04, \dots, 0.16$ and $N = 32, 64, \dots, 256$ (see Figure \ref{fig:Nvstheta}). The empirical phase transitions suggest that, for sparsity level relatively small (e.g., $\theta < 0.12$), there exist a constant $c>0$ such that MGD can recover a signed shifted version of the ground truth with $N\geq c n\theta$. 

In the third experiment, we examine the phase transition with respect to $N$ and the condition number $\kappa$ of $f$, which is the ratio of the largest and smallest magnitudes of its DFT. To synthesize $f$ with specific $\kappa$, we generate the DFT $\tilde{f}$ of $f$ that is random with the following distribution: 1) The DFT $\tilde{f}$ is symmetric, i.e., $\tilde{f}_{(j)} = \tilde{f}_{(n+2-j)}$, so that $f$ is real. 2) The phase of $\tilde{f}_{(j)}$ follows a uniform distribution on $[0, 2\pi)$, except for the phases of $\tilde{f}_{(1)}$ and $\tilde{f}_{(n/2+1)}$ (if $n$ is even), which are always $0$, for symmetry. 3) The gains of $\tilde{f}$ follows a uniform distribution on $[1, \kappa]$. We fix $n=256$ and $\theta=0.1$, and run experiments with $\kappa=1, 2, 4, \dots, 128$ and $N = 32, 64, \dots, 256$ (see Figure \ref{fig:Nvskappa}). The phase transition suggests that the number $N$ for successful empirical recovery is not sensitive to the condition number $\kappa$. 

\begin{figure}[htbp]%
\centering
\subfigure[]{
	% This file was created by matplotlib2tikz v0.6.16.
\begin{tikzpicture}[scale=0.6]

\begin{axis}[
width=2.5in,
height=2.5in,
xmin=-0.5, xmax=7.5,
ymin=-0.5, ymax=7.5,
tick align=outside,
tick pos=left,
xtick={1,3,5,7},
ytick={1,3,5,7},
xticklabels={64,128,192,256},
yticklabels={64,128,192,256},
xlabel={$n$},
ylabel={$N$},
x grid style={white!69.01960784313725!black},
y grid style={white!69.01960784313725!black}
]
\addplot graphics [includegraphics cmd=\pgfimage,xmin=-0.5, xmax=7.5, ymin=7.5, ymax=-0.5] {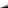};
\end{axis}

\end{tikzpicture}
	\label{fig:Nvsn}
}~
\subfigure[]{
	% This file was created by matplotlib2tikz v0.6.16.
\begin{tikzpicture}[scale=0.6]

\begin{axis}[
width=2.5in,
height=2.5in,
xmin=-0.5, xmax=7.5,
ymin=-0.5, ymax=7.5,
tick align=outside,
tick pos=left,
xtick={1,3,5,7},
ytick={1,3,5,7},
xticklabels={64,128,192,256},
yticklabels={64,128,192,256},
xlabel={$n$},
ylabel={$N$},
x grid style={white!69.01960784313725!black},
y grid style={white!69.01960784313725!black}
]
\addplot graphics [includegraphics cmd=\pgfimage,xmin=-0.5, xmax=7.5, ymin=7.5, ymax=-0.5] {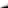};
\end{axis}

\end{tikzpicture}
	\label{fig:Nvsn_nn}
}~
\subfigure[]{
	% This file was created by matplotlib2tikz v0.6.16.
\begin{tikzpicture}[scale=0.6]

\begin{axis}[
width=2.5in,
height=2.5in,
xmin=-0.5, xmax=7.5,
ymin=-0.5, ymax=7.5,
tick align=outside,
tick pos=left,
xtick={1,3,5,7},
ytick={1,3,5,7},
xticklabels={64,128,192,256},
yticklabels={64,128,192,256},
xlabel={$n$},
ylabel={$N$},
x grid style={white!69.01960784313725!black},
y grid style={white!69.01960784313725!black}
]
\addplot graphics [includegraphics cmd=\pgfimage,xmin=-0.5, xmax=7.5, ymin=7.5, ymax=-0.5] {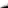};
\end{axis}

\end{tikzpicture}
	\label{fig:Nvsn_n}
}
\\
\subfigure[]{
	% This file was created by matplotlib2tikz v0.6.16.
\begin{tikzpicture}[scale=0.6]

\begin{axis}[
width=2.5in,
height=2.5in,
xmin=-0.5, xmax=7.5,
ymin=-0.5, ymax=7.5,
tick align=outside,
tick pos=left,
xtick={1,3,5,7},
ytick={1,3,5,7},
xticklabels={0.04,0.08,0.12,0.16},
yticklabels={64,128,192,256},
xlabel={$\theta$},
ylabel={$N$},
x grid style={white!69.01960784313725!black},
y grid style={white!69.01960784313725!black}
]
\addplot graphics [includegraphics cmd=\pgfimage,xmin=-0.5, xmax=7.5, ymin=7.5, ymax=-0.5] {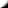};
\end{axis}

\end{tikzpicture}
	\label{fig:Nvstheta}
}~
\subfigure[]{
	% This file was created by matplotlib2tikz v0.6.16.
\begin{tikzpicture}[scale=0.6]

\begin{axis}[
width=2.5in,
height=2.5in,
xmin=-0.5, xmax=7.5,
ymin=-0.5, ymax=7.5,
tick align=outside,
tick pos=left,
xtick={1,3,5,7},
ytick={1,3,5,7},
xticklabels={0.04,0.08,0.12,0.16},
yticklabels={64,128,192,256},
xlabel={$\theta$},
ylabel={$N$},
x grid style={white!69.01960784313725!black},
y grid style={white!69.01960784313725!black}
]
\addplot graphics [includegraphics cmd=\pgfimage,xmin=-0.5, xmax=7.5, ymin=7.5, ymax=-0.5] {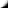};
\end{axis}

\end{tikzpicture}
	\label{fig:Nvstheta_nn}
}~
\subfigure[]{
	% This file was created by matplotlib2tikz v0.6.16.
\begin{tikzpicture}[scale=0.6]

\begin{axis}[
width=2.5in,
height=2.5in,
xmin=-0.5, xmax=7.5,
ymin=-0.5, ymax=7.5,
tick align=outside,
tick pos=left,
xtick={1,3,5,7},
ytick={1,3,5,7},
xticklabels={0.04,0.08,0.12,0.16},
yticklabels={64,128,192,256},
xlabel={$\theta$},
ylabel={$N$},
x grid style={white!69.01960784313725!black},
y grid style={white!69.01960784313725!black}
]
\addplot graphics [includegraphics cmd=\pgfimage,xmin=-0.5, xmax=7.5, ymin=7.5, ymax=-0.5] {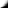};
\end{axis}

\end{tikzpicture}
	\label{fig:Nvstheta_n}
}\\
\subfigure[]{
	% This file was created by matplotlib2tikz v0.6.16.
\begin{tikzpicture}[scale=0.6]

\begin{axis}[
width=2.5in,
height=2.5in,
xmin=-0.5, xmax=7.5,
ymin=-0.5, ymax=7.5,
tick align=outside,
tick pos=left,
xtick={1,3,5,7},
ytick={1,3,5,7},
xticklabels={2,8,32,128},
yticklabels={64,128,192,256},
xlabel={$\kappa$},
ylabel={$N$},
x grid style={white!69.01960784313725!black},
y grid style={white!69.01960784313725!black}
]
\addplot graphics [includegraphics cmd=\pgfimage,xmin=-0.5, xmax=7.5, ymin=7.5, ymax=-0.5] {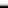};
\end{axis}

\end{tikzpicture}
	\label{fig:Nvskappa}
}~
\subfigure[]{
	% This file was created by matplotlib2tikz v0.6.16.
\begin{tikzpicture}[scale=0.6]

\begin{axis}[
width=2.5in,
height=2.5in,
xmin=-0.5, xmax=7.5,
ymin=-0.5, ymax=7.5,
tick align=outside,
tick pos=left,
xtick={1,3,5,7},
ytick={1,3,5,7},
xticklabels={2,8,32,128},
yticklabels={64,128,192,256},
xlabel={$\kappa$},
ylabel={$N$},
x grid style={white!69.01960784313725!black},
y grid style={white!69.01960784313725!black}
]
\addplot graphics [includegraphics cmd=\pgfimage,xmin=-0.5, xmax=7.5, ymin=7.5, ymax=-0.5] {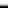};
\end{axis}

\end{tikzpicture}
	\label{fig:Nvskappa_nn}
}~
\subfigure[]{
	% This file was created by matplotlib2tikz v0.6.16.
\begin{tikzpicture}[scale=0.6]

\begin{axis}[
width=2.5in,
height=2.5in,
xmin=-0.5, xmax=7.5,
ymin=-0.5, ymax=7.5,
tick align=outside,
tick pos=left,
xtick={1,3,5,7},
ytick={1,3,5,7},
xticklabels={2,8,32,128},
yticklabels={64,128,192,256},
xlabel={$\kappa$},
ylabel={$N$},
x grid style={white!69.01960784313725!black},
y grid style={white!69.01960784313725!black}
]
\addplot graphics [includegraphics cmd=\pgfimage,xmin=-0.5, xmax=7.5, ymin=7.5, ymax=-0.5] {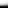};
\end{axis}

\end{tikzpicture}
	\label{fig:Nvskappa_n}
}
\caption{Empirical phase transition (grayscale values represent success rates). The first row shows the phase transitions of $N$ versus $n$, given that $\theta=0.1$. The second row shows the phase transitions of $N$ versus $\theta$, given that $n=256$. The third row shows the phase transitions of $N$ versus $\kappa$, given that $n=256$ and $\theta=0.1$. 
The first column shows the results for the noiseless case. The second column shows the results for $\mathrm{SNR}\approx 40\,\mathrm{dB}$. The third column shows are the results for $\mathrm{SNR}\approx 20\,\mathrm{dB}$. }%
\label{fig:phase_transition}%
\end{figure}

MGD is robust against noise. We repeat the above experiments with noisy measurements: $y_i = x_i \circledast f + \sigma \varepsilon_i$, where $\varepsilon_i$ follows a Gaussian distribution $N(\bfz_{n\times 1}, I_{n})$. 
The phase transitions for $\sigma = 0.01\sqrt{n\theta}$ ($\mathrm{SNR}\approx 40\,\mathrm{dB}$) and $\sigma = 0.1\sqrt{n\theta}$ ($\mathrm{SNR}\approx 20\,\mathrm{dB}$) are shown in Figure \ref{fig:Nvsn_nn}, \ref{fig:Nvstheta_nn}, \ref{fig:Nvskappa_nn}, and Figure \ref{fig:Nvsn_n}, \ref{fig:Nvstheta_n}, \ref{fig:Nvskappa_n}, respectively. For reasonable noise levels, the number $N$ of noisy measurements we need to accurately recover a signed shifted version of the ground truth is roughly the same as with noiseless measurements.

% ============================================

\subsection{2D Deconvolution}

Next, we run a numerical experiment with blind image deconvolution. Suppose the circular convolutions $\{y_i\}_{i=1}^N$ (Figure \ref{fig:img_y}) of an unknown image $f$ (Figure \ref{fig:img_f}) and unknown sparse channels $\{x_i\}_{i=1}^N$ (Figure \ref{fig:img_x}) are observed. The recovered image $\hat{f}$ (Figure \ref{fig:img_fhat}) is computed as follows:
\[
\hat{f} = \calF^{-1}\bigl[\calF(Rh^{(T)})^{\odot -1}\bigr],
\]
where $\calF$ denotes the 2D DFT, and $h^{(T)}$ is the output of MGD \eqref{eq:mgd}, with a random initialization $h^{(0)}$ that is uniformly distributed on the sphere.

Figure \ref{fig:img} shows that, although the sparse channels are completely unknown and the convolutional observations have corrupted the image beyond recognition, MGD is capable of recovering a shifted version of the (negative) image, starting from a random point on the sphere (see the image recovered using a random initialization in Figure \ref{fig:img_fhat}, and then corrected with the true sign and shift in Figure \ref{fig:img_correct}). In this example, all images and channels are of size $64\times 64$, the number of channels is $N=256$, and the sparsity level is $\theta=0.01$. We run $T=100$ iterations of MGD with a fixed step size $\gamma = 0.05$. The accuracy $\frac{\norm{C_f R h^{(t)}}_\infty}{\norm{C_f R h^{(t)}}}$ as a function of iteration number $t$ is shown in Figure \ref{fig:img_accuracy}, and exhibits a sharp transition at a modest number ($\approx 80$) of iterations.

\begin{figure}[htbp]%
\centering
\subfigure[]{
	\includegraphics[width=0.2\columnwidth]{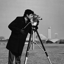}%
	\label{fig:img_f}
}~
\subfigure[]{
	\includegraphics[width=0.2\columnwidth]{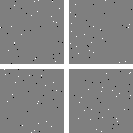}%
	\label{fig:img_x}
}~
\subfigure[]{
	\includegraphics[width=0.2\columnwidth]{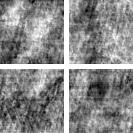}%
	\label{fig:img_y}
}\\
\subfigure[]{
	\includegraphics[width=0.2\columnwidth]{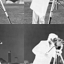}%
	\label{fig:img_fhat}
}~
\subfigure[]{
	\includegraphics[width=0.2\columnwidth]{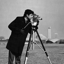}%
	\label{fig:img_correct}
}~
\subfigure[]{
	% This file was created by matplotlib2tikz v0.6.16.
\begin{tikzpicture}[scale=0.6]

\definecolor{color0}{rgb}{0.12156862745098,0.466666666666667,0.705882352941177}

\begin{axis}[
width=2.5in,
height=2.5in,
xmin=-4.95, xmax=103.95,
ymin=0.0113384346819374, ymax=1.04648476573765,
tick align=outside,
tick pos=left,
xlabel={$t$},
ylabel={$\frac{\norm{C_f R h^{(t)}}_\infty}{\norm{C_f R h^{(t)}}}$},
x grid style={white!69.01960784313725!black},
y grid style={white!69.01960784313725!black}
]
\addplot [semithick, color0, forget plot, line width=3pt]
table {%
0 0.0583905406390151
1 0.0587604663863387
2 0.0591372908387506
3 0.059521213735704
4 0.05991244369575
5 0.060311198737924
6 0.0607177068403895
7 0.0611322065395034
8 0.0615549475727849
9 0.0619861915696189
10 0.0624262127939153
11 0.0628752989433833
12 0.0633337520105662
13 0.063801889211331
14 0.0642800439871188
15 0.0647685670879531
16 0.065267827743978
17 0.0657782149341733
18 0.0663001387618822
19 0.0668340319479035
20 0.0673803514531669
21 0.0679395802444492
22 0.0685122292182208
23 0.0690988392995802
24 0.0696999837353585
25 0.0703162706029132
26 0.0709483455589224
27 0.0715968948557011
28 0.0722626486562573
29 0.0729463846835752
30 0.0736489322445527
31 0.074371176674755
32 0.0751140642568134
33 0.0758786076730791
34 0.0766658920622429
35 0.0774770817603043
36 0.078313427818839
37 0.0791762764083356
38 0.0800670782319305
39 0.0809873990957277
40 0.0819389318067569
41 0.0829235095993824
42 0.0839431213267145
43 0.0849999286966782
44 0.0860962858845819
45 0.087234761917496
46 0.0884181663032788
47 0.0896495784722428
48 0.0909323817168151
49 0.0922703024600544
50 0.0936674558652687
51 0.0951283990263765
52 0.0966581932654586
53 0.0982624774279804
54 0.0999475545313159
55 0.101720494720774
56 0.103589258263365
57 0.105562843323681
58 0.107651464602716
59 0.109866770697544
60 0.112222110425403
61 0.114732861591079
62 0.117416840112273
63 0.120294813570885
64 0.123391151901604
65 0.126734660236425
66 0.130359656705904
67 0.134307384101661
68 0.138627883290402
69 0.143382515600167
70 0.148647413568588
71 0.154518285899095
72 0.161117241059096
73 0.168602693495738
74 0.177184106325362
75 0.187144556918401
76 0.198876399592291
77 0.212939731231072
78 0.230162368559797
79 0.251819320249649
80 0.279973365469553
81 0.318162442434067
82 0.372874510226094
83 0.456810343391301
84 0.594966359033851
85 0.816938146074972
86 0.997989926648266
87 0.998286372252348
88 0.99851830768603
89 0.998701791841688
90 0.99884984017516
91 0.998969817760895
92 0.999068844073349
93 0.999150518577022
94 0.999219144335579
95 0.999276503166327
96 0.99932541001094
97 0.999366701286928
98 0.999402345243989
99 0.999432659780569
};
\end{axis}

\end{tikzpicture}%
	\label{fig:img_accuracy}
}
\caption{Multichannel blind image deconvolution. (a) True image. (b) Sparse channels. (c) Observations. (d) Recovered image using MGD. (e) Recovered image with sign and shift correction. (f) The accuracy as a function of iteration number. All images and channels in this figure are of the same size ($64 \times 64$).}%
\label{fig:img}%
\end{figure}

% ============================================

\subsection{Jointly Sparse Complex Gaussian Channels} \label{sec:calibration}

In this section, we examine the performance of MGD when Assumption (A1) is not satisfied, and the channel model is extended as in Section \ref{sec:extension}. More specifically, we consider $f$ following a Gaussian distribution $CN(\bfz_{n\times 1}, I_{n})$, i.e., the real and imaginary parts are independent following $N(\bfz_{n\times 1}, I_{n} / 2)$. And we consider $\{x_i\}_{i=1}^N$ that are:
\begin{itemize}%[leftmargin=5mm]
	\itemsep0em
	\item[$\circ$] \textbf{Jointly $s$-sparse:} The joint support of $\{x_i\}_{i=1}^N$ is chosen uniformly at random on $[n]$.
	\item[$\circ$] \textbf{Complex Gaussian:} The nonzero entries of $\{x_i\}_{i=1}^N$ follow a complex Gaussian distribution $CN(0, 1)$.
\end{itemize}

We compare MGD (with random initialization) with three blind calibration algorithms that solve MSBD in the frequency domain: truncated power iteration \cite{li2017blind} (initialized with $f^{(0)}=e_1$ and $x_i^{(0)}=0$), an off-the-grid algebraic method \cite{wylie1993self} (simplified from \cite{paulraj1985direction}), and an off-the-grid optimization approach \cite{eldar2018sensor}. 

We fix $n=128$, and run experiments for $N=16, 32, 48, \cdots, 128$, and $s = 2,4,6,\dots,16$. We use $f$ and $\hat{f}$ to denote the true signal and the recovered signal, respectively. We say the recovery is successful if\footnote{A perfect recovery $\hat{f}$ is a scaled shifted version of $f$, for which $\calF^{-1}\bigl[\calF(f) \odot \calF(\hat{f})^{\odot -1}\bigr]$ is a scaled shifted Kronecker delta.}
\begin{align}
\frac{ \norm{\calF^{-1}\bigl[\calF(f) \odot \calF(\hat{f})^{\odot -1}\bigr]}_\infty }{ \norm{\calF^{-1}\bigl[\calF(f) \odot \calF(\hat{f})^{\odot -1}\bigr]} } > 0.7. \label{eq:success}
\end{align}

By the phase transitions in Figure \ref{fig:alg_compare}, MGD and truncated power iteration are successful when $N$ is large and $s$ is small. Although truncated power iteration achieves higher success rates when both $N$ and $s$ are small, it fails for $s>8$ even with a large $N$. On the other hand, MGD can recover channels with $s=16$ when $N\geq 80$.\footnote{By our theoretical prediction, MGD can succeed for $s = 40 < \frac{n}{3}$ provided that we have a sufficiently large $N$.} In comparison, the off-the-grid methods are based on the properties of the covariance matrix $\frac{1}{N} \sum_{i=1}^N y_i y_i^\Herm$, and require larger $N$ (than the first two algorithms) to achieve high success rates.
 
\begin{figure}[htbp]%
\centering
\subfigure[]{
	% This file was created by matplotlib2tikz v0.6.16.
\begin{tikzpicture}[scale=0.6]

\begin{axis}[
width=2.5in,
height=2.5in,
xmin=-0.5, xmax=7.5,
ymin=-0.5, ymax=7.5,
tick align=outside,
tick pos=left,
xtick={1,3,5,7},
ytick={1,3,5,7},
xticklabels={4,8,12,16},
yticklabels={32,64,96,128},
xlabel={$s$},
ylabel={$N$},
x grid style={white!69.01960784313725!black},
y grid style={white!69.01960784313725!black}
]
\addplot graphics [includegraphics cmd=\pgfimage,xmin=-0.5, xmax=7.5, ymin=7.5, ymax=-0.5] {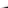};
\end{axis}

\end{tikzpicture}
	\label{fig:alg_compare_1}
}~
\subfigure[]{
	% This file was created by matplotlib2tikz v0.6.16.
\begin{tikzpicture}[scale=0.6]

\begin{axis}[
width=2.5in,
height=2.5in,
xmin=-0.5, xmax=7.5,
ymin=-0.5, ymax=7.5,
tick align=outside,
tick pos=left,
xtick={1,3,5,7},
ytick={1,3,5,7},
xticklabels={4,8,12,16},
yticklabels={32,64,96,128},
xlabel={$s$},
ylabel={$N$},
x grid style={white!69.01960784313725!black},
y grid style={white!69.01960784313725!black}
]
\addplot graphics [includegraphics cmd=\pgfimage,xmin=-0.5, xmax=7.5, ymin=7.5, ymax=-0.5] {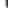};
\end{axis}

\end{tikzpicture}
	\label{fig:alg_compare_2}
}\\
\subfigure[]{
	% This file was created by matplotlib2tikz v0.6.16.
\begin{tikzpicture}[scale=0.6]

\begin{axis}[
width=2.5in,
height=2.5in,
xmin=-0.5, xmax=7.5,
ymin=-0.5, ymax=7.5,
tick align=outside,
tick pos=left,
xtick={1,3,5,7},
ytick={1,3,5,7},
xticklabels={4,8,12,16},
yticklabels={32,64,96,128},
xlabel={$s$},
ylabel={$N$},
x grid style={white!69.01960784313725!black},
y grid style={white!69.01960784313725!black}
]
\addplot graphics [includegraphics cmd=\pgfimage,xmin=-0.5, xmax=7.5, ymin=7.5, ymax=-0.5] {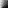};
\end{axis}

\end{tikzpicture}
	\label{fig:alg_compare_3}
}~
\subfigure[]{
	% This file was created by matplotlib2tikz v0.6.16.
\begin{tikzpicture}[scale=0.6]

\begin{axis}[
width=2.5in,
height=2.5in,
xmin=-0.5, xmax=7.5,
ymin=-0.5, ymax=7.5,
tick align=outside,
tick pos=left,
xtick={1,3,5,7},
ytick={1,3,5,7},
xticklabels={4,8,12,16},
yticklabels={32,64,96,128},
xlabel={$s$},
ylabel={$N$},
x grid style={white!69.01960784313725!black},
y grid style={white!69.01960784313725!black}
]
\addplot graphics [includegraphics cmd=\pgfimage,xmin=-0.5, xmax=7.5, ymin=7.5, ymax=-0.5] {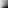};
\end{axis}

\end{tikzpicture}
	\label{fig:alg_compare_4}
}
\caption{Empirical phase transition of $N$ versus $s$, given that $n=128$. (a) MGD. (b) Truncated power iteration \cite{li2017blind}. (c) Off-the-grid algebraic method \cite{wylie1993self}. (d) Off-the-grid optimization approach \cite{eldar2018sensor}. }%
\label{fig:alg_compare}%
\end{figure}

% ============================================

\subsection{MSBD with a Linear Convolution Model}

In this section, we empirically study MSBD with a linear convolution model. Suppose the observations $y_i = x_i' * f' \in\bbR^n$ ($i=1,2,\dots,N$) are linear convolutions of $s$-sparse channels $x'_i \in \bbR^m$ and a signal $f'\in \bbR^{n-m+1}$. Let $x_i\in\bbR^n$ and $f\in\bbR^n$ denote the zero-padded versions of $x_i'$ and $f$. Then
\[
y_i = x_i' * f' = x_i \circledast f.
\]
In this section, we show that one can solve for $f$ and $x_i$ using the optimization formulation (P1) and MGD, without knowledge of the length $m$ of the channels. 

We compare our approach to the subspace method based on cross convolution \cite{xu1995least}, which solves for the concatenation of the channels as a null vector of a structured matrix. For fairness, we also compare to an alternative method that takes advantage of the sparsity of the channels, and finds a sparse null vector of the same structured matrix as in \cite{xu1995least}, using truncated power iteration \cite{yuan2013truncated,li2017identifiability}.\footnote{For an example of finding sparse null vectors using truncated power iteration, we refer the readers to our previous paper \cite[Section II]{li2017identifiability}.}

In our experiments, we synthesize $f'$ using a random Gaussian vector following $N(\bfz_{(n-m+1)\times 1}, I_{n-m+1})$. We synthesize $s$-sparse channels $x_i$ such that the support is chosen uniformly at random, and the nonzero entries are independent following $N(0,1)$. We denote the zero-padded versions of the true signal and the recovered signal by $f$ and $\hat{f}$, and declare success if \eqref{eq:success} is satisfied. We study the empirical success rates of our method and the competing methods in three experiments:
\begin{itemize}%[leftmargin=5mm]
	\itemsep0em
	\item[$\circ$] $N$ versus $s$, given that $n=128$ and $m=64$.
	\item[$\circ$] $N$ versus $m$, given that $n=128$ and $s=4$.
	\item[$\circ$] $N$ versus $n$, given that $m=64$ and $s=4$.
\end{itemize}

The phase transitions in Figure \ref{fig:mvs} show that our MGD method consistently has higher success rates than the competing methods based on cross convolution. The subspace method and the truncated power iteration method are only successful when $m$ is small compared to $n$, while our method is successful for a large range of $m$ and $n$. The sparsity prior exploited by truncated power iteration improves the success rate over the subspace method, but only when the sparsity level $s$ is small compared to $m$. In comparison, our method, given a sufficiently large number $N$ of channels, can recover channels with a much larger $s$.

\begin{figure}[htbp]%
\centering
\subfigure[]{
	% This file was created by matplotlib2tikz v0.6.16.
\begin{tikzpicture}[scale=0.6]

\begin{axis}[
width=2.5in,
height=2.5in,
xmin=-0.5, xmax=7.5,
ymin=-0.5, ymax=7.5,
tick align=outside,
tick pos=left,
xtick={1,3,5,7},
ytick={1,3,5,7},
xticklabels={4,8,12,16},
yticklabels={16,32,48,64},
xlabel={$s$},
ylabel={$N$},
x grid style={lightgray!92.02614379084967!black},
y grid style={lightgray!92.02614379084967!black}
]
\addplot graphics [includegraphics cmd=\pgfimage,xmin=-0.5, xmax=7.5, ymin=7.5, ymax=-0.5] {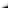};
\end{axis}

\end{tikzpicture}
	\label{fig:mvs_N_s_1}
}~
\subfigure[]{
	% This file was created by matplotlib2tikz v0.6.16.
\begin{tikzpicture}[scale=0.6]

\begin{axis}[
width=2.5in,
height=2.5in,
xmin=-0.5, xmax=7.5,
ymin=-0.5, ymax=7.5,
tick align=outside,
tick pos=left,
xtick={1,3,5,7},
ytick={1,3,5,7},
xticklabels={4,8,12,16},
yticklabels={16,32,48,64},
xlabel={$s$},
ylabel={$N$},
x grid style={lightgray!92.02614379084967!black},
y grid style={lightgray!92.02614379084967!black}
]
\addplot graphics [includegraphics cmd=\pgfimage,xmin=-0.5, xmax=7.5, ymin=7.5, ymax=-0.5] {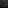};
\end{axis}

\end{tikzpicture}
	\label{fig:mvs_N_s_2}
}~
\subfigure[]{
	% This file was created by matplotlib2tikz v0.6.16.
\begin{tikzpicture}[scale=0.6]

\begin{axis}[
width=2.5in,
height=2.5in,
xmin=-0.5, xmax=7.5,
ymin=-0.5, ymax=7.5,
tick align=outside,
tick pos=left,
xtick={1,3,5,7},
ytick={1,3,5,7},
xticklabels={4,8,12,16},
yticklabels={16,32,48,64},
xlabel={$s$},
ylabel={$N$},
x grid style={lightgray!92.02614379084967!black},
y grid style={lightgray!92.02614379084967!black}
]
\addplot graphics [includegraphics cmd=\pgfimage,xmin=-0.5, xmax=7.5, ymin=7.5, ymax=-0.5] {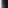};
\end{axis}

\end{tikzpicture}
	\label{fig:mvs_N_s_3}
}
\\
\subfigure[]{
	% This file was created by matplotlib2tikz v0.6.16.
\begin{tikzpicture}[scale=0.6]

\begin{axis}[
width=2.5in,
height=2.5in,
xmin=-0.5, xmax=7.5,
ymin=-0.5, ymax=7.5,
tick align=outside,
tick pos=left,
xtick={1,3,5,7},
ytick={1,3,5,7},
xticklabels={16,32,48,64},
yticklabels={16,32,48,64},
xlabel={$m$},
ylabel={$N$},
x grid style={lightgray!92.02614379084967!black},
y grid style={lightgray!92.02614379084967!black}
]
\addplot graphics [includegraphics cmd=\pgfimage,xmin=-0.5, xmax=7.5, ymin=7.5, ymax=-0.5] {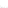};
\end{axis}

\end{tikzpicture}
	\label{fig:mvs_N_m_1}
}~
\subfigure[]{
	% This file was created by matplotlib2tikz v0.6.16.
\begin{tikzpicture}[scale=0.6]

\begin{axis}[
width=2.5in,
height=2.5in,
xmin=-0.5, xmax=7.5,
ymin=-0.5, ymax=7.5,
tick align=outside,
tick pos=left,
xtick={1,3,5,7},
ytick={1,3,5,7},
xticklabels={16,32,48,64},
yticklabels={16,32,48,64},
xlabel={$m$},
ylabel={$N$},
x grid style={lightgray!92.02614379084967!black},
y grid style={lightgray!92.02614379084967!black}
]
\addplot graphics [includegraphics cmd=\pgfimage,xmin=-0.5, xmax=7.5, ymin=7.5, ymax=-0.5] {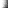};
\end{axis}

\end{tikzpicture}
	\label{fig:mvs_N_m_2}
}~
\subfigure[]{
	% This file was created by matplotlib2tikz v0.6.16.
\begin{tikzpicture}[scale=0.6]

\begin{axis}[
width=2.5in,
height=2.5in,
xmin=-0.5, xmax=7.5,
ymin=-0.5, ymax=7.5,
tick align=outside,
tick pos=left,
xtick={1,3,5,7},
ytick={1,3,5,7},
xticklabels={16,32,48,64},
yticklabels={16,32,48,64},
xlabel={$m$},
ylabel={$N$},
x grid style={lightgray!92.02614379084967!black},
y grid style={lightgray!92.02614379084967!black}
]
\addplot graphics [includegraphics cmd=\pgfimage,xmin=-0.5, xmax=7.5, ymin=7.5, ymax=-0.5] {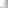};
\end{axis}

\end{tikzpicture}
	\label{fig:mvs_N_m_3}
}
\\
\subfigure[]{
	% This file was created by matplotlib2tikz v0.6.16.
\begin{tikzpicture}[scale=0.6]

\begin{axis}[
width=2.5in,
height=2.5in,
xmin=-0.5, xmax=7.5,
ymin=-0.5, ymax=7.5,
tick align=outside,
tick pos=left,
xtick={1,3,5,7},
ytick={1,3,5,7},
xticklabels={80,96,112,128},
yticklabels={16,32,48,64},
xlabel={$n$},
ylabel={$N$},
x grid style={lightgray!92.02614379084967!black},
y grid style={lightgray!92.02614379084967!black}
]
\addplot graphics [includegraphics cmd=\pgfimage,xmin=-0.5, xmax=7.5, ymin=7.5, ymax=-0.5] {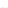};
\end{axis}

\end{tikzpicture}
	\label{fig:mvs_N_n_1}
}~
\subfigure[]{
	% This file was created by matplotlib2tikz v0.6.16.
\begin{tikzpicture}[scale=0.6]

\begin{axis}[
width=2.5in,
height=2.5in,
xmin=-0.5, xmax=7.5,
ymin=-0.5, ymax=7.5,
tick align=outside,
tick pos=left,
xtick={1,3,5,7},
ytick={1,3,5,7},
xticklabels={80,96,112,128},
yticklabels={16,32,48,64},
xlabel={$n$},
ylabel={$N$},
x grid style={lightgray!92.02614379084967!black},
y grid style={lightgray!92.02614379084967!black}
]
\addplot graphics [includegraphics cmd=\pgfimage,xmin=-0.5, xmax=7.5, ymin=7.5, ymax=-0.5] {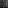};
\end{axis}

\end{tikzpicture}
	\label{fig:mvs_N_n_2}
}~
\subfigure[]{
	% This file was created by matplotlib2tikz v0.6.16.
\begin{tikzpicture}[scale=0.6]

\begin{axis}[
width=2.5in,
height=2.5in,
xmin=-0.5, xmax=7.5,
ymin=-0.5, ymax=7.5,
tick align=outside,
tick pos=left,
xtick={1,3,5,7},
ytick={1,3,5,7},
xticklabels={80,96,112,128},
yticklabels={16,32,48,64},
xlabel={$n$},
ylabel={$N$},
x grid style={lightgray!92.02614379084967!black},
y grid style={lightgray!92.02614379084967!black}
]
\addplot graphics [includegraphics cmd=\pgfimage,xmin=-0.5, xmax=7.5, ymin=7.5, ymax=-0.5] {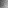};
\end{axis}

\end{tikzpicture}
	\label{fig:mvs_N_n_3}
}
\caption{Empirical phase transition of MSBD with a linear convolution model. The first row shows the phase transitions of $N$ versus $s$. The second row shows the phase transitions of $N$ versus $m$. The third row shows the phase transitions of $N$ versus $n$. 
The first column shows the results for MGD. The second column shows the results for the subspace method \cite{xu1995least}. The third column shows are the results for truncated power iteration.}%
\label{fig:mvs}%
\end{figure}

% ============================================

\subsection{Super-Resolution Fluorescence Microscopy} \label{sec:micro}

MGD can be applied to deconvolution of time resolved fluorescence microscopy images. The goal is to recover sharp images $x_i$'s from observations $y_i$'s that are blurred by an unknown PSF $f$. 

We use a publicly available microtubule dataset \cite{mukamel2012statistical}, which contains $N = 626$ images (Figure \ref{fig:micro_true1}). Since fluorophores are are turned on and off stochastically, the images $x_i$'s are random sparse samples of the $64\times 64$ microtubule image (Figure \ref{fig:micro_true2}). The observations $y_i$'s (Figure \ref{fig:micro_blur1}, \ref{fig:micro_blur2}) are synthesized by circular convolutions with the PSF in Figure \ref{fig:micro_k_true}. The recovered images (Figure \ref{fig:micro_recon1}, \ref{fig:micro_recon2}) and kernel (Figure \ref{fig:micro_k_recon}) clearly demonstrate the effectiveness of our approach in this setting.

Blind deconvolution is less sensitive to instrument calibration error than non-blind deconvolution. If the PSF used in a non-blind deconvolution method fails to account for certain optic aberration, the resulting images may suffer from spurious artifacts. For example, if we use a miscalibrated PSF (Figure \ref{fig:micro_k_wrong}) in non-blind image reconstruction using FISTA \cite{beck2009fast}, then the recovered images (Figure \ref{fig:micro_nonblind1}, \ref{fig:micro_nonblind2}) suffer from serious spurious artifacts.

\begin{figure}[htbp]%
\centering
\subfigure[]{
	\includegraphics[width=0.2\columnwidth]{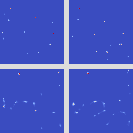}
	\label{fig:micro_true1}
}~
\subfigure[]{
	\includegraphics[width=0.2\columnwidth]{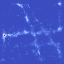}
	\label{fig:micro_true2}
}~
\subfigure[]{
	\includegraphics[width=0.2\columnwidth]{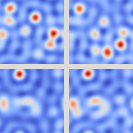}
	\label{fig:micro_blur1}
}~
\subfigure[]{
	\includegraphics[width=0.2\columnwidth]{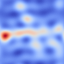}
	\label{fig:micro_blur2}
}\\
\subfigure[]{
	\includegraphics[width=0.2\columnwidth]{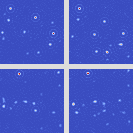}
	\label{fig:micro_recon1}
}~
\subfigure[]{
	\includegraphics[width=0.2\columnwidth]{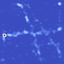}
	\label{fig:micro_recon2}
}
\subfigure[]{
	\includegraphics[width=0.2\columnwidth]{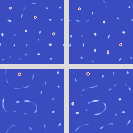}
	\label{fig:micro_nonblind1}
}~
\subfigure[]{
	\includegraphics[width=0.2\columnwidth]{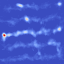}
	\label{fig:micro_nonblind2}
}\\
\subfigure[]{
	\includegraphics[width=0.3\columnwidth]{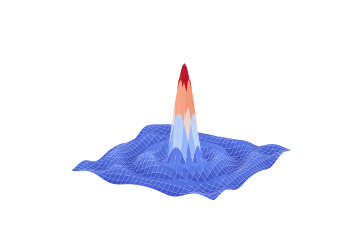}
	\label{fig:micro_k_true}
}~
\subfigure[]{
	\includegraphics[width=0.3\columnwidth]{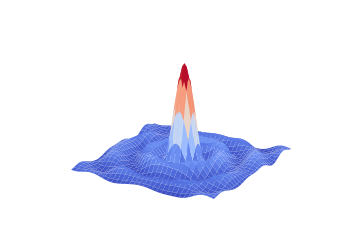}
	\label{fig:micro_k_recon}
}~
\subfigure[]{
	\includegraphics[width=0.3\columnwidth]{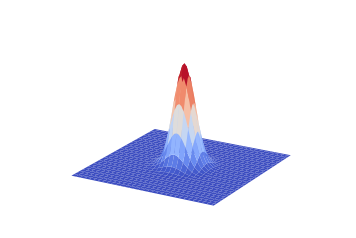}
	\label{fig:micro_k_wrong}
}
\caption{Super-resolution fluorescence microscopy experiment using MGD. (a) True images. (b) Average of true images. (c) Observed images. (d) Average of observed images. (e) Recovered images using blind deconvolution. (f) Average of recovered images using blind deconvolution. (g) Recovered images using non-blind deconvolution and a miscalibrated PSF. (h) Average of recovered images using non-blind deconvolution and a miscalibrated PSF.
(i) True PSF. (j) Recovered PSF using blind deconvolution. (k) Miscalibrated PSF used in non-blind deconvolution. All images in this figure are of the same size ($64 \times 64$).}%
\label{fig:micro}%
\end{figure}

%%%%%%%%%%%%%%%%%%% Conclusion %%%%%%%%%%%%%%%%%%%%%
\section{Conclusion} \label{sec:conclusion}
In this paper, we study the geometric structure of multichannel sparse blind deconvolution over the unit sphere. Our theoretical analysis reveals that local minima of a sparsity promoting smooth objective function correspond to signed shifted version of the ground truth, and saddle points have strictly negative curvatures. Thanks to the favorable geometric properties of the objective, we can simultaneously recover the unknown signal and unknown channels from convolutional measurements using (perturbed) manifold gradient descent.

In practice, many convolutional measurement models are subsampled in the spatial domain (e.g., image super-resolution) or in the frequency domain (e.g., radio astronomy). Studying the effect of subsampling on the geometric structure of multichannel sparse blind deconvolution is an interesting problem for future work.

%%%%%%%%%%%%%%%%%%%%%%%%%%%%%%%%%%%%%%%%%%%%%%%%%%%%%%%%%%%%%%%%%%%%%

\appendices

\section{Proofs for Section \ref{sec:geometry}} \label{app:geometry}

\subsection{Derivation of \eqref{eq:gradient} and \eqref{eq:Hessian}} \label{app:g_n_H}
Recall that
\begin{align*}
\nabla_{L''}(h) = \frac{1}{N}\sum_{i=1}^N \nabla''_i, \\
H_{L''}(h) = \frac{1}{N}\sum_{i=1}^N H''_i, 
\end{align*}
where $\nabla''_i \coloneqq C_{x_i}^\T \nabla_\phi(C_{x_i} h)$, and $H''_i = C_{x_i}^\T H_\phi(C_{x_i} h) C_{x_i}$.

For the Bernoulli-Rademacher model in (A1), we have
\begin{align*}
\bbE \nabla''_{i(j)}
& = -\bbE \sum_{s=1}^n x_{i(1+s-j)} \Bigl(\sum_{t=1}^n x_{i(1+s-t)}h_{(t)}\Bigr)^3 \\
& = -n \Bigl(\theta h_{(j)}^3 + 3 \theta^2 h_{(j)} \sum_{\ell\neq j} h_{(\ell)}^2 \Bigr) \\
& = -n \theta (1-3\theta) h_{(j)}^3 - 3 n \theta^2 h_{(j)},
\end{align*}
where the last line uses the fact that $\sum_{j=1}^n h_{(j)}^2 = \norm{h} = 1$.
Therefore, the gradient and the Riemannian gradient are
\begin{align*}
%& \bbE \nabla_{L''}(h)_{(j)} = - n\theta(1-3\theta)h_{(j)}^3 - 3n\theta^2 h_{(j)},  \\
& \bbE \nabla_{L''}(h) = - n\theta(1-3\theta)h^{\odot 3} - 3n\theta^2 h,  \\
& \bbE \hn_{L''}(h) = P_{h^\perp} \bbE \nabla_{L''}(h) = n\theta(1-3\theta)(\norm{h}_4^4 \cdot h - h^{\odot 3}). 
\end{align*}

Similarly, we have
\begin{align*}
\bbE H''_{i(jk)} 
& = -3 \bbE \sum_{s=1}^n x_{i(1+s-j)}x_{i(1+s-k)} \Bigl(\sum_{t=1}^n x_{i(1+s-t)}h_{(t)}\Bigr)^2 \\
& = -3n \times \begin{cases}
\theta h_{(j)}^2 + \theta^2 \sum_{\ell\neq j} h_{(\ell)}^2 & \text{if $j = k$} \\
2\theta^2 h_{(j)}h_{(k)} & \text{if $j \neq k$}
\end{cases} \\
& = -3n \bigl[ \theta^2 \delta_{jk} + \theta(1-3\theta)h_{(j)}^2 \delta_{jk} + 2\theta^2 h_{(j)}h_{(k)} \bigr].
\end{align*}
The Hessian and the Riemannian Hessian are
\begin{align*}
%& \bbE H_{L''}(h)_{(jk)} = -3n\bigl[ \theta^2 \delta_{jk} + \theta(1-3\theta)h_{(j)}^2 \delta_{jk} + 2\theta^2 h_{(j)}h_{(k)} \bigr], \\
& \bbE H_{L''}(h) = -3n \bigl[ \theta^2 I + \theta(1-3\theta) \diag(h^{\odot 2}) + 2\theta^2 hh^\T \bigr],  \\
& \bbE \hH_{L''}(h) = P_{h^\perp} \bbE H_{L''}(h) P_{h^\perp} - h^\T \bbE \nabla_{L''}(h) \cdot P_{h^\perp}  \\
& \qquad = n\theta(1-3\theta) \bigl[\norm{h}_4^4 \cdot I + 2\norm{h}_4^4 \cdot hh^\T - 3\cdot \diag(h^{\odot 2}) \bigr].  
\end{align*}

\subsection{Proofs of Lemmas in Section \ref{sec:geometry}}

\begin{IEEEproof}[Proof of Lemma \ref{lem:partition}]
We first investigate the Riemannian Hessian at points in $\calH''_1$ and $\calH''_2$. Without loss of generality, we consider points close to the representative stationary point $h_0 = [1/\sqrt{r},\dots,1/\sqrt{r}, 0,\dots, 0]$. We have
\begin{align*}
& |h_{(j)}^2 - 1/r| \leq \rho / r, \qquad \forall j \in \{1,2,\dots, r\}, \\
& h_{(j)}^2 \leq \rho / r, \qquad \forall j \in \{r+1, \dots, n\}, \\
& \sum_{j=r+1}^n h_{(j)}^2 = 1 - \sum_{j=1}^r h_{(j)}^2 \leq \rho.
\end{align*}
Therefore,
\begin{align}
\norm{h - h_0} \leq \sqrt{r \times \bigl(\frac{1-\sqrt{1-\rho}}{\sqrt{r}}\bigr)^2 + \rho} 
\leq \sqrt{2\rho},  \label{eq:neighbor_diff}
\end{align}
\begin{align}
\left\|\diag(h^{\odot 2}) - \frac{1}{r}
\begin{bmatrix}
I_{r} & \\
& \bfz_{(n-r)\times(n-r)}
\end{bmatrix}
\right\| \leq \frac{\rho}{r}, \label{eq:diag_diff}
\end{align}
and
\begin{align}
& \left\| hh^\T - \frac{1}{r}
\begin{bmatrix}
\bfo_{r\times r} & \\
& \bfz_{(n-r)\times(n-r)}
\end{bmatrix}
\right\| 
 \leq 2 \norm{h - h_0}
 \leq 2\sqrt{2\rho}. \label{eq:outer_diff}
\end{align}
We also bound $\norm{h}_4^4$ as follows:
\begin{align*}
\norm{h}_4^4 \leq r\times \frac{(1+\rho)^2}{r^2} + \min\Bigl\{ (n-r)\times \frac{\rho^2}{r^2},~ \frac{\rho^2}{n-r}\Bigr\} \leq \frac{1+2\rho+2\rho^2}{r},
\end{align*}
\begin{align*}
\norm{h}_4^4 \geq r\times \frac{(1-\rho)^2}{r^2} \geq \frac{1-2\rho+\rho^2}{r}.
\end{align*}
Since $\rho < 10^{-3} < 1/2$,
\begin{align}
\Bigl|\norm{h}_4^4 - \frac{1}{r} \Bigr| \leq \frac{3\rho}{r}. \label{eq:four_diff}
\end{align}

Next we obtain bounds on the Riemannian curvature of $\bbE L''$ at points $h \in \calH''_1$ or $h \in  \calH''_2$  by bounding its deviation from the Riemannian curvature at a corresponding stationary point $h_0$.
By \eqref{eq:diag_diff}, \eqref{eq:outer_diff}, \eqref{eq:four_diff}, and the expressions in \eqref{eq:Hessian}, \eqref{eq:Hessian_stationary}:
\begin{align}
& \norm{\bbE \hH_{L''}(h) - \bbE \hH_{L''}(h_0)}   \nonumber\\
& \leq n\theta(1-3\theta) \Bigl[ \frac{3\rho}{r} + 2 \times \frac{3\rho + 2\sqrt{2\rho}}{r} + 3 \times \frac{\rho}{r} \Bigr]   \nonumber\\
& = \frac{n\theta(1-3\theta)}{r} (12\rho + 4\sqrt{2\rho}).  \label{eq:Hessian_diff}
\end{align}
It follows that
\begin{align}
& \Bigl|\min_{\substack{z: \norm{z}=1\\z\perp h}} z^\T \bbE \hH_{L''}(h) z - \min_{\substack{z: \norm{z}=1\\z\perp h_0}} z^\T \bbE \hH_{L''}(h_0) z \Bigr|   \nonumber\\
& \leq \Bigl|\min_{\substack{z: \norm{z}=1\\z\perp h}} z^\T  \bbE \hH_{L''}(h) z - \min_{\substack{z: \norm{z}=1\\z\perp h}} z^\T  \bbE \hH_{L''}(h_0) z \Bigr| + \Bigl|\min_{\substack{z: \norm{z}=1\\z\perp h}} z^\T  \bbE \hH_{L''}(h_0) z - \min_{\substack{z: \norm{z}=1\\z\perp h_0}} z^\T  \bbE \hH_{L''}(h_0) z \Bigr|   \nonumber\\
& \leq \norm{V^\T\bbE \hH_{L''}(h)V - V^\T\bbE \hH_{L''}(h_0)V} + \norm{V^\T\bbE \hH_{L''}(h_0)V - V_0^\T\bbE \hH_{L''}(h_0)V_0} \nonumber\\
& \leq \norm{\bbE \hH_{L''}(h) - \bbE \hH_{L''}(h_0)} + 2 \norm{\bbE \hH_{L''}(h_0)} \cdot \norm{V- V_0} \nonumber\\
& \leq \frac{n\theta(1-3\theta)}{r} (12\rho + 4\sqrt{2\rho}) + 2 \times \frac{2n\theta(1-3\theta)}{r} \times \sqrt{2\rho}  \nonumber\\
& = \frac{n\theta(1-3\theta)}{r} (12\rho + 8\sqrt{2\rho})  \nonumber\\
& \leq \frac{n\theta(1-3\theta)(24\sqrt{\rho})}{r},  \label{eq:neighborhood}
\end{align}
where $V,\,V_0\in\bbR^{n\times (n-1)}$ satisfy: (I) the columns of $V$ (resp. $V_0$) form an orthonormal basis for the tangent space at $h$ (resp. $h_0$); (II) $\norm{V-V_0} \leq\sqrt{2\rho}$. We construct $V$ and $V_0$ as follows, for the non-trivial case where $h \neq h_0$.   
Suppose the columns of $V_\cap\in\bbR^{n\times (n-2)}$ form an orthonormal basis for the intersection of the tangent spaces at $h$ and at $h_0$. Let $c\coloneqq \langle h, h_0\rangle < 1$, and let $h' \coloneqq \frac{1}{\sqrt{1-c^2}}(h_0 - c h)$ and $h_0'\coloneqq \frac{1}{\sqrt{1-c^2}}(ch_0 - h)$. It is easy to verify that $V \coloneqq [V_\cap,\, h']$ and $V_0 \coloneqq [V_\cap,\, h_0']$ satisfy (I). To verify (II), we have $\norm{V-V_0} = \norm{h'-h_0'} = \frac{1-c}{\sqrt{1-c^2}} \norm{h+h_0} = \norm{h-h_0} \leq \sqrt{2\rho}$.

Positive definiteness \eqref{eq:PD2} follows from \eqref{eq:PD} and \eqref{eq:neighborhood}. Negative curvature \eqref{eq:NC2} follows from \eqref{eq:NC} and \eqref{eq:neighborhood}.

Next, we prove contrapositive of \eqref{eq:NG2}, i.e., suppose $\norm{\bbE \hn_{L''}(h)} < {\theta(1-3\theta)\rho^2}/{n}$ for some $h\in S^{n-1}$, then we show $h \in \calH''_1\cup \calH''_2$. First, it follows from $\norm{\bbE \hn_{L''}(h)} < {\theta(1-3\theta)\rho^2}/{n}$, and the expression in \eqref{eq:gradient_norm}, that for all $j, k\in[n]$,
\[
h_{(j)}^2h_{(k)}^2(h_{(j)}^2 - h_{(k)}^2)^2 < \frac{\rho^4}{n^4}.
\]
As a result, $|h_{(j)}^2 - h_{(k)}^2| < \rho / n$ if $h_{(j)}^2 \geq \rho / n$ and $h_{(k)}^2 \geq \rho / n$. 

Let $\Omega\coloneqq \{j: h_{(j)}^2 \geq \rho / n\} \subset[n]$, and $r\coloneqq |\Omega|$. 
Then
\begin{align}
h_{(j)}^2 < \rho/n \leq \rho / r, \qquad \forall j \notin \Omega, \label{eq:small_val}
\end{align}
and 
\begin{align}
1-(n-r)\cdot \frac{\rho}{n} < \sum_{j\in\Omega} h_{(j)}^2 \leq 1. \label{eq:sum_large}
\end{align}
In addition,
$|h_{(j)}^2 - h_{(k)}^2| < \rho / n$ for $j, k\in\Omega$. Therefore, for $k\in \Omega$, $h_{(k)}^2$ is close to the average $\frac{1}{r} \sum_{j\in\Omega} h_{(j)}^2$:
\begin{align}
\Bigl| h_{(k)}^2 - \frac{1}{r} \sum_{j\in\Omega} h_{(j)}^2 \Bigr| < \rho / n, \qquad \forall k\in \Omega \label{eq:diff_average}
\end{align}
By \eqref{eq:sum_large} and \eqref{eq:diff_average}, for $k\in \Omega$:
\begin{align*}
h_{(k)}^2 \leq \frac{1}{r} + \frac{\rho}{n} \leq \frac{1+\rho}{r},
\end{align*}
\begin{align*}
h_{(k)}^2 \geq \frac{1-(n-r)\cdot \frac{\rho}{n}}{r} - \frac{\rho}{n} = \frac{1-\rho}{r}.
\end{align*}
Therefore,
\begin{align}
\Bigl| h_{(k)}^2 - \frac{1}{r}\Bigr| \leq \frac{\rho}{r}\qquad \forall k\in\Omega,  \label{eq:large_val}
\end{align}
It follows from \eqref{eq:small_val} and \eqref{eq:large_val} that $h$ is in the $(\rho, r)$-neighborhood of a stationary point $h_0$, where $h_{0(j)} = 1/\sqrt{r}$ if $j\in\Omega$ and $h_{0(j)} = 0$ if $j\notin\Omega$. Clearly, such an $h$ belongs to $\calH''_1\cup\calH''_2$. By contraposition, any point $h\in \calH''_3 = S^{n-1}\backslash(\calH''_1\cup\calH''_2)$ satisfies \eqref{eq:NG2}.
\end{IEEEproof}

%%%%%%%%%%%%%%%%%%%%%%%%%%%%%%%%%%%%%%%%%%%%%%%%%%%%%%%%%%%%%%%%%%%%%

\begin{IEEEproof}[Proof of Lemma \ref{lem:concentration}]
For any given $h\in S^{n-1}$ one can bound the deviation of the gradient (or Hessian) from its mean using matrix Bernstein inequality \cite{tropp2011user}. Let $S_\epsilon$ be an $\epsilon$-net of $S^{n-1}$. Then $|S_\epsilon|\leq (3/\epsilon)^n$ \cite[Lemma 9.5]{ledoux2013probability}. We can then bound the deviation over $S^{n-1}$ by a union bound over $S_\epsilon$.

Define $\nabla''_i \coloneqq C_{x_i}^\T \nabla_\phi(C_{x_i} h)$, and $H''_i = C_{x_i}^\T H_\phi(C_{x_i} h) C_{x_i}$. 
For the Bernoulli-Rademacher model in (A1), we have $|x_{i(j)}| \leq 1$. Therefore,
\begin{align*}
\bigl| \nabla''_{i(j)} \bigr| 
& = \Bigl| \sum_{s=1}^n x_{i(1+s-j)} \Bigl(\sum_{t=1}^n x_{i(1+s-t)}h_{(t)}\Bigr)^3 \Bigr| \\
& \leq n \Bigl(\sum_{t=1}^n |h_{(t)}|\Bigr)^3 \\
& \leq n^2\sqrt{n},
\end{align*}
\begin{align*}
\bigl| H''_{i(jk)} \bigr|
& = \Bigl| 3 \sum_{s=1}^n x_{i(1+s-j)}x_{i(1+s-k)} \Bigl(\sum_{t=1}^n x_{i(1+s-t)}h_{(t)}\Bigr)^2 \Bigr| \\
& \leq 3n \Bigl(\sum_{t=1}^n |h_{(t)}|\Bigr)^2 \\
& \leq 3n^2.
\end{align*}
It follows that $\norm{\nabla''_i} \leq n^3$, and $\norm{H''_i} \leq \norm{H''_i}_\rmF \leq 3n^3$.

Our goal is to bound the following average of independent random terms with zero mean:
\[
\nabla_{L''}(h) - \bbE\nabla_{L''}(h) = \frac{1}{N} \sum_{i=1}^N \bigl( \nabla''_i - \bbE\nabla''_i\bigr).
\]
\[
H_{L''}(h) - \bbE H_{L''}(h) = \frac{1}{N} \sum_{i=1}^N \bigl( H''_i - \bbE H''_i\bigr).
\]
Since $\norm{\nabla''_i} \leq n^3$, we have 
\begin{align*}
& \norm{\nabla''_i - \bbE\nabla''_i} \leq 2n^3, \\
&  \sum_{i=1}^N \bbE\norm{\nabla''_i - \bbE\nabla''_i}^2 = N (\bbE\norm{\nabla''_i}^2 - \norm{\bbE\nabla''_i}^2) \leq Nn^6, \\
& \Bigl\|  \sum_{i=1}^N \bbE(\nabla''_i - \bbE\nabla''_i)(\nabla''_i - \bbE\nabla''_i)^\T \Bigr\| \leq N (\bbE\norm{\nabla''_i}^2 + \norm{\bbE\nabla''_i}^2) \leq 2Nn^6. 
\end{align*}
By the rectangular version of the matrix Bernstein inequality \cite[Theorem 1.6]{tropp2011user}, and a union bound over $S_\epsilon$,
\begin{align}
& \bbP\Bigl[\sup_{h\in S_\epsilon}\norm{\nabla_{L''}(h) - \bbE\nabla_{L''}(h)} \leq \tau\Bigr]  \nonumber\\
& \geq 1 - \Bigl(\frac{3}{\epsilon}\Bigr)^n (n+1) \exp\Bigl(\frac{-N^2\tau^2/2}{2Nn^6 + 2n^3 N \tau / 3}\Bigr) \label{eq:deviation_gradient}
\end{align}
Similarly, since $\norm{H''_i}\leq 3n^3$, we have 
\begin{align*}
& \norm{H''_i - \bbE H''_i} \leq 6n^3, \\
& \Bigl\| \sum_{i=1}^N \bbE (H''_i - \bbE H''_i)^2 \Bigr\| \leq N \norm{\bbE H_i''^2 - (\bbE H''_i)^2} \leq 2N(3n^3)^2 = 18Nn^6. 
\end{align*}
By the symmetric version of the matrix Bernstein inequality \cite[Theorem 1.4]{tropp2011user}, and a union bound over $S_\epsilon$,
\begin{align}
& \bbP\Bigl[\sup_{h\in S_\epsilon}\norm{H_{L''}(h) - \bbE H_{L''}(h)} \leq \tau\Bigr]  \nonumber\\
& \geq 1 - \Bigl(\frac{3}{\epsilon}\Bigr)^n (2n) \exp\Bigl(\frac{-N^2\tau^2/2}{18Nn^6 + 6n^3 N \tau / 3}\Bigr) \label{eq:deviation_Hessian}
\end{align}
Choose $\tau = \frac{\theta(1-3\theta)\rho^2}{8n}$, and $\epsilon = \frac{\tau}{6n^3} = \frac{\theta(1-3\theta)\rho^2}{48n^4}$. By \eqref{eq:deviation_gradient} and \eqref{eq:deviation_Hessian}, there exist constants $c_2,\, c_2' > 0$ (depending only on $\theta$), such that: if $N > \frac{c_2 n^9}{\rho^4}\log \frac{n}{\rho}$, then with probability at least $1 - e ^{-c_2'n}$,
\[
\sup_{h\in S_\epsilon}\norm{\nabla_{L''}(h) - \bbE\nabla_{L''}(h)} \leq \tau = \frac{\theta(1-3\theta)\rho^2}{8n},
\]
\[
\sup_{h\in S_\epsilon}\norm{H_{L''}(h) - \bbE H_{L''}(h)} \leq \tau = \frac{\theta(1-3\theta)\rho^2}{8n}.
\]

To finish the proof, we extrapolate the concentration bounds over $S_\epsilon$ to all points in $S^{n-1}$. For any $h \in S^{n-1}$, there exists $h' \in S_\epsilon$ such that $\norm{h-h'} \leq \epsilon$. Furthermore, thanks to the Lipschitz continuity of the gradient and the Hessian,
\begin{align*}
& \norm{\nabla''_i(h) - \nabla''_i(h')} \\
& \leq \norm{C_{x_i}} \cdot \sqrt{n} (3\norm{x_i}^2) \cdot \norm{x_i} \norm{h-h'} \\
& \leq 3n^3 \epsilon, 
\end{align*} 
\begin{align*}
& \norm{H''_i(h) - H''_i(h')} \\
& \leq \norm{C_{x_i}}^2 \cdot (6\norm{x_i}) \cdot \norm{x_i} \norm{h-h'} \\
& \leq 6n^3 \epsilon, 
\end{align*}
where $3\norm{x_i}^2$ and $6\norm{x_i}$ are the Lipschitz constants of $(\cdot)^3$ and $3(\cdot)^2$ on the interval $[-\norm{x_i},\norm{x_i}]$. We also use the fact that $|x_{i(j)}| < 1$, hence $\norm{x_i} \leq \sqrt{n}$ and $\norm{C_{x_i}}\leq n$.
As a consequence,
\begin{align*}
& \sup_{h\in S^{n-1}}\norm{\nabla_{L''}(h) - \bbE\nabla_{L''}(h)} \\
& \leq \sup_{h\in S_\epsilon}\norm{\nabla_{L''}(h) - \bbE\nabla_{L''}(h)} + 2 \max_{i\in[N]}\sup_{\norm{h-h'}\leq \epsilon} \norm{\nabla''_i(h) - \nabla''_i(h')} \\
& \leq \tau + 6n^3 \epsilon = 2\tau = \frac{\theta(1-3\theta)\rho^2}{4n},
\end{align*}
\begin{align*}
& \sup_{h\in S^{n-1}} \norm{\hn_{L''}(h) - \bbE \hn_{L''}(h)} \\
& \leq \sup_{h\in S^{n-1}} \norm{\nabla_{L''}(h) - \bbE \nabla_{L''}(h)} \\
& \leq \frac{\theta(1-3\theta)\rho^2}{4n}.
\end{align*}
Similarly,
\begin{align*}
& \sup_{h\in S^{n-1}}\norm{H_{L''}(h) - \bbE H_{L''}(h)} \\
& \leq \sup_{h\in S_\epsilon}\norm{H_{L''}(h) - \bbE H_{L''}(h)} + 2 \max_{i\in[N]} \sup_{\norm{h-h'}\leq \epsilon} \norm{H''_i(h) - H''_i(h')} \\
& \leq \tau + 12n^3 \epsilon = 3\tau = \frac{3\theta(1-3\theta)\rho^2}{8n},
\end{align*}
\begin{align*}
& \sup_{h\in S^{n-1}}\norm{\hH_{L''}(h) - \bbE \hH_{L''}(h)} \\
& \leq \sup_{h\in S^{n-1}}\norm{H_{L''}(h) - \bbE H_{L''}(h)} + \sup_{h\in S^{n-1}}\norm{\nabla_{L''}(h) - \bbE \nabla_{L''}(h)} \\
& \leq \frac{\theta(1-3\theta)\rho^2}{n}.
\end{align*}
\end{IEEEproof}

%%%%%%%%%%%%%%%%%%%%%%%%%%%%%%%%%%%%%%%%%%%%%%%%%%%%%%%%%%%%%%%%%%%%%

\begin{IEEEproof}[Proof of Lemma \ref{lem:not_flat}]
We have $\bbE \frac{1}{\theta n N} \sum_{i=1}^N C_{x_i}^\T C_{x_i} = I$. 
We first bound $\norm{\frac{1}{\theta n N} \sum_{i=1}^N C_{x_i}^\T C_{x_i} - I}$ using the matrix Bernstein inequality. To this end, we bound the spectral norm of $\bbE (C_{x_i}^\T C_{x_i})^2$, the eigenvalues of which can be computed using the DFT of $x_i$. The eigenvalue corresponding to the $t$-th frequency satisfies
\begin{align*}
& \bbE \Bigl[\Bigl(\sum_{k=1}^n e^{-\sqrt{-1}(k-1)t/n}x_{i(k)}\Bigr) \Bigl(\sum_{k=1}^n e^{\sqrt{-1}(k-1)t/n}x_{i(k)}\Bigr) \Bigr]^2 \\
& = \bbE \Bigl(\sum_{k=1}^n x_{i(k)}^2 + \sum_{1\leq k < j\leq n} 2\cos((j-k)t/n) x_{i(j)}x_{i(k)} \Bigr)^2 \\
&  \leq n \theta + \frac{n(n-1)}{2} \times 4\theta^2 + n(n-1) \theta^2 \\
& = n\theta + 3n(n-1) \theta^2.
\end{align*}
Therefore,
\begin{align*}
& \Bigl\| \sum_{i=1}^N \bbE \Bigl(\frac{1}{\theta n} C_{x_i}^\T C_{x_i} - I \Bigr)^2 \Bigr\| \\& = N \Bigl\| \frac{1}{\theta^2 n^2} \bbE (C_{x_i}^\T C_{x_i})^2 - I \Bigr\| \\
& \leq \frac{N}{\theta^2 n^2} \norm{\bbE (C_{x_i}^\T C_{x_i})^2} + N \\
& \leq \frac{N}{\theta^2 n^2} (n\theta + 3n(n-1) \theta^2) + N \\
& \leq \frac{N}{\theta n} + 3N + N \\
& \leq 5N.
\end{align*}
We also have
\[
\norm{\frac{1}{\theta n} C_{x_i}^\T C_{x_i} - I} \leq \frac{1}{\theta n} \norm{C_{x_i}}^2 + 1 \leq \frac{n^2}{\theta n} + 1 \leq n^2 + 1.
\]
By the matrix Bernstein inequality \cite[Theorem 1.4]{tropp2011user},
\[
\bbP\Bigl[ \Bigl\| \frac{1}{\theta n N} \sum_{i=1}^N C_{x_i}^\T C_{x_i} - I \Bigr\| \leq \tau \Bigr] \geq 1 - 2n \exp\Bigl( \frac{-N^2 \tau^2 / 2}{5N + (n^2+1) N\tau / 3} \Bigr).
\]
Set $\tau = \frac{\theta(1-3\theta)\rho^2}{200n^4 \kappa^4}$. Then there exist constants $c_3,\, c_3' > 0$ (depending only on $\theta$) such that: if $N > \frac{c_3 n^8 \kappa^8}{\rho^4} \log n$, then with probability at least $1-n^{-c_3'}$,
\begin{align}
\Bigl\| \frac{1}{\theta n N} \sum_{i=1}^N C_{x_i}^\T C_{x_i} - I \Bigr\| \leq \frac{\theta(1-3\theta)\rho^2}{200n^4 \kappa^4}. \label{eq:near_identity}
\end{align}

Next, we bound $\norm{C_f R - C_f(C_f^\T C_f)^{-1/2}}$ (similar to the proofs of \cite[Lemma 15]{sun2017complete} and \cite[Lemma B.2]{sun2015complete}). Define $Q\coloneqq \frac{1}{\theta n N}\sum_{i=1}^N C_{x_i}^\T C_{x_i}$. Then
\begin{align}
& \norm{C_f R - C_f(C_f^\T C_f)^{-1/2}}  \nonumber\\
& = \norm{C_f (C_f^\T Q C_f)^{-1/2} - C_f(C_f^\T C_f)^{-1/2}}  \nonumber\\
& \leq \sigma_1(C_f) \cdot \norm{(C_f^\T Q C_f)^{-1/2} - (C_f^\T C_f)^{-1/2}}  \nonumber\\
& \leq \sigma_1(C_f) \frac{\norm{ (C_f^\T Q C_f)^{-1} - (C_f^\T C_f)^{-1} } }{\sigma_n\bigl((C_f^\T C_f)^{-1/2}\bigr)}  \label{eq:sqrt_bound}\\
& = \sigma_1^2(C_f) \norm{ (C_f^\T Q C_f)^{-1} - (C_f^\T C_f)^{-1} }  \nonumber\\
& \leq \frac{\sigma_1^2(C_f) }{\sigma_n^2(C_f) } \norm{ (C_f^\T C_f)(C_f^\T Q C_f)^{-1} - I }  \nonumber\\
& = \kappa^2  \Bigl\| \Bigl[I + (C_f^\T(Q-I)C_f)(C_f^\T C_f)^{-1} \Bigr]^{-1} - I \Bigr\|  \nonumber\\
& \leq \kappa^2 \frac{\norm{C_f^\T(Q-I)C_f}\norm{(C_f^\T C_f)^{-1}}}{1- \norm{C_f^\T(Q-I)C_f}\norm{(C_f^\T C_f)^{-1}}}  \label{eq:inverse_bound}\\
& \leq \kappa^4 \frac{\norm{Q-I}}{1 - 1/2} \leq \frac{\theta(1-3\theta)\rho^2}{100n^4}. \label{eq:near_orthogonal}
\end{align}
The inequality \eqref{eq:sqrt_bound} follows from the fact (\cite[Theorem 6.2]{higham2008functions}) that, for positive definite $A$ and $B$,
\[
\norm{A^{-1/2}-B^{-1/2}} \leq \frac{\norm{A^{-1} - B^{-1}}}{\sigma_n(A^{-1/2}+B^{-1/2})} \leq \frac{\norm{A^{-1} - B^{-1}}}{\sigma_n(B^{-1/2})},
\]
which in turn follows from the identity 
\[
(A^{-1/2}-B^{-1/2})(A^{-1/2}+B^{-1/2}) + (A^{-1/2}+B^{-1/2})(A^{-1/2}-B^{-1/2}) = 2(A^{-1} - B^{-1}).
\]
The inequality \eqref{eq:inverse_bound} is due to the fact that $\norm{(I+A)^{-1} - I}\leq \norm{(I+A)^{-1}} \norm{A} \leq \frac{\norm{A}}{1 - \norm{A}}$ for $\norm{A} < 1$. The last line \eqref{eq:near_orthogonal} follows from \eqref{eq:near_identity} and $\norm{C_f^\T(Q-I)C_f}\norm{(C_f^\T C_f)^{-1}} \leq \kappa^2 \norm{Q-I} < \frac{1}{2}$.

The rest of Lemma \ref{lem:not_flat} follows from the Lipschitz continuity of the objective function. Define $U\coloneqq C_fR$, and $U'\coloneqq C_f(C_f^\T C_f)^{-1/2}$, which is an orthogonal matrix. We have
\begin{align}
\norm{C_fR} = \norm{U} \leq \norm{U'} + \norm{U-U'} < 2. \label{eq:CfR_norm}
\end{align}
Recall that for the Bernoulli-Rademacher model, $\norm{x_i}\leq \sqrt{n}$ and $\norm{C_{x_i}}\leq n$. Then the difference of the gradients of $L(h) = \frac{1}{N}\sum_{i=1}^N \phi(C_{x_i}Uh)$ and $L'(h) =\frac{1}{N}\sum_{i=1}^N \phi(C_{x_i}U'h)$ can be bounded as follows:
\begin{align*}
& \norm{\nabla_L(h) - \nabla_{L'}(h)} \\
& \leq \max_{i\in [N]} \norm{ U^\T C_{x_i}^\T \nabla_\phi(C_{x_i}Uh) - U'^\T C_{x_i}^\T \nabla_\phi(C_{x_i}U'h) } \\
& \leq \max_{i\in [N]} \norm{ U^\T C_{x_i}^\T \nabla_\phi(C_{x_i}Uh) - U^\T C_{x_i}^\T \nabla_\phi(C_{x_i}U'h) } \\
& \quad + \max_{i\in [N]} \norm{ U^\T C_{x_i}^\T \nabla_\phi(C_{x_i}U'h) - U'^\T C_{x_i}^\T \nabla_\phi(C_{x_i}U'h) }  \\
& \leq \max_{i\in [N]} \norm{U} \norm{C_{x_i}} \cdot \sqrt{n} [3 (\norm{U}\norm{x_i})^2] \cdot \norm{U-U'} \norm{x_i} \\
& \quad + \max_{i\in [N]} \norm{U-U'} \norm{C_{x_i}} \cdot \sqrt{n} \norm{x_i}^3 \\
& \leq 25\sqrt{n} \cdot \max_{i\in [N]} \norm{C_{x_i}}\norm{x_i}^3  \norm{U-U'} \\
& \leq 25 n^3 \norm{U-U'},
\end{align*}
where the third inequality follows from the fact that $\nabla_\phi(\cdot)$ is Lipschitz continous and bounded on compact sets -- the Lipschitz constant of $(\cdot)^3$ on the interval $[-\norm{U}\norm{x_i})^2,\norm{U}\norm{x_i})^2]$ is $3 (\norm{U}\norm{x_i})^2$, and the upper bound of $|(\cdot)^3|$ on the interval $[-\norm{x_i},\norm{x_i}]$ is $\norm{x_i}^3$.
Similarly the difference of the Hessians can be bounded as follows:
\begin{align*}
& \norm{H_L(h) - H_{L'}(h)} \\
& \leq \max_{i\in [N]} \norm{ U^\T C_{x_i}^\T H_\phi(C_{x_i}Uh)C_{x_i}U - U'^\T C_{x_i}^\T H_\phi(C_{x_i}U'h)C_{x_i}U' } \\
& \leq \max_{i\in [N]} \norm{ U^\T C_{x_i}^\T H_\phi(C_{x_i}Uh)C_{x_i}U - U^\T C_{x_i}^\T H_\phi(C_{x_i}U'h)C_{x_i}U } \\
& \quad + \max_{i\in [N]} \norm{ U^\T C_{x_i}^\T H_\phi(C_{x_i}U'h)C_{x_i}U - U'^\T C_{x_i}^\T H_\phi(C_{x_i}U'h)C_{x_i}U }  \\
& \quad + \max_{i\in [N]} \norm{ U'^\T C_{x_i}^\T H_\phi(C_{x_i}U'h)C_{x_i}U - U'^\T C_{x_i}^\T H_\phi(C_{x_i}U'h)C_{x_i}U' }  \\
& \leq \max_{i\in [N]} \norm{U}^2 \norm{C_{x_i}}^2 \cdot [6 (\norm{U}\norm{x_i})] \cdot \norm{U-U'} \norm{x_i} \\
& \quad + \max_{i\in [N]} \norm{U-U'} \norm{U} \norm{C_{x_i}}^2 \cdot [3 \norm{x_i}^2] \\
& \quad + \max_{i\in [N]} \norm{U-U'} \norm{C_{x_i}}^2 \cdot [3 \norm{x_i}^2] \\
& \leq 57 \cdot \max_{i\in [N]} \norm{C_{x_i}}^2\norm{x_i}^2  \norm{U-U'} \\
& \leq 57 n^3 \norm{U-U'},
\end{align*}
where the third inequality uses the Lipschitz constant and upper bound of $3(\cdot)^2$.

It follows from \eqref{eq:near_orthogonal} and the above bounds that
\begin{align*}
& \sup_{h\in S^{n-1}} \norm{\hn_L(h) - \hn_{L'}(h)} \\
& \leq \sup_{h\in S^{n-1}} \norm{\nabla_L(h) - \nabla_{L'}(h)} \\
& \leq 25 n^3 \norm{U-U'} \leq \frac{\theta(1-3\theta)\rho^2}{4n}.
\end{align*}
\begin{align*}
& \sup_{h\in S^{n-1}} \norm{\hH_L(h) - \hH_{L'}(h)} \\
& \leq \sup_{h\in S^{n-1}} \norm{H_L(h) - H_{L'}(h)} + \sup_{h\in S^{n-1}} \norm{\nabla_L(h) - \nabla_{L'}(h)} \\
& \leq 100 n^3 \norm{U-U'} \leq \frac{\theta(1-3\theta)\rho^2}{n}.
\end{align*}

\end{IEEEproof}

%%%%%%%%%%%%%%%%%%%%%%%%%%%%%%%%%%%%%%%%%%%%%%%%%%%%%%%%%%%%%%%%%%%%%

\begin{IEEEproof}[Proof of Lemma \ref{lem:neighborhood_local_minima}]
The set $\calH''_1$ equals the union of $(\rho, 1)$-neighborhoods of $\{\pm e_j\}_{j=1}^n$, and the columns of $C_f^{-1}=C_g$ are the shifted versions of the inverse filter $g$. Therefore, by \eqref{eq:neighbor_diff}, every point $h^* \in (C_f^\T C_f)^{1/2} C_f^{-1} \calH''_1$ satisfies
\[
\norm{C_f(C_f^\T C_f)^{-1/2} h^* \pm e_j} \leq \sqrt{2\rho},
\]
for some $j\in[n]$. 
It follows that
\begin{align}
& \norm{C_fR h^* \pm e_j}   \nonumber\\
& \leq \norm{C_fR h^* - C_f(C_f^\T C_f)^{-1/2} h^*} + \norm{C_f(C_f^\T C_f)^{-1/2} h^* \pm e_j}  \nonumber\\
& \leq \frac{\theta(1-3\theta)\rho^2}{100n^4} + \sqrt{2\rho} \nonumber\\
& \leq 2\sqrt{\rho},  \nonumber
\end{align}
where the second to last line follows from \eqref{eq:near_orthogonal}, and the last line follows from $\theta(1-3\theta)\rho^2 / (100n^4) < (2-\sqrt{2})\sqrt{\rho}$.
\end{IEEEproof}

%%%%%%%%%%%%%%%%%%%%%%%%%%%%%%%%%%%%%%%%%%%%%%%%%%%%%%%%%%%%%%%%%%%%%

\section{Proofs for Section \ref{sec:optimization}}

\begin{IEEEproof}[Proof of Lemma \ref{lem:L_bound}]
Clearly, $L(h) \leq 0$ for all $h\in S^{n-1}$. 
For the Bernoulli-Rademacher model in (A1), we have $\norm{x_i} \leq \sqrt{n}$ and $\norm{C_{x_i}} \leq n$. Therefore,
\begin{align*}
L(h) & \geq -\frac{1}{4} \max_{i\in[N]} \norm{C_{x_i}C_f R h}_4^4\\
& \geq -\frac{n}{4} \max_{i\in[N]}(\norm{x_i}\norm{U h})^4 \\
& \geq - 4n^3,
\end{align*}
where the second inequality follows from the Cauchy-Schwarz inequality, and the third inequality follows from $\norm{U} = \norm{C_fR}\leq 2$ (see \eqref{eq:CfR_norm}). 

We can bound the the norm of $\nabla_L(h)$ and $H_L(h)$ similarly. To bound $\norm{\nabla_L(h)}$, we have
\begin{align*}
\norm{\nabla_L(h)}  
& \leq \max_{i\in[N]} \norm{U^\T C_{x_i}^\T \nabla_\phi(C_{x_i} U h)} \\ 
& \leq \max_{i\in[N]} \norm{U}\norm{C_{x_i}} \cdot \sqrt{n} (\norm{x_i}\norm{U})^3\\
& \leq 16 n^3.
\end{align*}
To bound $\norm{H_L(h)}$, we have
\begin{align*}
\norm{H_L(h)}  
& \leq \max_{i\in[N]} \norm{U^\T C_{x_i}^\T H_\phi(C_{x_i}U h) C_{x_i} U } \\ 
& \leq \max_{i\in[N]} \norm{U}^2\norm{C_{x_i}}^2 \times 3(\norm{x_i} \norm{U})^2 \\
& \leq 48 n^3.
\end{align*}

\end{IEEEproof}

%%%%%%%%%%%%%%%%%%%%%%%%%%%%%%%%%%%%%%%%%%%%%%%%%%%%%%%%%%%%%%%%%%%%%

\begin{IEEEproof}[Proof of Lemma \ref{lem:lipschitz}]
We first derive Lipschitz constants for the gradient and the Hessian. Recall that $\norm{U} = \norm{C_f R} \leq 2$ (see \eqref{eq:CfR_norm}). Therefore, the difference of two gradients can be bounded as follows:
\begin{align*}
& \norm{\nabla_L(h) - \nabla_{L}(h')} \\
& \leq \max_{i\in [N]} \norm{ U^\T C_{x_i}^\T \nabla_\phi(C_{x_i}Uh) - U^\T C_{x_i}^\T \nabla_\phi(C_{x_i}Uh') } \\
& \leq \max_{i\in [N]} \norm{U} \norm{C_{x_i}} \cdot \sqrt{n} [3 (\norm{U}\norm{x_i})^2] \cdot \norm{x_i} \norm{U} \norm{h - h'}\\
& \leq 48 n^3 \norm{h-h'},
\end{align*}
where the second inequality follows from the fact that $\nabla_\phi(\cdot)$ is Lipschitz continuous -- the Lipschitz constant of $(\cdot)^3$ on the interval $[-\norm{U}\norm{x_i})^2,\norm{U}\norm{x_i})^2]$ is $3 (\norm{U}\norm{x_i})^2$.
Similarly the difference of two Hessians can be bounded as follows:
\begin{align*}
& \norm{H_L(h) - H_{L}(h')} \\
& \leq \max_{i\in [N]} \norm{ U^\T C_{x_i}^\T H_\phi(C_{x_i}Uh)C_{x_i}U - U^\T C_{x_i}^\T H_\phi(C_{x_i}Uh')C_{x_i}U } \\
& \leq \max_{i\in [N]} \norm{U}^2 \norm{C_{x_i}}^2 \cdot [6 (\norm{U}\norm{x_i})] \cdot \norm{x_i} \norm{U} \norm{h-h'} \\
& \leq 96 n^3 \norm{h-h'},
\end{align*}
where the second inequality uses the Lipschitz constant and upper bound of $3(\cdot)^2$.

It follows from the definitions of Riemannian gradient and Riemannian Hessian that they are Lipschitz continuous:
\begin{align*}
& \norm{\hn_L(h) - \hn_{L}(h')} \\
& \leq \norm{P_{h^\perp}(\nabla_L(h) - \nabla_{L}(h'))} + \norm{(P_{h^\perp} - P_{h'^\perp})\nabla_{L}(h')} \\
& \leq 48 n^3 \norm{h-h'} + 16 n^3  \norm{P_{h^\perp} - P_{h'^\perp}} \\
& \leq 64 n^3 \norm{h-h'}.
\end{align*}
\begin{align*}
& \norm{\hH_L(h) - \hH_{L}(h')} \\
& \leq \norm{P_{h^\perp}(H_L(h) - H_{L}(h'))P_{h^\perp}} + \norm{P_{h^\perp}H_{L}(h') (P_{h^\perp} - P_{h'^\perp})} + \norm{(P_{h^\perp} - P_{h'^\perp})H_{L}(h') P_{h'^\perp}} \\
& \quad + \norm{(\hn_L(h) - \hn_{L}(h'))h^\T} + \norm{\hn_{L}(h')(h-h')^\T}\\
& \leq 96 n^3 \norm{h-h'} + 48n^3 \norm{h-h'} \times 2 + 64n^3\norm{h-h'} + 16n^3 \norm{h-h'}\\
& = 272 n^3 \norm{h-h'}.
\end{align*}

\end{IEEEproof}

%%%%%%%%%%%%%%%%%%%%%%%%%%%%%%%%%%%%%%%%%%%%%%%%%%%%%%%%%%%%%%%%%%%%%

\begin{IEEEproof}[Proof of Lemma \ref{lem:invertible_differential}]
Suppose the columns of matrix $V\in\bbR^{n\times(n-1)}$ (resp. $V'\in\bbR^{n\times(n-1)}$) form a orthonormal basis for the tangent subspace at $h$ (resp. $h'$). Then a matrix representation of $D\calA(h)$ in \eqref{eq:differential} as a mapping from the tangent space of $h$ to the tangent space at $h'$ with respect to the bases of these spaces is $V'^\T V (I_{n-1} - \gamma V^\T \hH_L(h) V)$. 

Note that $|\mathrm{det}(V'^\T V)|$ does not depend on the specific choice of orthogonal bases $V$ and $V'$ (multiplication by an orthonormal matrix does not change $|\mathrm{det}(\cdot)|$). Therefore, we consider the following construction of $V$ and $V'$. Suppose the columns of $V_\cap\in\bbR^{n\times (n-2)}$ form an orthonormal basis for the intersection of the tangent spaces at $h$ and at $h'$. Let $c\coloneqq \langle h, h'\rangle < 1$, then it is easy to verify that $V \coloneqq [V_\cap,\, \frac{1}{\sqrt{1-c^2}}(h' - c h)]$ and $V' \coloneqq [V_\cap,\, \frac{1}{\sqrt{1-c^2}}(ch' - h)]$ are valid orthonormal bases. It follows that
\[
|\mathrm{det}(V'^\T V)| = \begin{vmatrix}
I_{n-2} & \bfz_{(n-2)\times 1}\\
\bfz_{1\times(n-2)} & c 
\end{vmatrix} = |c|
\]
Since $\langle h, h' \rangle = \langle h,\, h-\gamma \hn_L(h) \rangle /\norm{h-\gamma \hn_L(h)} = \norm{h}^2 /\norm{h-\gamma \hn_L(h)} = 1 /\norm{h-\gamma \hn_L(h)}> 0$, we have $|\mathrm{det}(V'^\T V)| = |\langle h, h' \rangle| > 0$. 

By Lemma \ref{lem:L_bound}, for all $h \in S^{n-1}$, 
\begin{align*}
& \norm{\hH_L(h)} \leq \norm{H_L(h)} + \norm{\nabla_L(h)} \\
& \leq 48 n^3 + 16 n^3 = 64 n^3.
\end{align*}
Therefore $I_{n-1} - \gamma V^\T \hH_L(h) V$ is strictly positive definite for $\gamma < 1/(64n^3)$. 

It follows that
\[
|\mathrm{det}(D\calA(h))| = |\mathrm{det}(V'^\T V)| \cdot |\mathrm{det}\bigl(I_{n-1} - \gamma V^\T H_L(h) V \bigr)| > 0.
\]
\end{IEEEproof}

%%%%%%%%%%%%%%%%%%%%%%%%%%%%%%%%%%%%%%%%%%%%%%%%%%%%%%%%%%%%%%%%%%%%%

\begin{IEEEproof}[Proof of Lemma \ref{lem:gradient_descent_step}]
One can write the left-hand side as a line integral along the shortest path between $h^{(t)}$ and $h^{(t+1)}$ on the sphere,
\begin{align*}
& L(h^{(t+1)}) - L(h^{(t)}) \\
& = \int_{h^{(t)}}^{h^{(t+1)}} ~d L(h) \\
& = \int_{h^{(t)}}^{h^{(t+1)}} \hn_L(h) ~d h \\
& = \langle\hn_L(h^{(t)}), h^{(t+1)} - h^{(t)}\rangle + \int_{h^{(t)}}^{h^{(t+1)}} (\hn_L(h)-\hn_L(h^{(t)})) ~d h \\
& \leq \langle\hn_L(h^{(t)}), h^{(t+1)} - h^{(t)}\rangle + \norm{\hn_L(h^{(t+1)})-\hn_L(h^{(t)})} \int_{h^{(t)}}^{h^{(t+1)}} ~\norm{d h} \\
& \leq \langle\hn_L(h^{(t)}), h^{(t+1)} - h^{(t)}\rangle + 64n^3\norm{h^{(t+1)}- h^{(t)} } \int_{h^{(t)}}^{h^{(t+1)}} ~\norm{d h},
\end{align*}
where the first inequality uses Cauchy-Schwarz inequality, and the second inequality uses Lipschitz continuity of the Riemannian gradient.
Let $\Theta \coloneqq \angle (h^{(t)},h^{(t+1)}) = \tan^{-1} (\gamma\norm{  \hn_L(h^{(t)})})$, then 
\begin{align*}
& L(h^{(t+1)}) - L(h^{(t)}) \\
& \leq -\gamma \norm{\hn_L(h^{(t)})}^2 \cos\Theta + 64n^3 \gamma^2 \norm{\hn_L(h^{(t)})}^2 \frac{2\Theta \sin\frac{\Theta}{2} }{\tan^2 \Theta} \\
& \leq (-\gamma\cos\Theta + 64n^3 \gamma^2) \norm{\hn_L(h^{(t)})}^2\\
& \leq -\frac{0.0038}{n^3} \norm{\hn_L(h^{(t)})}^2,
\end{align*}
where the second inequality follows from $2\sin\frac{\Theta}{2} \leq \Theta \leq \tan\Theta$, and the last step follows from $\cos\Theta \geq {1}/{\sqrt{1+(\frac{1}{128n^3}\times 16n^3)^2}} \approx 0.99$.
\end{IEEEproof}

%%%%%%%%%%%%%%%%%%%%%%%%%%%%%%%%%%%%%%%%%%%%%%%%%%%%%%%%%%%%%%%%%%%%%

\begin{proof}[Proof of Lemma \ref{lem:improve_or_localize}]
\begin{align*}
& \norm{h^{(0)} - h^{(t)}} \\
& \leq \sum_{\tau=1}^{t}  \norm{h^{(\tau-1)} - h^{(\tau)}} \\
& \leq \sum_{\tau=1}^{\calT}  \norm{h^{(\tau-1)} - h^{(\tau)}} \\
& \leq \sqrt{\calT \sum_{\tau=1}^{\calT} \norm{h^{(\tau-1)} - h^{(\tau)}}^2} \\
& \leq \sqrt{\calT \gamma^2 \sum_{\tau=1}^{\calT} \norm{\hn_L(h^{(\tau-1)})}^2} \\
& \leq \sqrt{\calT (\frac{1}{128 n^3})^2 \frac{n^3}{0.0038} \sum_{\tau=1}^{\calT} L(h^{(\tau-1)})-L(h^{(\tau)})}\\
& \leq \sqrt{\frac{\calT}{62n^3}[L(h^{(0)}) - L(h^{(\calT)})]},
\end{align*}
where the first inequality follows from triangle inequality, the third inequality follows from the Cauchy-Schwarz inequality, and the fifth inequality follows from Lemma \ref{lem:gradient_descent_step}.

\end{proof}

%%%%%%%%%%%%%%%%%%%%%%%%%%%%%%%%%%%%%%%%%%%%%%%%%%%%%%%%%%%%%%%%%%%%%

\begin{IEEEproof}[Proof of Lemma \ref{lem:coupling_sequence}]
Assume that the contrary is true, i.e., $\max\{ L(h^{(0)}_1) - L(h^{(\calT)}_1),\,L(h^{(0)}_2) - L(h^{(\calT)}_2) \} \leq \calL$. By Lemma \ref{lem:improve_or_localize}, $\max\{ \norm{h^{(0)}_1 - h^{(t)}_1},\,\norm{h^{(0)}_2 - h^{(t)}_2} \} \leq \sqrt{\frac{\calT\calL}{62n^3}}$. Therefore for all $0\leq t \leq \calT$:
\[
\max\{ \norm{h_0 - h^{(t)}_1},\,\norm{h_0 - h^{(t)}_2} \} \leq \sqrt{\frac{\calT\calL}{62n^3}} + \frac{\calR}{2} \leq \calR = \frac{\theta(1-3\theta)}{\xi n^3\log n},
\]
and
\begin{align}
\norm{h^{(\calT)}_1 -h^{(\calT)}_2} \leq 2\calR.
\label{eq:localize_coupled_sequence}
\end{align}

On the other hand, manifold gradient descent satisfies
\begin{align*}
& h^{(t)}_1 -h^{(t)}_2 \\
& = \frac{(h^{(t-1)}_1 - \gamma\hn_L(h^{(t-1)}_1)) - (h^{(t-1)}_2 - \gamma\hn_L(h^{(t-1)}_2))}{\norm{h^{(t-1)}_1 - \gamma \hn_L(h^{(t-1)}_1)}} \\
& \quad + (h^{(t-1)}_2 - \gamma\hn_L(h^{(t-1)}_2)) \Bigl( \frac{1}{\norm{h^{(t-1)}_1 - \gamma \hn_L(h^{(t-1)}_1)}} - \frac{1}{\norm{h^{(t-1)}_2 - \gamma \hn_L(h^{(t-1)}_2)}} \Bigr).
\end{align*}
The numerator of the first term is 
\begin{align*}
& (h^{(t-1)}_1 - \gamma\hn_L(h^{(t-1)}_1)) - (h^{(t-1)}_2 - \gamma\hn_L(h^{(t-1)}_2)) \\
& = (h^{(t-1)}_1 - h^{(t-1)}_2) + \gamma (\hn_L(h^{(t-1)}_2) - \hn_L(h^{(t-1)}_1)) \\
& = (h^{(t-1)}_1 - h^{(t-1)}_2) + \gamma \int_{h^{(t-1)}_1}^{h^{(t-1)}_2} \hH_L(h) dh  \\
& = (I-\gamma \hH_L(h_0))(h^{(t-1)}_1 - h^{(t-1)}_2) + \gamma \int_{h^{(t-1)}_1}^{h^{(t-1)}_2} [\hH_L(h) - \hH_L(h_0)] dh.
\end{align*}
%where the line integral is along a curve for which the tangent vector $d h$ satisfies that $P_{h_0^\perp} dh$ always has the same direction as $P_{h_0^\perp} (h^{(t-1)}_2-h^{(t-1)}_1)$. 
Define 
\[
\calP^{(t)} \coloneqq \frac{I-\gamma \hH_L(h_0)}{\norm{h^{(t)}_1 - \gamma \hn_L(h^{(t)}_1)}},
\] 
and 
\begin{align*}
& q^{(t)}  \coloneqq \frac{\gamma P_{h_0^\perp}\int_{h^{(t)}_1}^{h^{(t)}_2} [\hH_L(h) - \hH_L(h_0)] dh}{\norm{h^{(t)}_1 - \gamma \hn_L(h^{(t)}_1)}} \\
& \quad + P_{h_0^\perp}(h^{(t)}_2 - \gamma\hn_L(h^{(t)}_2)) \Bigl( \frac{1}{\norm{h^{(t)}_1 - \gamma \hn_L(h^{(t)}_1)}} - \frac{1}{\norm{h^{(t)}_2 - \gamma \hn_L(h^{(t)}_2)}} \Bigr).
\end{align*}
Then
\begin{align*}
& P_{h_0^\perp} (h^{(t)}_1 -h^{(t)}_2) \\
& = \calP^{(t-1)} P_{h_0^\perp} (h^{(t-1)}_1 -h^{(t-1)}_2) + q^{(t-1)} \\
& = \Bigl[\prod_{\tau = 0}^{t-1}\calP^{(\tau)}\Bigr] P_{h_0^\perp} (h^{(0)}_1 -h^{(0)}_2) 
+ \sum_{\tau = 0}^{t-1} \Bigl[\prod_{\tau' = \tau+1}^{t-1} \calP^{(\tau')}\Bigr] q^{(\tau)}.
\end{align*}

We also have the following claims, which we prove later in this section.
\begin{claim} \label{cla:eigen_of_Hessian}
$\Bigl\|\Bigl[\prod_{\tau = 0}^{t-1}\calP^{(\tau)}\Bigr] P_{h_0^\perp} (h^{(0)}_1 -h^{(0)}_2)\Bigr\| = \Bigl[\prod_{\tau = 0}^{t-1} \norm{\calP^{(\tau)}}\Bigr] \norm{P_{h_0^\perp} (h^{(0)}_1 -h^{(0)}_2)} = d [1 + \theta(1-3\theta)\gamma]^t$.
\end{claim}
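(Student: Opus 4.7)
The claim has two parts, joined by the two equalities: a ``factorization'' identity relating the norm of the product to the product of norms, and a quantitative geometric growth rate $1+\theta(1-3\theta)\gamma$. The plan is to handle these in turn using the shared spectral structure of the matrices $\calP^{(\tau)}$.

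\textbf{First equality: $z_0$ is the top eigenvector of every $\calP^{(\tau)}$.} Since $\hH_L(h_0)$ is symmetric and $z_0$ minimizes $z^\T \hH_L(h_0) z$ over unit $z \perp h_0$, it is an eigenvector with eigenvalue $\lambda \coloneqq z_0^\T \hH_L(h_0) z_0 < 0$ (strictly negative by \eqref{eq:negative_curvature}, since $h_0 \in \calH_2$). Then $z_0$ is also an eigenvector of $I - \gamma \hH_L(h_0)$ with eigenvalue $1 - \gamma\lambda > 1$. From $\|\hH_L(h_0)\| \leq 48n^3 + 16n^3 = 64n^3$ (Lemma \ref{lem:L_bound} combined with the definition of $\hH_L$) and $\gamma = 1/(128n^3)$, every eigenvalue of $I - \gamma \hH_L(h_0)$ lies in $[1/2,\,3/2]$, so all are positive and $1 - \gamma \lambda$ is the spectral norm $\|I - \gamma \hH_L(h_0)\|$. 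Because each $\calP^{(\tau)}$ differs from $I - \gamma \hH_L(h_0)$ only by the positive scalar $1/\|h_1^{(\tau)} - \gamma \hn_L(h_1^{(\tau)})\|$, the vector $z_0$ is simultaneously the top eigenvector of every $\calP^{(\tau)}$, and $\calP^{(\tau)} z_0 = \|\calP^{(\tau)}\|\,z_0$. Applying the product to $P_{h_0^\perp}(h_1^{(0)} - h_2^{(0)}) = d z_0$ therefore preserves the direction $z_0$ and scales by $\prod_\tau \|\calP^{(\tau)}\|$; taking norms and using $\|P_{h_0^\perp}(h_1^{(0)} - h_2^{(0)})\| = d$ gives the first equality.

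\textbf{Second equality, read as $\geq$: lower bound $1 + \theta(1-3\theta)\gamma$ on each factor.} One has $\|\calP^{(\tau)}\| = (1 - \gamma \lambda)/\|h_1^{(\tau)} - \gamma \hn_L(h_1^{(\tau)})\|$. For the numerator, \eqref{eq:negative_curvature} with $h_0\in\calH_2$, $r\leq n$, and $\rho < 10^{-3}$ yields $-\lambda \geq \theta(1-3\theta)(2 - 24\sqrt{\rho}) > 1.2\,\theta(1-3\theta)$, so $1 - \gamma\lambda \geq 1 + 1.2\,\theta(1-3\theta)\gamma$. For the denominator, orthogonality of $h_1^{(\tau)}$ and $\hn_L(h_1^{(\tau)})$ on the sphere gives $\|h_1^{(\tau)} - \gamma \hn_L(h_1^{(\tau)})\| = \sqrt{1 + \gamma^2\|\hn_L(h_1^{(\tau)})\|^2}$. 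Under the working ``assume the contrary'' hypothesis of the parent proof, Lemma \ref{lem:improve_or_localize} confines $h_1^{(\tau)}$ to the $\calR$-ball around $h_0$; combining with $\|\hn_L(h_0)\| < c(n,\theta,\rho)$ (hypothesis of Lemma \ref{lem:coupling_sequence}) and the Lipschitz estimate in Lemma \ref{lem:lipschitz}, one gets $\|\hn_L(h_1^{(\tau)})\| \lesssim \theta(1-3\theta)/(\xi \log n)$, making the denominator essentially unity. Dividing, $\|\calP^{(\tau)}\| \geq 1 + \theta(1-3\theta)\gamma$ with slack for $\xi > 640$, and multiplying $t$ such factors closes the chain.

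\textbf{Main obstacle.} The delicate step is the constant bookkeeping in the second part: the numerator's excess of $\approx 2\,\theta(1-3\theta)\gamma$ must survive both the $24\sqrt{\rho}$ correction from \eqref{eq:negative_curvature} and the mild inflation of the denominator. The parameters $\rho < 10^{-3}$, $\calR = \theta(1-3\theta)/(\xi n^3 \log n)$, $\calL = 256\theta^3(1-3\theta)^3/(\xi^3 n^6\log^3 n)$, and $\xi > 640$ built into Lemma \ref{lem:coupling_sequence} are tuned precisely so that roughly half of the numerator's gain survives. Interpreting the displayed second ``='' as the inequality ``$\geq$'' is essential, as an exact equality at this precision cannot hold uniformly in $r$; only the lower bound is used to contradict \eqref{eq:localize_coupled_sequence} in the rest of the proof of Lemma \ref{lem:coupling_sequence}.
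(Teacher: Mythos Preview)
Your proposal is correct and follows essentially the same route as the paper: both arguments observe that $z_0$ is the common dominant eigenvector of every $\calP^{(\tau)}$ (giving the first equality), then lower-bound each $\|\calP^{(\tau)}\|$ by combining the negative-curvature estimate \eqref{eq:negative_curvature} on the numerator with a localization-plus-Lipschitz bound on the denominator $\|h_1^{(\tau)}-\gamma\hn_L(h_1^{(\tau)})\|\leq 1+\calR/2$. Your explicit remark that the second ``$=$'' must be read as ``$\geq$'' is exactly how the paper uses the claim downstream.
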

\begin{claim} \label{cla:less_than_half}
$\Bigl\|\sum_{\tau = 0}^{t-1} \Bigl[\prod_{\tau' = \tau+1}^{t-1} \calP^{(\tau')}\Bigr] q^{(\tau)}\Bigr\| 
\leq \frac{1}{2}\Bigl\|\Bigl[\prod_{\tau = 0}^{t-1}\calP^{(\tau)}\Bigr] P_{h_0^\perp} (h^{(0)}_1 -h^{(0)}_2)\Bigr\|$ for $t= 0,1,\dots, \calT$.
\end{claim}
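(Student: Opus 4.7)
The plan is to prove Claim \ref{cla:less_than_half} by induction on $t$, exploiting the fact that the coupling iterates remain localized inside the ball of radius $\calR$ around $h_0$ (which is established earlier in the proof of Lemma \ref{lem:coupling_sequence} under the contradiction hypothesis). The leading term $\prod_{\tau=0}^{t-1}\calP^{(\tau)} P_{h_0^\perp}(h_1^{(0)}-h_2^{(0)})$ grows geometrically at rate $1+\theta(1-3\theta)\gamma$ (Claim \ref{cla:eigen_of_Hessian}), while each $q^{(\tau)}$ behaves like a second-order correction whose size is governed by the Hessian Lipschitz constant times the localization radius $\calR$. If the per-step contribution of the $q^{(\tau)}$'s is a factor of $O(1/\xi)$ smaller than the geometric growth, the induction closes with constant $1/2$.

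The base case $t=0$ is trivial: the left-hand side is an empty sum. For the inductive step, I would first combine the induction hypothesis with Claim \ref{cla:eigen_of_Hessian} and the triangle inequality to obtain, for every $\tau\leq t-1$,
\[
\norm{P_{h_0^\perp}(h_1^{(\tau)}-h_2^{(\tau)})}\leq \tfrac{3}{2}\,d\,[1+\theta(1-3\theta)\gamma]^\tau.
\]
Coupled with the bound $\norm{h_i^{(\tau)}-h_0}\leq\calR$ already shown in the outer proof, this also controls the full gap $\norm{h_1^{(\tau)}-h_2^{(\tau)}}$ up to an additive $O(\calR^2)$ term along $h_0$, which is negligible for $\calR$ as chosen.

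Next I would bound $\norm{q^{(\tau)}}$ by splitting it into its two pieces. The integrand in the first piece satisfies $\norm{\hH_L(h)-\hH_L(h_0)}\leq 272 n^3\calR$ by Lemma \ref{lem:lipschitz}, so that piece is at most $272\,n^3\gamma\calR\cdot\norm{h_1^{(\tau)}-h_2^{(\tau)}}$ divided by the normalization $\norm{h_1^{(\tau)}-\gamma\hn_L(h_1^{(\tau)})}$, which is close to $1$ since the Riemannian gradient satisfies $\norm{\hn_L(h_i^{(\tau)})}\leq c(n,\theta,\rho)+O(n^3\calR)$ for points within $\calR$ of $h_0$ and $\gamma = 1/(128n^3)$. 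The second piece of $q^{(\tau)}$ -- arising from the different normalizations at $h_1^{(\tau)}$ and $h_2^{(\tau)}$ -- can be bounded similarly by expanding the reciprocal and using the same Lipschitz and gradient-size estimates. Putting these together gives $\norm{q^{(\tau)}}\leq C n^3\gamma\calR\cdot d\,[1+\theta(1-3\theta)\gamma]^\tau$ for an absolute constant $C$.

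Finally, using $\norm{\calP^{(\tau')}}=1+\theta(1-3\theta)\gamma$ from Claim \ref{cla:eigen_of_Hessian}, the left-hand side is at most a telescoping sum $\sum_{\tau=0}^{t-1}[1+\theta(1-3\theta)\gamma]^{t-1-\tau}\norm{q^{(\tau)}}\leq \calT\cdot C n^3\gamma\calR \cdot d\,[1+\theta(1-3\theta)\gamma]^{t-1}$. Substituting $\calR=\theta(1-3\theta)/(\xi n^3\log n)$ and $\calT=\xi n^3\log n/[25\,\theta(1-3\theta)]$ collapses this prefactor to $O(1/\xi)$ times the right-hand side, which is below $1/2$ by the hypothesis $\xi>640$. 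The main obstacle I foresee is bookkeeping the normalization $\norm{h_i^{(\tau)}-\gamma\hn_L(h_i^{(\tau)})}$: near the saddle the Riemannian gradient is only known to be at most $c(n,\theta,\rho)$, and one must verify that the multiplicative error thereby introduced into both $\calP^{(\tau)}$ and $q^{(\tau)}$ at each step does not accumulate enough over $\calT$ steps to destroy the clean growth rate $1+\theta(1-3\theta)\gamma$ used to close the telescoping argument.
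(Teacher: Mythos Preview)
Your approach is essentially the same as the paper's: induction on $t$, bound $\norm{P_{h_0^\perp}(h_1^{(\tau)}-h_2^{(\tau)})}$ via the induction hypothesis, bound $\norm{q^{(\tau)}}$ using the Hessian Lipschitz constant and the normalization difference, then sum. However, your final arithmetic contains a slip that matters. You claim that substituting $\calR$ and $\calT$ makes the prefactor $O(1/\xi)$ and then invoke $\xi>640$. In fact $\calR\calT=\frac{\theta(1-3\theta)}{\xi n^3\log n}\cdot\frac{\xi n^3\log n}{25\,\theta(1-3\theta)}=\frac{1}{25}$ is \emph{independent} of $\xi$; the $\xi$'s cancel. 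With $n^3\gamma=1/128$, the paper's bound reads $\norm{q^{(\tau)}}\leq 6\calR\,\norm{P_{h_0^\perp}(h_1^{(\tau)}-h_2^{(\tau)})}$ and the final prefactor is $12\calR\calT=12/25$, which barely fits under $1/2$. So the inequality closes because of the specific constants ($272$, $64$, $128$, $25$), not because $\xi$ is large; you should track those constants rather than hide them in $C$.

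A second minor point: you write $\norm{\calP^{(\tau')}}=1+\theta(1-3\theta)\gamma$, but Claim~\ref{cla:eigen_of_Hessian} (as proved) really supplies a lower bound on that norm. The paper sidesteps any need for an upper bound by keeping the products $\prod_{\tau'}\norm{\calP^{(\tau')}}$ intact: since $\norm{\calP^{(\tau)}}\geq 1$, one has $\bigl[\prod_{\tau'=\tau+1}^{t-1}\norm{\calP^{(\tau')}}\bigr]\prod_{\tau'=0}^{\tau-1}\norm{\calP^{(\tau')}}\leq \prod_{\tau'=0}^{t-1}\norm{\calP^{(\tau')}}$, and the right-hand side of the claim is exactly $d\prod_{\tau'=0}^{t-1}\norm{\calP^{(\tau')}}$. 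Doing it this way avoids the bookkeeping worry you raised at the end about the normalization errors accumulating.
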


Given the above claims, we have
\begin{align*}
& \norm{ P_{h_0^\perp} (h^{(\calT)}_1 -h^{(\calT)}_2) } \\
& \geq \Bigl\|\Bigl[\prod_{\tau = 0}^{\calT-1}\calP^{(\tau)}\Bigr] P_{h_0^\perp} (h^{(0)}_1 -h^{(0)}_2)\Bigr\| - \Bigl\|\sum_{\tau = 0}^{\calT-1} \Bigl[\prod_{\tau' = \tau+1}^{\calT-1} \calP^{(\tau')}\Bigr] q^{(\tau)}\Bigr\| \\
& \geq \frac{1}{2}\Bigl\|\Bigl[\prod_{\tau = 0}^{\calT-1}\calP^{(\tau)}\Bigr] P_{h_0^\perp} (h^{(0)}_1 -h^{(0)}_2)\Bigr\| \\
& \geq \frac{d}{2} \bigl[ 1 + \theta(1-3\theta)\gamma \bigr]^\calT \\
& \geq \frac{d}{2} (2.5)^{\calT\theta(1-3\theta)\gamma}  > \frac{d}{2} n^{\xi / 10000} = 2\calR,
\end{align*}
%where the last line uses the fact that $(1+x)^{1/x} > 2.5$ for $0 <x <0.1$. 
This clearly contradicts the localization result \eqref{eq:localize_coupled_sequence}, which finishes the proof.

\end{IEEEproof}

% ========================================================

\begin{IEEEproof}[Proof of Claim \ref{cla:eigen_of_Hessian}]
First, we bound the denominator of $\calP^{(t)}$:
\begin{align*}
& \norm{h^{(t)}_1 - \gamma \hn_L(h^{(t)}_1)} \\
& \leq \sqrt{1 + \gamma^2\norm{\hn_L(h^{(t)}_1)}^2} \\
& \leq 1 + \frac{\gamma}{2} \norm{\hn_L(h^{(t)}_1)} \\
& \leq 1 + \frac{\gamma}{2} (\norm{\hn_L(h_0)} + \norm{\hn_L(h_0) - \hn_L(h^{(t)}_1)}) \\
& \leq 1 + \frac{\gamma}{2} (c(n,\theta,\rho) + 64n^3\calR)\\
& \leq 1 + \frac{\calR}{2},
\end{align*}
where the second inequality follows from the fact that $\sqrt{1+x^2} < 1+ x/2$ for all $0 < x < 1$, and the last inequality is due to $c(n,\theta,\rho) = \frac{\theta(1-3\theta)\rho^2}{2n} < \frac{64\theta(1-3\theta)}{\xi n} < 64n^3\calR$.

Since $P_{h_0^\perp} (h^{(0)}_1 -h^{(0)}_2) = d z_0$ is the eigenvector corresponding to the smallest eigenvalue of $\hH_L(h_0)$, it is also the dominant eigenvector of $\calP^{(t)}$. By \eqref{eq:negative_curvature}, $z_0^\T \hH_{L}(h) z_0 \leq -1.2\theta(1-3\theta)$. Hence the dominant eigenvalue is
\begin{align*}
& \norm{\calP^{(t)}} \geq 
\frac{1+1.2\theta(1-3\theta)\gamma}{\norm{h^{(t)}_1 - \gamma \hn_L(h^{(t)}_1)}} \\
& \geq \frac{1+1.2\theta(1-3\theta)\gamma}{1+\calR/2} \\
& \geq 1 + 1.2\theta(1-3\theta)\gamma - \calR \\
& \geq 1 + \theta(1-3\theta)\gamma,
\end{align*}
where the third inequality follows from the fact that $(1+a)/(1+b/2) > 1+a-b$ if $b>0$ and $0 < a-b < 1$, and the last inequality is due to $\calR < 0.2\theta(1-3\theta)\gamma = \frac{\theta(1-3\theta)}{640n^3}$. 

The proof is completed by applying the above eigenvalue argument $t$ times.
\end{IEEEproof}

% ========================================================

\begin{IEEEproof}[Proof of Claim \ref{cla:less_than_half}]
We prove the claim by induction. This claim is clearly true for $t=0$, for which the left-hand side is $0$. Suppose the claim is true for $t = 0, 1,\dots, \calT-1$, and we show that it is true for $t = \calT$.

By the induction hypothesis, for $t = 0, 1,\dots, \calT-1$, we have
\begin{align*}
& \Bigl\| P_{h_0^\perp} (h^{(t)}_1 -h^{(t)}_2) \Bigr\| \\
& \leq 2 \Bigl\|\Bigl[\prod_{\tau = 0}^{t-1}\calP^{(\tau)}\Bigr] P_{h_0^\perp} (h^{(0)}_1 -h^{(0)}_2)\Bigr\| \\
& = 2d \prod_{\tau = 0}^{t-1} \norm{\calP^{(\tau)}},
\end{align*}
and
\begin{align*}
& \norm{q^{(t)}} \leq \gamma \sup \norm{\hH_L(h) - \hH_L(h_0)} \int_{h^{(t)}_1}^{h^{(t)}_2}  \norm{dh} + \sin\angle(h_0, h^{(t+1)}_2) \times \gamma | \norm{\hn_L(h^{(t)}_1)}-\norm{\hn_L(h^{(t)}_2)} | \\
& \leq \gamma \times 272n^3 \calR \times 2 \norm{P_{h_0^\perp} (h^{(t)}_1 -h^{(t)}_2)} + \calR \times \gamma \times 64n^3 \times 2 \norm{P_{h_0^\perp} (h^{(t)}_1 -h^{(t)}_2)} \\
& \leq 6 R \norm{P_{h_0^\perp} (h^{(t)}_1 -h^{(t)}_2)} \\
& \leq 12 d R \prod_{\tau = 0}^{t-1} \norm{\calP^{(\tau)}}.
\end{align*}
It follows that
\begin{align*}
& \Bigl\|\sum_{\tau = 0}^{\calT-1} \Bigl[\prod_{\tau' = \tau+1}^{\calT-1} \calP^{(\tau')}\Bigr] q^{(\tau)}\Bigr\| \\
& \leq \sum_{\tau = 0}^{\calT-1} \Bigl[\prod_{\tau' = \tau+1}^{\calT-1} \norm{\calP^{(\tau')}}\Bigr] \norm{q^{(\tau)}} \\
& \leq 12dR\sum_{\tau = 0}^{\calT-1} \Bigl[\prod_{\tau' = \tau+1}^{\calT-1} \norm{\calP^{(\tau')}}\Bigr] \prod_{\tau' = 0}^{\tau-1} \norm{\calP^{(\tau')}}\\
& \leq 12R \calT \Bigl\|\Bigl[\prod_{\tau = 0}^{\calT-1}\calP^{(\tau)}\Bigr] P_{h_0^\perp} (h^{(0)}_1 -h^{(0)}_2)\Bigr\|.
\end{align*}
The claim follows from our choice of parameters: $12\calR\calT < \frac{1}{2}$.

\end{IEEEproof}

% ========================================================

\begin{IEEEproof}[Proof of Lemma \ref{lem:perturbation_escapes_saddle}]
By the construction of the perturbation, $\norm{h^{(0)} - h_0}\leq \sqrt{2-2\sqrt{1-\calR^4}} \leq \frac{\calR}{2}$ since $\calR < 1/4$. By Lemma \ref{lem:coupling_sequence}, the region where $\calT$ iterates of manifold gradient descent get stuck is very narrow. In fact, the probability that 
$L(h^{(0)}) - L(h^{(\calT)}) \leq \calL$ is bounded by
\[
\frac{d \cdot \mathrm{volume}(\calS^{n-2})}{\calD \cdot \mathrm{volume}(\calS^{n-2})} \leq \frac{d\cdot (n-1)}{\calD}  \leq \frac{4\xi n^{(4-\xi/10000)}\log n}{\theta(1-3\theta)}.
\]

Next, we bound $|L(h^{(0)})-L(h_0)|$ using the Lipschitz continuity of the Riemannian gradient:
\begin{align*}
& |L(h^{(0)})-L(h_0)| \\
& \leq \bigl|\langle \hn_L(h_0), h^{(0)} - h_0 \rangle\bigr| + \int_{h_0}^{h^{(0)}} (\hn_L(h) - \hn_L(h_0)) dh \\
& \leq c(n,\theta, \rho) \calD + 64n^3 (2\calD)\times (2\calD) \\
& < \calL / 2.
\end{align*}
Therefore, with probability at least $1 - \frac{4\xi n^{(4-\xi/10000)}\log n}{\theta(1-3\theta)}$,
\[
L(h^{(\calT)}) - L(h_0) = \bigl[ L(h^{(\calT)}) - L(h^{(0)})\Bigr] + \Bigl[ L(h^{(0)}) - L(h_0) \Bigr]
< -\calL + \calL/2 = -\calL/2.
\]
\end{IEEEproof}

%%%%%%%%%%%%%%%%%%%%%%%%%%%%%%%%%%%%%%%%%%%%%%%%%%%%%%%%%%%%%%%%%%%%%

\begin{IEEEproof}[Proof of Corollary \ref{cor:f_and_x}]
By Lemma \ref{lem:neighborhood_local_minima}, for $\hat{h} \in \calH_1$, we have $\norm{C_fR \hat{h} \pm e_j}   \leq 2 \sqrt{\rho}$ for some $j\in [n]$. 
Then by the Cauchy-Schwarz inequality
\begin{align}
\norm{\calF(f) \odot \calF(R\hat{h}) - \calF(\mp e_j)}_\infty \leq \sqrt{n} \norm{C_fR \hat{h} \pm e_j} \leq 2 \sqrt{\rho n}.
\label{eq:diagonal_error}
\end{align}
Equivalently, the circular convolution operators satisfy
\[
\norm{C_f C_{R\hat{h}} - C_{\mp e_j}} \leq 2 \sqrt{\rho n}.
\]
It follows that
\begin{align*}
& \norm{\hat{x}_i \pm \calS_j(x_i)} = \norm{C_{y_i}R\hat{h} \pm \calS_j(x_i)}\\
& = \norm{C_f C_{R\hat{h}} x_i - C_{\mp e_j} x_i} \leq \norm{C_f C_{R\hat{h}} - C_{\mp e_j}} \cdot \norm{x_i} \\
& \leq 2 \sqrt{\rho n} \cdot \norm{x_i}.
\end{align*}

It follows from \eqref{eq:diagonal_error} that
\[
|\calF(f)_{(k)} \times \calF(R\hat{h})_{(k)} - e^{\frac{\pm \sqrt{-1} (j-1)(k-1)}{n}} | \leq 2\sqrt{\rho n}, 
\]
for all $k\in [n]$. Therefore,
\[
|\calF(R\hat{h})_{(k)}| \geq \frac{1 - 2 \sqrt{\rho n}}{|\calF(f)_{(k)}|}.
\]
Since $\min_{k\in[n]} |\calF(R\hat{h})_{(k)}| = \sigma_n(C_{R\hat{h}})$, and $\max_{k\in[n]} |\calF(f)_{(k)}| = \norm{\calF(f)}_\infty \leq \sqrt{n}\norm{f}$, we have 
\[
\sigma_n(C_{R\hat{h}}) \geq \frac{1-2 \sqrt{\rho n}}{\sqrt{n} \norm{f}}.
\]
Combining the above bound with the following
\begin{align*}
\norm{C_{R\hat{h}}(f\pm \calS_j(\hat{f}))} = \norm{C_f R\hat{h} \pm e_j} \leq 2 \sqrt{\rho},
\end{align*}
we have
\begin{align*}
& \norm{\hat{f} \pm \calS_{-j}(f)} = \norm{f \pm \calS_j(\hat{f})} \\
& \leq \frac{\norm{C_{R\hat{h}}(f\pm \calS_j(\hat{f}))}}{\sigma_n(C_{R\hat{h}})} \leq 2 \sqrt{\rho} \times \frac{\sqrt{n}\norm{f}}{1-2 \sqrt{\rho n}} \\
& = \frac{2 \sqrt{\rho n}}{1 - 2 \sqrt{\rho n}} \cdot \norm{f},
\end{align*}

\end{IEEEproof}

%%%%%%%%%%%%%%%%%%%%%%%%%%%%%%%%%%%%%%%%%%%%%%%%%%%%%%%%%%%%%%%%%%%%%

%\bibliographystyle{myIEEEtran}
%\bibliography{refs}
% Generated by IEEEtran.bst, version: 1.13 (2008/09/30)

\end{document}